\pgfplotsset{compat=newest}
\numberwithin{equation}{section}
\newcommand*\samethanks[1][\value{footnote}]{\footnotemark[#1]}
\newcommand{\RR}{\mathbb{R}}
\newcommand{\mathrmb}[1]{\boldsymbol{\mathrm{#1}}}
\newcommand{\xib}{\boldsymbol{\xi}}
\newcommand{\Rsun}{R_{\mathrm{S}}}
\newcommand{\ra}{r_{\mathrm{a}}}
\newcommand{\rs}{r_{\mathrm{s}}}
\newcommand{\bx}{\mathbf{x}}
\newcommand{\Ehe}{E_{\text{he}}}
\newcommand{\Atmo}{\texttt{AtmoI}}
\newcommand{\rb}{r_b}
\newcommand{\rmax} {r_{\max}}
\newcommand{\alatmo}{\alpha_{\mathrm{a}}}
\newcommand{\adatmo}{\gamma_{\mathrm{a}}}
\newcommand{\katmo}{k_{\mathrm{a}}}
\newcommand{\Lconjug}{\mathcal{L}_{\mathrm{cjg}}} 
\newcommand{\Lorigin}{\mathcal{L}_{\mathrm{org}}} 
\newcommand{\Vscalar}{V_{\ell}^{\mathrm{scalar}}}
\newcommand{\ksf}{\mathsf{k}}
\newcommand{\Fl} {\mathsf{F}_{\ell}}
\newcommand{\Fz} {\mathsf{F}_{0}}
\newcommand{\alphap}   {\alpha_{p_0}}
\newcommand{\ii}       {\mathrm{i}}
\newcommand{\alphagamp}{\alpha_{\gamma p_0}}
\newcommand{\wronskian}{\mathcal{W}}
\newcommand{\Zrbc}     {\mathcal{Z}}
\newcommand{\ZrbcNL}      {\Zrbc_{\textsc{nl},\ell}}
\newcommand{\ZrbcNLa}     {\Zrbc_{\mathrm{a}\textsc{nl},\ell}}
\newcommand{\ZrbcNLaG}    {\Zrbc_{\mathrm{a}\textsc{nlg},\ell}}
\newcommand{\ZrbcHFzero}  {\Zrbc_{\textsc{HF0}}}
\newcommand{\ZrbcHFa}     {\Zrbc_{\textsc{HF1}}}
\newcommand{\ZrbcHFb}     {\Zrbc_{\textsc{HF2},\ell}}
\newcommand{\ZrbcHFGa}    {\Zrbc_{\textsc{HFG1}}}
\newcommand{\ZrbcHFGb}    {\Zrbc_{\textsc{HFG2},\ell}}
\newcommand{\ZrbcSAIzero} {\Zrbc_{\textsc{SAI0}}}
\newcommand{\ZrbcSAIa}    {\Zrbc_{\textsc{SAI1},\ell}}
\newcommand{\ZrbcSAIGzero}{\Zrbc_{\textsc{SAIG0}}}
\newcommand{\ZrbcSAIGa}   {\Zrbc_{\textsc{SAIG1},\ell}}
\newcommand{\ZrbcSAIGb}   {\Zrbc_{\textsc{SAIG2},\ell}}
\newlength{\plotwidth}  \newlength{\modelwidth}
\newlength{\plotheight} \newlength{\modelheight}
\newcommand{\datafile} {}
\newcommand{\datafileB}{}
\newcommand{\datafileC}{}
\newcommand{\modelfile}{}
\newcommand{\cbtitle}  {}
\newcommand{\modelfileA}{}
\newcommand{\cbtitleA}  {}
\newcommand{\modelfileB}{}
\newcommand{\cbtitleB}  {}
\newcommand{\modelfileC}{}
\newcommand{\cbtitleC}  {}
\newcommand{\myylabel}{}
\newcommand{\dataA}{} \newcommand{\legendA}{}
\newcommand{\dataB}{} \newcommand{\legendB}{}
\newcommand{\dataC}{} \newcommand{\legendC}{}
\newcommand{\dataD}{} \newcommand{\legendD}{}
\newcommand{\dataE}{} \newcommand{\legendE}{}
\newcommand{\dataF}{} \newcommand{\legendF}{}
\newcommand{\dataG}{} \newcommand{\legendG}{}
\newcommand{\dataH}{} \newcommand{\legendH}{}
\newcommand{\myfreq}{}
\newcommand{\myfreqA}{}
\newcommand{\myfreqB}{}
\newcommand{\myfreqC}{}
\newcommand{\xtickloc}{}
\newcommand{\GPP}{G^{\mathrm{PP}}_\ell}
\newcommand{\GBP}{G^{\mathrm{BP}}_\ell}
\newcommand{\GPB}{G^{\mathrm{PB}}_\ell}
\newcommand{\GBB}{G^{\mathrm{BBreg}}_\ell}
\newcommand{\err}   {\mathfrak{E}}
\newcommand{\errpt} {\mathfrak{e}}
\pgfqpoint{\LineSpace}{\LineSpace}}{\pgfqpoint{\LineSpace}
\pgfqpoint{\LineSpace}{\LineSpace}}{\pgfqpoint{\LineSpace}
\newdimen\LineSpace
\tikzset{
    line space/.code={\LineSpace=#1},
    line space=3pt
}
\newtheorem{lemma}      {Lemma}
\newtheorem{proposition}{Proposition}
\newtheorem{remark}     {Remark}
\newtheorem{corollary} {Corollary}
\crefname{section}   {Section}   {Sections}
\crefname{subsection}{Subsection}{Subsections}
\Crefname{section}   {Section}   {Sections}
\Crefname{subsection}{Subsection}{Subsections}
\Crefname{figure}    {Figure}    {Figures}
\crefname{proposition}{Proposition}{Propositions}
\Crefname{proposition}{Proposition}{Propositions}
\crefname{definition} {Definition} {Definitions}
\Crefname{definition} {Definition} {Definitions}
\crefname{theorem}    {Theorem}    {Theorems}
\Crefname{theorem}    {Theorem}    {Theorems}
\crefname{remark}     {Remark}     {Remarks}
\Crefname{remark}     {Remark}     {Remarks}
\crefname{assumption} {Assumption} {Assumptions}
\Crefname{assumption} {Assumption} {Assumptions}
\crefname{corollary}  {Corollary} {Corollaries}
\Crefname{corollary}  {Corollary} {Corollaries}
\title{Assembling algorithm for Green's tensors 
      and absorbing boundary conditions 
      for Galbrun's equation in radial symmetry}
\author{
Ha Pham\thanks{Project-Team Makutu, Inria, University of Pau and Pays de l'Adour, TotalEnergies, CNRS UMR 5142, France.}
\and
Florian Faucher\samethanks[1]
\and
Damien Fournier\thanks{Max-Planck-Institut f\"ur 
                       Sonnensystemforschung, Justus-von-Liebig-Weg 3, 
                       37077 G\"ottingen, Germany.}
\and
H\'el\`ene Barucq\samethanks[1]
\and
Laurent Gizon\samethanks[2]
             \thanks{Institut f\"ur Astrophysik und Geophysik, 
                     Georg-August-Universit\"at G\"ottingen, 
                     Friedrich-Hund-Platz 1, 37077 G\"ottingen, Germany.}
}
\date{}
\begin{document}
\maketitle

\begin{abstract}

Solar oscillations can be modeled by Galbrun's equation which 
describes Lagrangian wave displacement in a self-gravitating 
stratified medium.
For spherically symmetric backgrounds, we construct 
an algorithm to compute efficiently and accurately 
the coefficients of the Green's tensor of the time-harmonic equation in vector 
spherical harmonic basis.  
With only two resolutions, our algorithm provides values of the kernels 
for all heights of source  and receiver, and prescribes analytically the 
singularities of the kernels. 
We also derive absorbing boundary conditions (ABC) to 
model wave propagation in the atmosphere above 
the cut-off frequency.
The construction of ABC, which contains varying gravity terms, 
is rendered difficult by the complex behavior of the solar potential 
in low atmosphere and for frequencies below the Lamb frequency.
We carry out extensive numerical investigations to compare and 
evaluate the efficiency of the ABCs in capturing outgoing solutions. 
Finally, as an application towards helioseismology, we 
compute synthetic solar power spectra that contain pressure modes 
as well as internal-gravity (g-) and surface-gravity (f-) ridges which are 
missing in simpler approximations of the wave equation.
For purpose of validation, the location of the ridges in the synthetic power 
spectra are compared with observed solar modes.

\end{abstract}

\section{Introduction}
\label{into::sec}


Understanding the propagation of waves in stars is a key ingredient for astereoseismology, a field that aims at inferring the internal structure of stars from observed time series of brightness variations. A classical choice to model stellar oscillations is given by the equations derived by Lynden-Bell and Ostriker in \cite{lynden1967stability} which describe the 3D Lagrangian displacement $\boldsymbol{\xi}$ on top of a stationary self-gravitating 
background in hydrostatic equilibrium. These equations are now routinely solved to compute theoretical eigenfrequencies and free oscillations of stars which are then compared, using techniques of global seismology, to observed power spectra in order to constrain the internal properties of stars,
see e.g., \cite{Aerts2021} for a review. These same equations are also employed in the seismology of certain types of planets, e.g. in ring seismology of Saturn, \cite{Dewberry2021}.
For the Sun, our closest star, the study of its internal structure is known as helioseismology, \cite{Christensen_Dalsgaard_2002} and is mostly done by analyzing its gravity and acoustic oscillations. The effect of magnetic fields (as well as rotation and radiative transfer) on these waves is small\footnote{This simplification is considered in the standard
codes computing eigenfrequencies for stars such as ADIPLS \cite{christensen2008adipls} and GYRE \cite{Townsend2013}.}
and can be treated a posteriori as a perturbation from a reference (non-magnetic) background (see, e.g., the review \cite{Aerts2021}).

The main difference between helioseismology and asteroseismology is the type (and quantity) of observations which consist of not only disk-integrated brightness variations but also spatially resolved Doppler velocity maps. Consequently, while employing the same modeling equations of \cite{lynden1967stability}, one can aim at recovering more structural information than for distant stars with techniques of local seismology.
While the eigenfrequencies of the wave operators are the key elements in global helioseismology \cite{Basu2016,christensen2014lecture,Gough1993}, local helioseismology requires the Green's tensor to recover more details, such as 3D perturbations, cf. \cite{gizon2005local}.
In fact, the importance of Green's kernels holds in a larger family of seismic techniques, in particular in passive imaging which relies on the correlation of random wavefields, cf. \cite{gouedard2008cross}, and to which local helioseismology belongs.


The Lagrangian formulation of the stellar oscillation equations is also adopted in other communities such as aeroacoustics and hydroacoustics to model sound propagation in a moving fluid. Neglecting the perturbations to the gravitational potential (Cowling's approximation) \cite{Cowling1941}, the equations of stellar oscillations reduce to one called Galbrun's equation also derived by Galbrun \cite{galbrun1931propagation} in the context of aeroacoustics.
Here we refer to \cite{maeder202090} for a review of the history of this equation in these contexts. Although there exist several formulations (e.g., linearized Euler formulation) in acoustics, due to its advantages, e.g., in the derivation of interface jump conditions \cite{treyssede2009jump} and energy conservation's laws \cite{brazier2011derivation}, Galbrun's equation receives extensive attention, with a large body of literature dedicated to studying its theoretical properties as well as numerical resolution, cf. \cite{maeder202090}. Furthermore, it provides a unifying framework to describe wave propagation in the presence of solid-fluid interactions and is thus employed to model waves in the fluid part of an Earth containing oceans, cf. \cite{gharti2023spectral}.




In this work, we consider the time-harmonic Galbrun's equation, under spherical symmetry and without flows, which models the displacement $\xib$, on top of a background medium characterized by its pressure $p_0$, density $\rho_0$, adiabatic index $\gamma$, and gravitational 
potential $\phi_0$.  Our working equation is written in a coordinate system scaled by $\Rsun$, the radius of the Sun, and we introduce attenuation 
via the parameter $\Gamma>0$,
\begin{equation} \label{main_eqn_Galbrun}
\begin{aligned}
 \boldsymbol{\mathcal{L}} \, \xib =  \mathbf{f}\, , \hspace*{0.2cm} \text{in } \mathbb{R}^3 \,, \hspace*{0.5cm} &\text{where}\,\hspace*{0.2cm}
\Rsun^2\, \boldsymbol{\mathcal{L}} \, \xib
          := -  \rho_0 \,\Rsun^2 (\omega^2+2\mathrm{i}\omega\Gamma)\boldsymbol{\xi}
                         -\nabla \big( \gamma\, p_0 \nabla\cdot \boldsymbol{\xi} \big)    \\
   &\hspace*{.9cm} + (\nabla p_0) (\nabla \cdot\xib) -  \nabla [ (\xib \cdot \nabla) p_0 ] + (\xib\cdot \nabla) \nabla p_0   \, + \,  \rho_0 (\xib \cdot \nabla) \nabla \phi_0  \,.
\end{aligned}
\end{equation}
We assume that the real-valued background coefficients only depend on 
the radial coordinate and are given by model \texttt{S}-\Atmo~constructed 
in \cite{BackgroundRep}.
This background model combines the standard solar model \texttt{S} 
\cite{christensen1996current} defined in the interior of the Sun
with an isothermal atmospheric model \Atmo~which retains the main 
features of that employed for 
scalar wave equation in \cite{fournier2017atmospheric,barucq2018atmospheric,barucq2020outgoing},
that is an exponentially decaying density and a constant sound speed. 
For equation \cref{main_eqn_Galbrun}, we present an algorithm to compute efficiently
the components of the Green's tensor denoted by $\mathbb{G}(\mathbf{x},\mathbf{y})$ in vector spherical harmonic basis, called directional kernels (Topic 1). 
We also investigate absorbing boundary conditions to approximate 
outgoing-at-infinity solutions (Topic 2).
The theoretical existence and characterization of the outgoing directional kernels were established in \cite{barucq2021outgoing}. 
For well-posedness results of Galbrun's equation in $\RR^3$, 
we refer to \cite{halla2022treatment}. The Hybridizable 
 Discontinuous Galerkin (HDG) method is employed 
for  all numerical experiments in this article, and we refer to \cite{barucq:hal-02423882} for detailed implementation of the method to the modal equations of \cref{main_eqn_Galbrun}.


   
These two topics were also addressed for a scalar equation  proposed in \cite{gizon2017computational}
which approximates \cref{main_eqn_Galbrun} in ignoring gravity effects. While simpler, this scalar equation already presents computational challenges due to the strong solar stratification and the unboundness of the domain. Regarding the second point, Dirichlet boundary conditions which are usually employed are not adapted as they create spurious reflections for waves above the atmospheric cut-off frequency $\sim 5.2\si{\milli\Hz}$. Waves above this frequency 
are not trapped and can propagate into the low atmosphere, as observed 
and predicted by theory \cite{kumar1990observed,jefferies1998high}.  Radiation boundary conditions are thus necessary for proper modeling and have been derived, for the scalar equation, in \cite{barucq2018atmospheric} and analyzed in a solar context in \cite{fournier2017atmospheric}. An efficient algorithm to compute the Green's function of this scalar problem with appropriate boundary conditions has been proposed in \cite{barucq2020efficient}.
This Green's function has shown its utility in important applications in helioseismology: it improves the capability to image scatterers (sunspots) on the far side of the Sun \cite{Yang2023}, a step forward for space weather applications; it allows imaging large-scale flows in the solar convection zone such as the meridional flow \cite{Gizon2020}, an essential ingredient to understand the solar dynamo.
Despite its attractiveness, being simpler and retaining most of the physics of solar acoustic modes, the scalar problem presents limitations: with gravity effects ignored, 
the surface gravity mode (f-mode) cannot be resolved, cf. \cite[Figure 10]{gizon2017computational} which shows a comparison between observed and simulated power spectra; and the modeling of acoustic modes is impacted.
Consequently, the resolution of \cref{main_eqn_Galbrun} is required to provide a more realistic representation of wave propagation in the solar interior.

We now highlight the main features and novelty of the two stated topics, addressed for the vector equation \cref{main_eqn_Galbrun}.

\noindent \textbf{Topic 1.} The first main result concerns the computation of the
Green's tensor $\mathbb{G}$. This is reduced, in spherical symmetry, to 
computing\footnote{Here we are only concerned with methods which compute the Green's 
kernel directly, in contrast to the spectral decomposition approach,
in which the kernel is obtained through 
its spectral decomposition in terms of eigenfunctions (for zero surface pressure condition), 
cf. \cite[Appendix B]{boning2016sensitivity}.} 
the coefficients of the components of $\mathbb{G}$ in  
vector spherical harmonic (VSH) tensor basis, called directional kernels.
Our goal is to obtain efficiently and accurately these directional kernels, 
which contain different levels of regularity, for all heights of source and receiver.
This is achieved, in an approach called `assembling', 
described by \cref{compoG_reg:prop,algorithm:assemble}. 
\cref{compoG_reg:prop} provides explicit expressions of these kernels in terms 
of two specific solutions (and their derivatives) 
to the modal ODE with no source. Implementing these results, 
\cref{algorithm:assemble} can compute numerically all kernels, 
in addition to prescribing analytically their singularities, 
at the cost of only two simulations.

This work extends to the Galbrun's equation the result 
of \cite[Algorithm 3.2]{barucq2020efficient} 
for the scalar equation which has demonstrated, for this case, 
the advantages of the `assembling' approach compared to 
direct resolution with a Dirac-source.
The benefits of \cref{algorithm:assemble} go beyond those demonstrated 
in \cite{barucq2020efficient}, cf. \cref{subsection:assembling-algorithm}.
For Galbrun's equation, analytical prescription of the regularity of 
the directional kernels, which forms part of \cref{compoG_reg:prop}, 
is necessary and provides the only viable option that is numerically stable 
for their computation.
A more direct attempt in which the coefficients are obtained by solving 
the modal equation with a Dirac source in a similar vein to \cite[Algorithm 3.1]{barucq2020outgoing}, 
would be ridden with difficulties, as pointed out in \cref{App1_vector::rmk}.
In another word, the assembling idea of \cref{algorithm:assemble} has 
proven its utmost utility with the vector equation.

\medskip

\noindent \textbf{Topic 2.}
The unboundedness of the domain in \cref{main_eqn_Galbrun} necessitates constructing 
 boundary conditions and evaluating their accuracy in approximating
outgoing-at-infinity solutions theoretically constructed in \cite{barucq2021outgoing}.
Absorbing boundary conditions (ABC) for the scalar equation in spherical symmetry
were constructed  in  \cite{barucq:hal-02168467,barucq:hal-02423882} 
and numerically investigated in \cite{barucq2020efficient}; they comprise 
of a nonlocal boundary condition and its approximations which are local 
in $\ell(\ell+1)$ with harmonic degree $\ell$.
We will follow the spirit of these works to construct ABC
for $\boldsymbol{\mathcal{L}}$. 
However,the task here is more delicate, due to the complicated behavior of 
the modal Schr\"odinger operator of $\boldsymbol{\mathcal{L}}$, 
denoted by $-\partial_r^2+ V_\ell$, in comparison to that of the 
scalar equation, denoted by $-\partial_r^2+\Vscalar$, cf. \cref{resultJDE::subsec}. 
In addition to the propagative region\footnote{These regions exist for frequencies below the buoyancy frequencies, cf. \cite{christensen2014lecture} and \cite[Section 7.1]{barucq:hal-03406855}, which gives rise to gravity ridges in synthetic power spectrum.} that $-\partial_r^2+ V_\ell$ possesses 
in the interior of the Sun, 
an other propagative region is observed in the atmosphere 
for frequencies below the Lamb frequency, which corresponds 
to low frequencies $\omega$ and high-degree modes $\ell$.
     
While both of these features are absent in the scalar wave equation,
it is the propagative region in low atmosphere which renders 
difficult the construction of ABC. Due its presence,
an outgoing solution associated with $-\partial_r^2 +  V_\ell$ converges 
slowly to the oscillatory behavior at infinity prescribed 
by $e^{\ii \mathsf{k}_a r}$, with the slowest convergence observed
 for $(\omega,\ell)$ below the Lamb frequency curve.
While the behavior at infinity 
theoretically takes effect in very high atmosphere (i.e. $r \gg 1$),
artificial boundaries should not be 
placed\footnote{This is not only question of computational cost
but also of physical modeling.} 
after the photosphere, which corresponds to 
scaled height $r = 1.003$.
On the other hand, the outgoing solution associated with
$-\partial_r^2+\Vscalar$ 
 converges quickly to the infinity behavior, and its ABCs are imposed
at radius $r=1.0008$ in \cite{barucq2020outgoing,barucq2020efficient}.
Finally, for regions on which $V_\ell$ is similar in nature to $\Vscalar$, the construction
of ABC remains more technical, due to the lack of 
analytic solutions, and the fact that $V_\ell$, unlike $\Vscalar$, is 
not a polynomial in $\ell(\ell+1)$.

%


\medskip

The organization of the paper is as follows. 
After introducing necessary notations and auxiliary 
quantities in \cref{notation:sec}, 
we obtain the `assembling' formula in \cref{section:modeling},
cf. \cref{compoG_reg:prop,algorithm:assemble}. 
In \cref{section:rbc}, we discuss how nonlocal and local BCs are obtained for 
the vector equation, taking into account contribution of gravity and the 
complex behavior of the potential in the atmosphere. 
Numerical investigations of their accuracy in approximating  
outgoing-at-infinity solutions are carried out in  
\cref{section:numerics:original-and-conjugated,section:numerics:original-and-conjugated}.
These comprise of constructing reference solutions, and ascertaining the validity of 
the constructed ABC in low atmosphere. 
Additionally, these investigations also complete preliminary 
results obtained in \cite{barucq:hal-02423882} which compare between 
two formulations of the modal equation (so-called `original' and `conjugated' equations). 
In \cref{section:observables}, as an application towards helioseismology,
these results are employed to compute synthetic solar spectrum which displays gravity ridges
which are absent when the modeling is done with the scalar equation.
The nature of these ridges are validated by comparing
with eigenvalues computed with the software \texttt{gyre} \cite{Townsend2013},
and with the observed solar mode frequencies from \cite{Korzennik2013,Larson2015}.
All numerical implementations to compute kernels are realized with
the open-source software \texttt{hawen}, \cite{Hawen2020}. 

\section{Physical parameters}
\label{notation:sec}


Here we recall the main properties of the background 
solar model \texttt{S}-\Atmo~constructed in \cite{BackgroundRep} 
which imposes constant sound speed and exponentially 
decaying density in the atmosphere, referred to as the
\emph{Atmo} features. 
To maintain regularity of the coefficients, 
a transition region denoted by $[\rs,\ra]$ is needed 
between the behaviour represented by model \texttt{S} 
and the Atmo features.
We work with values of $\rs$, $\ra$ from \cite{BackgroundRep},
\begin{equation}\label{rars::def}
   \rs\,=\,\num{1.0007126} \,,\quad \ra \,=\, \num{1.00073}\,.
\end{equation}
Absorbing boundary conditions are placed at $r \geq \ra$. 
To begin the description, we also need 
the so-called \emph{inverse scale height} function 
$\alpha_\mathfrak{g}$ defined for $\mathcal{C}^1$ 
function $\mathfrak{g}(r)$,
\begin{equation}\label{scale_heights::def}
    \alpha_{\mathfrak{g}}(r) \, := \, -\dfrac{\mathfrak{g}'(r)}{ \mathfrak{g}(r)} \, .
\end{equation}

\subsection{Properties of model \texttt{S}-\Atmo}

We cite here the properties of the model 
\texttt{S}-\Atmo~which follows 
\cite[Assumption 1]{barucq2021outgoing}, with $r$ denoting the radial variable.
\begin{enumerate}
\item All background coefficients are radial, 
      which include
      density $\rho_0$, adiabatic index $\gamma$, attenuation $\Gamma(\omega,\cdot)$, pressure $p_0$, 
      and sound speed $c_0$. Additionally, $\rho_0, \gamma, \Gamma \in \mathcal{C}^2[0,\infty)$, 
       $p_0 \in \mathcal{C}^3[0,\infty)$, and
 \begin{subequations}\label{assump:g1}
   \begin{align}
     & \Gamma(\omega,r)\geq 0\,,\qquad 1<\,\gamma(r) \,<2, \qquad
     \rho_0(r) > 0, \qquad p_0(r)  > 0,\label{ad_exp_hypo}\\       
     & c_0^2\rho_0 = \gamma p_0  \hspace*{0.3cm}\text{(Adiabaticity)} , \hspace*{1.5cm} r\,c_0^{-1} \,\,\text{strictly increasing on } [0,\infty) \label{r_c_increase}\,.
\end{align}
\end{subequations}
\item  The \emph{gravitational potential} $\phi_0$ is defined  as the $L^2(\RR^3)$ solution to the Poisson equation,
\begin{equation}\label{gravpot::r}
 \dfrac{1}{r^2} \left( r^2 \phi_0'\right)' = 4\pi G \Rsun^2\rho_0,
 \end{equation}
 with gravitational constant
   G=\num{6.6743e-8}\si{\centi\meter\cubed\per\gram\per\second\squared}.
    Its radial derivative is given by, cf. \cite[Appendix G.3]{barucq:hal-02423882},
\begin{equation}
 \phi_0'(r)  =  \dfrac{4 \pi  G \Rsun^2}{r^2} \int_0^r \rho_0(s)     s^2 \mathrm{d} s\,.
\end{equation}

\item The hydrostatic equilibrium $p_0' = -\rho_0 \phi_0'$
is achieved only on $ [0,\rs]$ and deviation to this in 
the atmosphere is measured by
 quantity $\Ehe$ (unitless) which is non-positive and decreasing,
   \begin{equation} \Ehe(r)  := \dfrac{  \phi'_0(r) }{ c_0^{2}(r)}-  \dfrac{\alpha_{p_{0}}(r)}{\gamma(r)}\,, \hspace*{0.4cm} \text{with} \hspace*{0.2cm} \begin{cases}\Ehe=0 \hspace*{0.2cm}\text{on }   [0,\rs]\,, \\
 \Ehe \leq 0 \text{ and } \Ehe  \text{ decreasing on } [0,\infty)\end{cases} \,.\label{eq:hydrostatic}
\end{equation}
\item
  For $r\geq r_a$, parameters $\Gamma$,
   $\gamma$ and $c_0$ are constant, while $\rho_0$
  is exponentially decreasing, i.e., 
  \begin{equation}\label{specialfeaturesatmo}
    \Gamma(\omega,r)= \Gamma_{\mathrm{a}}(\omega),
    \quad \gamma(r) = \adatmo, 
    \quad c_0(r)   = c_\mathrm{a}, 
    \quad \rho_0(r) = \rho_0(\ra)\,  e^{-\alatmo (r - \ra)}\,,
  \end{equation}  
   with constants  $\Gamma_{\mathrm{a}}$, $\adatmo$, $c_\mathrm{a}$, $\alatmo$. 
Their values employed in this work are obtained from \cite{BackgroundRep}, 
\begin{equation}
 \adatmo = \num{1.6401}   \,, \hspace*{0.5cm}
 \alpha_{p_0}=\alpha_{\rho_0}=\alatmo = \num{6.6325e3} \,,\hspace*{0.5cm}
 \dfrac{c_\mathrm{a}}{\Rsun}   = \num{9.8608e-6}\si{\per\second}\,.
\end{equation}

\end{enumerate}

\subsection{List of main and auxiliary parameters}
\paragraph{Wavenumbers} The \emph{complex-frequency} $\sigma$  (in \si{\per\second}) and 
wavenumber $k_0$ (unitless) are defined in terms of 
the angular frequency $\omega$ and attenuation 
$\Gamma(\omega,r)$,
\begin{equation} \label{eq:sigma-and-k0}
\sigma^2(r, \omega) =  \omega^2 +  2\mathrm{i} \,\omega \Gamma(\omega,r)  \, ,\hspace*{1cm}
             k_0^2(r) =  \Rsun^2 \dfrac{\sigma^2(r)}{c_0^2(r)}
                      =  \Rsun^2 \dfrac{\omega^2 +  2\mathrm{i} \,\omega \Gamma(\omega,r)}{c_0^2(r)}\,.
\end{equation}

\paragraph{Liouville change of variable} 
The following terms $\mathsf{F}_0(r;\omega)$ and $\mathsf{F}_\ell(r;\omega)$ play important roles in the definition of the coefficients of the modal ODEs,
\begin{equation}\label{sfFell::def} 
\mathsf{F}_0= k_0^2  r^2 -  r\Ehe
\,, \hspace*{1cm} \mathfrak{F}= c_0^2\rho_0 \mathsf{F}_0\,,\hspace*{1.5cm}  \mathsf{F}_{\ell}  = k_0^2  r^2- r\Ehe  -  \ell(\ell+1)\,.
\end{equation}
They also appear in the change of variable to transform  
the original ODE into Schr\"odinger form, together with coefficient
$\mathfrak{I}_{\ell}$ defined as,
  \begin{equation}  \mathfrak{I}_{\ell}  =   \dfrac{1}{r \,c_0\sqrt{\rho_0}} 
 \dfrac{\sqrt{\mathsf{F}_{\ell}}}{\sqrt{ \mathsf{F}_0}} = 
 \dfrac{1}{r\, c_0 \sqrt{\rho_0}} 
\sqrt{\sigma^2  - \mathcal{S}^2_\ell }\Big/ \sqrt{\sigma^2  -
 \dfrac{E_{\text{he}}}{r}   \dfrac{c_0^2}{\Rsun^2}     }\,. \label{unif_inth_ell} 
 \end{equation}
For $\ell > 0$, we define Lamb singularities as
the set of zeros of $\mathsf{F}_\ell$, i.e.,
\begin{equation}\label{Lambsing::def}
(r;\omega,\ell)  \text{ creates a Lamb singularity if}   \hspace*{0.3cm} \mathsf{F}_\ell(r;\omega) =0\,.
\end{equation}
The above equation defines implicitly the  generalized Lamb frequencies 
$(\ell,r) \mapsto \mathcal{S}_\ell(r)$ 
such that $\mathsf{F}_{\ell}(r; \mathcal{S}_\ell) = 0$;
its equivalent explicit expression is given in \cref{Lamb_freq::def}.
Similarly,  for a given $(\ell,\omega)$, \cref{Lambsing::def} defines 
the position $r^{\star}_{\ell,\omega}$ introduced in  \cite[Prop 6]{barucq2021outgoing}.

\paragraph{Lamb and buoyancy frequencies}
The \emph{generalized} Lamb and buoyancy frequency, denoted 
respectively as $\mathcal{S}_\ell$ and $\mathcal{N}$,
extend the original definitions of
the Lamb and buoyancy frequency
$S_\ell$ and $N$ to the atmosphere,
\begin{subequations}\label{physfreq::def}
\begin{align}
  S_{\ell}^2 & :=  \ell(\ell+1) \dfrac{c_0^2}{r^2 \Rsun^2} \,,   
  \hspace*{2cm} 
  \mathcal{S}_\ell^2:= S_\ell^2 + \frac{c_0^2}{\Rsun^2} \dfrac{E_{\text{he}}}{r} \label{Lamb_freq::def}\,,\\
N^2 &:=  \dfrac{\phi_0'}{\Rsun^2} \left( \alpha_{\rho_0} - \dfrac{\alpha_{p_0}}{\gamma}\right)\,,
\hspace*{1.2cm} 
\mathcal{N}^2 = N^2 + \dfrac{c_0^2}{\Rsun^2} \left( \Ehe' - \Ehe (\alpha_{\gamma p_0} - \dfrac{\alpha_{p_0}}{\gamma}) \right). 
              \label{Buoy_freq::def}
\end{align}
\end{subequations}
We have, cf. \cite[Section 3.1]{barucq:hal-03406855},
  \begin{equation}
    k_{\mathcal{N}}^2=\dfrac{\phi_0''}{c_0^2} -  
                            \left( \dfrac{\alphap}{\gamma} \right)'
                        +\dfrac{\alphap}{\gamma}
                         \left(\alphagamp - \dfrac{\alphap}{ \gamma}\right), \hspace*{0.5cm}
\text{with} \hspace*{0.3cm}
k_{\mathcal{N}}^2 := \Rsun^2\dfrac{\mathcal{N}^2}{c_0^2}.
\end{equation}

%
%
%

\paragraph{Coefficients of Schr\"odinger potentials}
We introduce,
\begin{equation}\label{eq:auxiliary-functions-main}
\mathfrak{t}_{\ell}  = \dfrac{\mathsf{F}_0'}{\mathsf{F}_{\ell}\mathsf{F}_0}\,, \hspace*{0.6cm}
\mathfrak{w}_{\ell}    = \dfrac{r^2 (k_0^2 - k_{\mathcal{N}}^2) }{\mathsf{F}_0}
         +   \dfrac{2\mathsf{F}_0}{\ell(\ell+1)\mathsf{F}_{\ell}}
          +   \left( 1
         -  \dfrac{r\upeta}{2} \right) \left(
        r \mathfrak{t}_{\ell} - \dfrac{2}{\mathsf{F}_\ell}\right)\,
           -  \dfrac{r^2\mathfrak{t}'_{\ell}}{2}.
\end{equation}
The following quantities are used to define the modal ODE in Schr\"odinger form, 
\begin{subequations}\label{eq:auxiliary-sch-main}
\begin{align}
\omega^2_c &\,= \, \dfrac{c_0^2}{\Rsun^2}\left( \dfrac{\alpha_{\gamma p_0}^2}{4}  - \dfrac{\alpha'_{\gamma\,p_0}}{2} 
              +  \dfrac{\phi_0''}{c_0^2} \, - \, \dfrac{\upeta}{r}\, + \, \dfrac{2}{r^2} \right)\,, \hspace*{2cm} \substack{\text{local cut-off}\\\text{ angular frequency} }
                 \label{eq:cut-off-frequency} \\
k^2_{\mathrm{h}} &= \dfrac{\ell(\ell+1)}{r^2} \left(
                 1 + \dfrac{  r\Ehe - r^2 k_{\mathcal{N}}^2  }{ \mathsf{F}_0 }
                 - \dfrac{r^2\upeta \mathfrak{t}_{\ell}}{2} 
                 + r\mathfrak{t}_{\ell}
                 - \dfrac{r^2  \mathfrak{t}_{\ell}'}{2} 
                 +  \dfrac{\ell(\ell+1)(r \mathfrak{t}_{\ell})^2}{4} \right)\,, \hspace*{0.2cm} \substack{\text{horizontal}\\
                 \text{ wavenumber}} \label{eq:kh2} \\
\ksf^2                & \,=\, - \dfrac{\alpha^2_{\gamma\,p_0}}{4} + \dfrac{\alpha'_{\gamma\,p_0}}{2} +  k_0^2\,,    \hspace*{1.5cm}  \upeta \,=\, 2 \dfrac{\alphap}{\gamma} \,-\,\alphagamp\,,\\ 
\upnu_0^2&= \dfrac{9}{4}\,, \hspace*{1.cm} \upnu_{\ell}^2  = \dfrac{1}{4} +
                       \ell(\ell+1)\, \mathfrak{w}_{\ell} \,-\,  
                       r\upeta  \bigg(\dfrac{\mathsf{F}_0}{\mathsf{F}_{\ell}} - 1 \bigg)
                   +   \bigg(\dfrac{\ell(\ell+1)  r\mathfrak{t}_{\ell}}{2} \bigg)^2 \, , \hspace*{0.2cm} \ell>0     \,. \label{eq::nuell}  
\end{align}
\end{subequations}

\section{Computations of the Green's tensors in spherical symmetry}
\label{section:modeling}

The main goals of this section are to obtain \cref{compoG_reg:prop} 
and construct \cref{algorithm:assemble} which compute efficiently the directional kernels of 
the Green's tensor $\mathbb{G}^+$ in vector spherical 
harmonics (VSH) tensor basis, cf. \cref{formalexpanGTv0} for their definition.
For $\mathbf{f} \in \mathcal{C}^{\infty}_c(\mathbb{R}^3)^3$, the Green's tensor $\mathbb{G}^+$ 
gives the unique outgoing-at-infinity solution $\xib \in H^1(\mathbb{R}^3)^3$ to equation \cref{main_eqn_Galbrun}; 
specifically,
\begin{equation} \label{main_eqn_Galbrun_rep}
\xib = \Rsun^2 \left< \mathbb{G}^+ , \mathbf{f}\right> \hspace*{0.5cm} \text{satisfies} \hspace*{0.5cm} \boldsymbol{\mathcal{L}}\,\xib
 =  \mathbf{f}  \hspace*{0.2cm} \text{ in } \hspace*{0.2cm} \mathbb{R}^3.
 \end{equation} 
We refer to $\mathbb{G}^+$ as the outgoing-at-infinity 
kernel of $\boldsymbol{\mathcal{L}}$. 
In previous work \cite{barucq2021outgoing}, we have solved \cref{main_eqn_Galbrun_rep} 
in VSH basis, and obtained for each harmonic degree $\ell$ a modal ODE in unknown the 
coefficient of the radial displacement, cf. \cref{origodesys}. Existence and characterization 
of its outgoing kernel is given in \cite{barucq2021outgoing}, as well as a 
representation,  called the `gluing' formula \cref{G+_phitdphi:def}, in terms of 
two homogeneous (no source) solutions, one regular-at-zero and the other
outgoing-at-infinity. These provide the main ingredients, recalled
in \cref{resultJDE::subsec,OGK::subsec}, for the results obtained 
in \cref{coeff3Dgreens::subsec,subsection:assembling-algorithm}.

%

\subsection{Modal Ordinary Differential Equations (ODEs)}\label{resultJDE::subsec}

\paragraph{Vector spherical harmonics} We follow the definition and notation convention employed in
\cite{monk2003finite,colton2013inverse} for the scalar 
and vector spherical harmonics (VSH). The latter are constructed from the scalar spherical harmonics $\mathrm{Y}^m_\ell$ together with tangential gradient $  \nabla_{\mathbb{S}^2} :=     \mathbf{e}_{\theta}\partial_{\theta} 
                             +\tfrac{1}{\sin\theta}\mathbf{e}_{\phi}  \partial_{\phi}  $. They form an orthonormal basis for $L^2(\RR^3)^3$ and are given by, cf. \cite[Equation (9.56)]{monk2003finite}
or \cite[Definition 3.336]{martin2006multiple},
\begin{equation}\label{vec_sph_har::def}
\begin{aligned}
& \mathbf{P}^{m}_{\ell} (\widehat{\mathbf{x}}) = \mathrm{Y}^{m}_{\ell}(\widehat{\mathbf{x}})\, \mathbf{e}_r ,
   \quad \ell=0,1,\ldots; \\[0.3em]
& \mathbf{B}^{m}_{\ell}  (\widehat{\mathbf{x}})\, = \,\dfrac{\nabla_{\mathbb{S}^2}  \mathrm{Y}^{m}_{\ell} }{\sqrt{\ell(\ell+1)}}\,, \qquad
 \mathbf{C}^{m}_{\ell} (\widehat{\mathbf{x}}) \, = 
  - \dfrac{\mathbf{e}_r \times \nabla_{\mathbb{S}^2}  \mathrm{Y}^{m}_{\ell}}{\sqrt{\ell(\ell+1)}}\,   , \quad \ell = 1,2,\dots \,.
\end{aligned}
\end{equation}
We introduce the summation notation,
\begin{equation}
\sum_{(\ell,m)}:= \sum_{\ell=0}^{\infty} \sum_{m=-\ell}^{\ell}\,, \hspace*{1cm}  \sideset{}{'}\sum_{(\ell,m)} := \sum_{\ell=1}^{\infty} \sum_{m=-\ell}^{\ell}\,.
\end{equation}

\paragraph{Modal ODEs}
We decompose the unknown $\xib$ and source vector $\mathbf{f}$ 
of \cref{main_eqn_Galbrun_rep} in VSH basis as
\begin{equation}\label{coeffVHS_f_xi}
\begin{aligned}
& \hspace*{2cm} \mathbf{f} =    \mathbf{f}_{\perp}   + \mathbf{f}_{\mathrm{h}}  + \mathbf{f}_{\times}
\,, \hspace*{1cm}\text{and} \hspace*{0.7cm} \xib =    \xib_{\perp}   + \xib_{\mathrm{h}}  + \xib_{\times} \,,\\
\text{with} \hspace*{0.3cm} \mathbf{f}_{\perp} &= \sum_{(\ell,m)} f^{m}_{\ell}(r) \, \mathrmb{P}^{m}_{\ell}(\widehat{\mathbf{x}}) 
\,, \hspace*{0.5cm}  \mathbf{f}_{\mathrm{h}} = \sideset{}{'}\sum_{(\ell,m)} g^{m}_{\ell}(r)  \, \mathrmb{B}^{m}_{\ell}(\widehat{\mathbf{x}})\,, \hspace*{0.5cm} \mathbf{f}_{\times}= \sideset{}{'}\sum_{(\ell,m)}  h^{m}_{\ell}(r) \, \mathrmb{C}^{m}_{\ell}(\widehat{\mathbf{x}})\,,\\
\xib_{\perp} &= \sum_{(\ell,m)} a^{m}_{\ell}(r) \, \mathrmb{P}^{m}_{\ell}(\widehat{\mathbf{x}}) 
\,, \hspace*{0.5cm}  \xib_{\mathrm{h}} = \sideset{}{'}\sum_{(\ell,m)} b^{m}_{\ell}(r)  \, \mathrmb{B}^{m}_{\ell}(\widehat{\mathbf{x}})\,, \hspace*{0.5cm} \xib_{\times}= \sideset{}{'}\sum_{(\ell,m)}  c^{m}_{\ell}(r) \, \mathrmb{C}^{m}_{\ell}(\widehat{\mathbf{x}})\,.
\end{aligned}
\end{equation}
Their coefficients in VSH basis satisfy at each level $(\ell,m)$, \cite{barucq2021outgoing},
\begin{subequations} \label{origodesys}
\begin{empheq}[left={ \empheqlbrace\,\,}]{align} 
& \left(\hat{q}_1 \, (r\partial_r)^2 \, +\,  \hat{q}_2\,  r\partial_r \, + \, \hat{q}_3 \right) a^m_\ell \,= \,  \mathfrak{f}^m_{\ell}\,,    \label{origmodalODE}\\ 
& \dfrac{b^m_{\ell}}{\sqrt{\ell(\ell+1)}} 
  \,=\,\textcolor{black}{-}\dfrac{r}{\mathsf{F}_{\ell}}\partial_r a^m_{\ell} \,\textcolor{black}{-}\,\left(\dfrac{2}{r} 
  \,\,-\,\, \dfrac{\alpha_{p_0}}{ \gamma} \right)\dfrac{r}{\mathsf{F}_{\ell}} a^m_{\ell}
  \,\, \textcolor{black}{-} \, \, \dfrac{r^2}{\mathsf{F}_{\ell}}\dfrac{\Rsun^2 }{\gamma\, p_0} \dfrac{g^m_{\ell}}   
 {\sqrt{\ell(\ell+1)}} \,, \label{intb_green}\\[0.5em]
& c^m_{\ell}  \,= \,
 -\Rsun^2\dfrac{r^2}{\rho_0\,  c_0^2\, \mathsf{F}_0}\,  h^{m}_{\ell}\,.\label{intc_green}
\end{empheq}
\end{subequations}
In the above equations, $\mathsf{F}_0$, $\mathsf{F}_\ell$ are defined in \cref{sfFell::def}, and $\mathfrak{f}^m_{\ell}$ is 
defined in  terms of $(f^m_\ell, g^m_\ell)$ by
\begin{equation}\label{origmodalODErhs}
\dfrac{\mathfrak{f}^m_{\ell}}{\Rsun^2} := \dfrac{\sqrt{ \ell(\ell+1)} (r\tfrac{\alpha_{p_0}}{ \gamma} 
               -r\alpha_{\gamma p_0} -1)g^m_{\ell} }{\mathsf{F}_{\ell} \gamma\, p_0\,}
 +  \dfrac{\sqrt{\ell(\ell+1)}}{r}\partial_r\, \left(\dfrac{r^2g^m_{\ell}}{\mathsf{F}_{\ell}\gamma\, p_0} \right)
+\,\dfrac{f^m_{\ell}}{\gamma \, p_0} \, .
\end{equation}
Coefficients 
$\hat{q}_1$, $\hat{q}_2$, $\hat{q}_3$ are defined with the quantities 
listed in \cref{eq:auxiliary-functions-main} as follows: for $\ell > 0$,
\begin{subequations}\label{coeffq::def}
\begin{align}
 \hspace*{0.5cm} \hat{q}_1 &= - \dfrac{\mathsf{F}_0}{r^2\mathsf{F}_{\ell}}\,;\hspace*{1cm}
\hat{q}_2  = \left(r \alpha_{\gamma p_0} - 1 \right) \dfrac{\mathsf{F}_0}{r^2\mathsf{F}_{\ell}}
\, + \, \ell(\ell+1) \dfrac{\mathsf{F}_0'}{r(\mathsf{F}_{\ell})^2}  \,;\\[0.8em]
 \hat{q}_3 &=   \left[-  k_0^2 + \dfrac{\phi_0''}{c_0^2} + \dfrac{2}{r^2}
+ 2 \left(\dfrac{\alpha_{\gamma p_0}}{r} - \dfrac{\alpha_{p_0}}{\gamma r}\right)\right]\dfrac{\mathsf{F}_0}{\mathsf{F}_{\ell}}  +  \left( \dfrac{2}{r} - \dfrac{\alpha_{p_0}}{\gamma}\right)  \dfrac{\ell(\ell+1) \mathsf{F}_0'}{(\mathsf{F}_{\ell})^2} 
 +  \dfrac{\ell(\ell+1)(k_0^2 -k_{\mathcal{N}}^2)}{\mathsf{F}_{\ell}}\,. \label{tqell::def}
\end{align}
\end{subequations}
\begin{equation}
\text{For } \ell=0 :  \hspace*{0.5cm} \hat{q}_1 = -\dfrac{1}{r^2}\,,\hspace*{0.3cm}  \hat{q}_2 = \dfrac{\alpha_{\gamma p_0}}{r} - \dfrac{1}{r^2}, \hspace*{0.3cm}
 r^2\hat{q}_3  = -  k_0^2 + \dfrac{\phi_0''}{c_0^2} + \dfrac{2}{r^2}
+ 2 \left(\dfrac{\alpha_{\gamma p_0}}{r} - \dfrac{\alpha_{p_0}}{\gamma r}\right) \,.
\end{equation}

\paragraph{Modal Schr\"odinger ODE}
Equation \cref{origmodalODE} is written in Schr\"odinger form as follows,
\begin{equation}\label{conjmodalODErhs}
\text{with} \hspace*{0.1cm} \tilde{a}^m_{\ell} = \dfrac{a^m_\ell}{\mathfrak{I}_{\ell}}: 
 \hspace*{0.7cm} \cref{origmodalODE} \hspace*{0.2cm}\Leftrightarrow  \hspace*{0.2cm} \left( - \partial_r^2  + V_{\ell}(r)\right) \tilde{a}^m_{\ell}
  =  - \dfrac{\mathsf{F}_\ell}{\mathfrak{I}_\ell\, \mathsf{F}_0} \mathfrak{f}^m_\ell\,.
\end{equation}
Here $\mathfrak{I}_{\ell}$ is given in \cref{unif_inth_ell}. 
The potential $V_\ell$ is defined with auxiliary quantities in \cref{eq:auxiliary-sch-main} as,
\begin{equation}\label{Vell::def}
V_0(r):=   - \dfrac{\big(\sigma^2(r) - \omega^2_c(r)\big)\Rsun^2}{c_0(r)^2}\,, \hspace*{0.5cm} \text{and } \hspace*{0.5cm}
V_\ell(r) = V_0(r) +   k^2_{\mathrm{h}}(r)\,,\hspace*{0.2cm} \text{ for } \ell>0.
\end{equation}
It also has an equivalent expression useful for analysis in scattering theory, cf. \cite{barucq2021outgoing},
\begin{equation} V_{\ell}(r) =   
  -\ksf(r)^2 + \dfrac{\phi_0''(r)}{c_0(r)^2}- \dfrac{\upeta(r)}{r}
  + \dfrac{\upnu_\ell(r)^2 - \frac{1}{4}}{r^2} . \end{equation}

\begin{remark}\label{cov_Dal::rmk} 
The change of unknown \cref{conjmodalODErhs} also 
appears in \cite[Equations (7.5) and (7.6)]{christensen2014lecture} 
and is also used to obtain a Schr\"odinger equation in 
variable $\xi_r$. 
The term $\mathfrak{I}_\ell$ is the square of 
the expression given in \cite[Equation (7.5)]{christensen2014lecture}.
\hfill $ \diamond$
\end{remark}

We write the original modal operators in \cref{origmodalODE} 
and conjugate modal operator in \cref{conjmodalODErhs} as
\begin{equation}\label{notation_modalop}
\Lconjug  \,:=\,  - \partial_r^2  + V_{\ell}(r)
\,, \qquad 
\Lorigin  \,:=\,  \hat{q}_1 \, (r \partial_r)^2 \,+\,  \hat{q}_2\,  r\partial_r \, + \, \hat{q}_3\,.
\end{equation}
Theoretical construction of outgoing-at-infinity solutions 
was obtained in \cite{barucq2021outgoing} with the Schr\"odinger form $\Lconjug$, cf. \cref{OGK::subsec}, and naturally, the construction of ABC will also be carried out with this operator, cf. \cref{section:rbc}.
We will investigate the numerical robustness of both 
forms of the equation in \cref{section:numerics:original-and-conjugated}.

\begin{remark}\label{oldcoeffs::rmk}
In \cite{barucq2021outgoing} (e.g. Proposition 2) and \cite[Section 3.1]{barucq:hal-03406855}, $\Lorigin$ is written in an equivalent form
$\Lorigin = \tilde{q}_1 \partial_r^2 + \tilde{q}_1 \partial_r + \tilde{q}_3$ with coefficients (with a change of notation),
\begin{equation}\label{oldcoeffs}
\begin{aligned}
\ell > 0 : \hspace*{0.3cm} & \hspace*{0.3cm} \tilde{q}_1 \, = \, -\dfrac{\mathsf{F}_0}{\mathsf{F}_\ell} \,, \hspace*{0.5cm} \tilde{q}_2 \, = \,\left( \alpha_{\gamma p_0} - \dfrac{2}{r} \right) \dfrac{\mathsf{F}_0}{\mathsf{F}_{\ell}}
+  \ell(\ell+1) \dfrac{\mathsf{F}_0'}{(\mathsf{F}_{\ell})^2};\\
  \ell =0 :\hspace*{0.3cm} & \hspace*{0.3cm} \tilde{q}_1 = -1 \,, \hspace*{0.5cm} \tilde{q}_2 = \alpha_{\gamma p_0} - \dfrac{2}{r}.
  \end{aligned}
\end{equation}
\end{remark}

\paragraph{Formal decomposition of Green's kernel}
The Green's kernel $\mathbb{G}^+$ in \cref{main_eqn_Galbrun_rep}
is decomposed into parts which give the radial, 
tangential-$\mathbf{B}$, and tangential-$\mathbf{C}$ components
of the solution,
\begin{equation}
\mathbb{G}^+ = \mathbb{G}^+_{\perp}  \,  + \, \mathbb{G}^+_{\mathrm{h}} \, + \, \mathbb{G}^+_{\times}.
\end{equation}
Result \cref{origodesys}, which states that coefficients $b^m_\ell$ are determined
by $a^m_\ell$ and  $(f^m_\ell,g^m_\ell)$, 
and that $c^m_\ell$ is decoupled from $a^m_\ell$ and $b^m_\ell$, can be phrased as,
\begin{equation}\label{formalexpanGTv00}
\xib^{\perp} =  \Rsun^2 \left< \mathbb{G}^+_{\perp} \, , \, \mathbf{f}^{\perp} + \mathbf{f}^\mathrm{h}\right>\,, \hspace*{0.3cm}   \xib^{\perp} =  \Rsun^2 \left< \mathbb{G}^+_{\mathrm{h}}\, ,\, \mathbf{f}^{\perp} + \mathbf{f}^\mathrm{h}\right> \,, 
 \hspace*{0.3cm} \xib^{\times} =  -   \Rsun^2\dfrac{\lvert \mathbf{x}\rvert^2}{\textcolor{black}{ 
   \mathfrak{F}(\lvert \mathbf{x}\rvert)}} \mathbf{f}^{\times}(\mathbf{x}) \,.
\end{equation}
Here, $ \left< \cdot , \, \cdot \right>$ is the distributional pairing on $(0,\infty)$.
We note that $\mathbb{G}^+_{\times}$ only has component  which is a Dirac distribution, along direction $\mathbf{C}^m_\ell$.
The rest of the work will only concern $\mathbb{G}^+_{\perp}$ and $\mathbb{G}^+_{\mathrm{h}}$. Their coefficients in formal VSH decomposition, denoted\footnote{These subscripts have the connotation of (row,column) subscripts of a matrix, with `P' and `B' being the indices of the basis elements. The latter convention follows from the notation of the basis vectors listed in \cref{vec_sph_har::def}. The matrix interpretation is reflected in \cref{abGPPB}}. by $G^{\mathrm{PP}}_\ell$, $ G^{\mathrm{PB}}_\ell$, $
G^{\mathrm{BP}}_\ell$, $ G^{\mathrm{BB}}_\ell$ 
are computed in \cref{coeff3Dgreens::subsec},
\begin{equation}\label{formalexpanGTv0}
\begin{aligned}
\mathbb{G}^+_{\perp}(\mathbf{x},\mathbf{y})
&= \sum_{(\ell,m)} 
  \textcolor{black}{G^{\mathrm{PP}}_{\ell}(\lvert\mathbf{x}\rvert,\lvert\mathbf{y}\rvert)} \,
\mathrmb{P}^m_{\ell}(\widehat{\mathbf{x}})  \otimes\overline{\mathrmb{P}^m_{\ell}(\widehat{\mathbf{y}})} 
\,+\, \sideset{}{'}\sum_{(\ell,m)} 
\textcolor{black}{G^{\mathrm{BP}}_{\ell}(\lvert\mathbf{x}\rvert,\lvert\mathbf{y}\rvert)} \,\, 
\mathrmb{B}^m_{\ell}(\widehat{\mathbf{x}})  \otimes\overline{\mathrmb{P}^m_{\ell}(\widehat{\mathbf{y}})}\,;\\ 
\mathbb{G}^+_{\mathrm{h}}(\mathbf{x},\mathbf{y})&= 
 \sideset{}{'}\sum_{(\ell,m)} 
\textcolor{black}{G^{\mathrm{PB}}_{\ell}(\lvert\mathbf{x}\rvert,\lvert\mathbf{y}\rvert)} \, \,
\mathrmb{P}^m_{\ell}(\widehat{\mathbf{x}})  \otimes\overline{\mathrmb{B}^m_{\ell}(\widehat{\mathbf{y}})}
 \, + \, \sideset{}{'}\sum_{(\ell,m)} 
\textcolor{black}{G^{\mathrm{BB}}_{\ell}(\lvert\mathbf{x}\rvert,\lvert\mathbf{y}\rvert)} \, 
\mathrmb{B}^m_{\ell}(\widehat{\mathbf{x}})  \otimes\overline{\mathrmb{B}^m_{\ell}(\widehat{\mathbf{y}})} 
\,. 
 \end{aligned}
 \end{equation}
 

\subsection{Outgoing modal Green's kernels and assembling formula}\label{OGK::subsec}

From the analysis of \cite{barucq2021outgoing}, we have that 
the regularity at zero is prescribed by indicial exponent
\begin{equation}
\lambda_0^+= 1 \,, \quad \text{for $\ell=0$}\,, 
\hspace*{2cm} \lambda_\ell^+ = \ell-1 \,,\quad  \text{for } \ell > 0,
\end{equation}
while oscillatory behavior at infinity is characterized by $\mathsf{k}_{\mathrm{a}}^2$, 
the energy level of $-V_\ell$ at infinity, and a phase $\psi$, with,
\begin{equation}\label{phasefcn::def} \mathsf{k}_{\mathrm{a}}^2 = \lim_{r\rightarrow \infty} -V_\ell\,, \hspace*{0.5cm}
\text{and} \hspace*{0.3cm} \psi(r,\mathsf{k}_{\mathrm{a}}) 
  = r \mathsf{k}_{\mathrm{a}} ( 1 + \mathsf{o}(1)),  \hspace*{0.2cm}\text{as } \hspace*{0.2cm}r\rightarrow \infty\,.\end{equation}
The phase $\psi$ is a solution to 
the eikonal equation whose right-hand side is obtained 
from $V_\ell$ by ignoring the `short-range' term 
$\tfrac{\upnu_\ell(r)^2 - \frac{1}{4}}{r^2} $, cf. \cite{barucq2021outgoing}.

We restate the result of \cite[Section 7]{barucq2021outgoing} in terms of the Green's kernel of $\Lorigin$. 
Under appropriate assumption\footnote{These include  
positive attenuation and away from scattering threshold frequency, and non-vanishing Wronskian, 
cf. \cite[Equations (7.7) and (7.9)]{barucq2021outgoing}.}, 
the coefficients $a^m_{\ell}$ of $\xib \in L^2(\mathbb{R}^3)^3$ are given by,
\begin{equation} \label{inta_green}
 a^m_{\ell}(r) \,= \, \int_0^{\infty} G^+_{\ell}(r,s) \, \mathfrak{f}^m_{\ell}(s)\, ds\,,
\end{equation}
where $G^+_\ell$  called the outgoing Green's kernel for $\Lorigin$ is the unique regular-at-0 and outgoing-at-infinity distributional solution to, 
\begin{subequations}\label{eq:Gplus-main}
\begin{align}
 & \hspace*{1cm}\Lorigin\, G^+_{\ell}\, = \,\delta(r-s) \,, \quad r\in (0, \infty)\\
 \text{satisfying} &\hspace*{0.5cm} G^+_{\ell}(r,s) \dfrac{\mathfrak{I}_{\ell}(s)}{\mathfrak{I}_{\ell}(r)}   \, \dfrac{\mathsf{F}_0(s)}{\mathsf{F}_{\ell}(s)} \, = \, 
 e^{\mathrm{i} \, \psi(r,\mathsf{k}_a)} \, \big(1 \, + \, \mathsf{o}(1)\big) 
  \, ,  \quad \text{as }   r\rightarrow \infty\,, \label{outinfcond::def}\\[0.5em]
\text{and} & \hspace*{0.5cm}  G^+_{\ell}(r) \, = \, r^{\lambda^+_{\ell}} \, \big(1 \, + \, \mathsf{o}(1) \big)\,, \quad \text{as } r\rightarrow 0\,.        \label{regzerocond::def} 
\end{align} \end{subequations}

We will need the following characterization of $G^+_\ell$ for our main results, 
cf. \cref{subsection:assembling-algorithm}. It is given in terms of two solutions,  
$\phi_{\ell}$ and $\phi^+_{\ell}$ which satisfy, respectively,
\begin{subequations} \label{eq:phi_phi_problem}
\begin{align}
\Lorigin\, \phi_{\ell} =  0 \,, \hspace*{0.2cm} \text{on }  (0, \infty)\,, 
\hspace*{0.3cm}&\text{satisfying}  \hspace*{0.3cm} \phi_{\ell}(r)  = r^{\lambda^+_{\ell}}  (1  +  \mathsf{o}(1) )\,, \hspace*{0.2cm} \text{as } r\rightarrow 0\,,  \label{regsol_origODE}\\
\Lorigin\, \phi^+_{\ell} =  0\,, \hspace*{0.2cm} \text{on }(0, \infty)\,, \hspace*{0.3cm}& \text{satisfying}  \hspace*{0.3cm} \mathfrak{I}_\ell(r)\, \phi^+_{\ell}(r)  =   e^{\mathrm{i} \, \psi(r,\mathsf{k}_a)} \left(1  +  \mathsf{o}(1)\right) 
\,, \hspace*{0.2cm} \text{as }   r\rightarrow \infty\,.\label{outgoingsol_origODE}
\end{align}
\end{subequations}
We also refer to \cite[Equation (3.90)]{barucq:hal-03406855} for more detailed discussion of these solutions. 
In terms of $\phi_\ell$ and $\phi^+_\ell$, 
the outgoing kernel $G^+_\ell$ is given by
\begin{equation}\label{G+_phitdphi:def}
G^+_{\ell}(r,s)\,  = \, 
\dfrac{ \mathrm{H}(s-r)\, \phi_{\ell}(r)\, \phi^+_{\ell}(s)
\, + \, \mathrm{H}(r-s)\, \phi_{\ell}(s)\, \phi^+_{\ell}(r)}{ \hat{q}_{\ell}(s)\, 
\mathcal{W}^+_\ell(s)}\,.
\end{equation}
Here $\hat{q}_\ell$ is defined in \cref{oldcoeffs}, $\mathrm{H}$ denotes the Heaviside function and $\mathcal{W}^+$  the Wronskian,
\begin{equation}\label{calW+::def}
\mathcal{W}^+_\ell(s) \,:= \, \mathcal{W}\{\phi_\ell(s),\phi^+_{\ell} (s)\}
\, = \, \phi_\ell(s) \, \phi^{+\prime}_\ell(s)
\, - \, \phi_\ell'(s) \, \phi^+_\ell(s)\,.
\end{equation}

\subsection{Coefficients of Green's tensor in VSH basis}
\label{coeff3Dgreens::subsec}
%
%

Starting from representation \cref{G+_phitdphi:def} for the 
outgoing kernel $G^+_\ell$ of $\Lorigin$, we now proceed to 
compute the directional kernels,
$G^{\mathrm{PP}}_\ell$, $ G^{\mathrm{PB}}_\ell$, 
$G^{\mathrm{BP}}_\ell$, $ G^{\mathrm{BB}}_\ell$
introduced in \cref{formalexpanGTv0},
in terms of solutions $\phi^+_\ell$  \cref{outgoingsol_origODE} and $\phi_\ell$  \cref{regsol_origODE}. These results form the statement of \cref{compoG_reg:prop} and require the following auxiliary kernels,
\begin{subequations}\label{primker}
\begin{align}
\mathsf{G}^+_{\ell}(\textcolor{black}{r},s)
\,&:=   \mathrm{H}(s-\textcolor{black}{r}) \, \textcolor{black}{\phi_\ell(r)} \,\phi_\ell^+(s) \,\,+ \,\,  \mathrm{H}(\textcolor{black}{r}-s) \, \phi_\ell(s)\,  \textcolor{black}{\phi_\ell^+(r)}\,;  \label{sfGell::def}  \\[0.5em]
\mathsf{T}^+_{\ell}(\textcolor{black}{r},s) &:=  \mathrm{H}(s-\textcolor{black}{r}) \, \textcolor{black}{r\partial_r \phi_\ell(r)} \, \phi_\ell^+(s)
\,\, + \,\, \mathrm{H}(\textcolor{black}{r}-s) \, \phi_\ell(s)\,  \textcolor{black}{r\partial_r \phi_\ell^+(r)}   \,; \label{sfTell::def}\\[0.5em]
\mathsf{Q}^+_{\ell}(\textcolor{black}{r},s) &:=  \mathrm{H}(s-\textcolor{black}{r}) \, \textcolor{black}{r\partial_r \phi_\ell(r)}\, s \partial_s\phi_\ell^+(s)
\,\, +\,\, \mathrm{H}(\textcolor{black}{r}-s)\,  s\partial_s \phi_\ell(s) \,  \textcolor{black}{r\partial_r \phi_\ell^+(r)}   \,. \label{sfQell::def}
\end{align}
\end{subequations}
In \cref{coeff_orig::prop}, we first establish relations between these directional kernels 
in terms of $G^{\mathrm{PP}}_\ell$ and its derivatives; 
the latter are computed in \cref{derGPP::cor}.
 We will also need the following quantity, defined in terms of the Wronskian $\mathcal{W}^+_{\ell}$ of \cref{calW+::def} 
and $\frak{F}$ of \cref{sfFell::def},
\begin{equation} \label{eq:mathfrakp}
\mathfrak{p}(s)
\, :=\,   \dfrac{\mathsf{F}_{\ell}(s)}{\mathcal{W}^+_\ell(s)\, \mathfrak{F}(s)} \,.
\end{equation}


\begin{proposition}\label{coeff_orig::prop} 
The directional kernels, $G^{\mathrm{PP}}_\ell$, $ G^{\mathrm{PB}}_\ell$, 
$G^{\mathrm{BP}}_\ell$, $ G^{\mathrm{BB}}_\ell$, introduced
in \cref{formalexpanGTv0}, are related
to the coefficients $a^m_{\ell}$ and $b^m_{\ell}$ of the solution $\xib$
to \cref{main_eqn_Galbrun_rep} as follows,
\begin{subequations}\label{abGPPB}
\begin{align}
a^m_{\ell}
 \,&=\, \Rsun^2 \int_0^{\infty} G^{\mathrm{P}\mathrm{P}}_{\ell}(r,s)  f^m_{\ell}(s)
 \,s^2\,  \mathrm{d} s
\,   + \, \Rsun^2 \int_0^{\infty} G^{\mathrm{P}\mathrm{B}}_{\ell}(r,s)  g^m_{\ell}(s)\, s^2\, \mathrm{d}s\,,\label{aGPPB}\\
b^m_{\ell}
 \,&=\,\Rsun^2 \int_0^{\infty} G^{\mathrm{B}\mathrm{P}}_{\ell}(r,s)  f^m_{\ell}(s)
 \,s^2\,  \mathrm{d} s
\,   + \, \Rsun^2  \left< G^{\mathrm{B}\mathrm{B}}_{\ell}(r,s) \, ,\,    g^m_{\ell}(s)\, s^2\, \right>.\label{bGPPB}
\end{align}
\end{subequations}
Recall that $a^m_\ell$ and $b^m_\ell$ are the coefficients 
of radial component $\xib_{\perp}$ and tangential component $\xib_\mathrm{h}$, 
cf. \cref{coeffVHS_f_xi}; additionally, $a^m_\ell$ are outgoing solutions to 
\cref{inta_green} and $b^m_\ell$ are defined in \cref{intb_green}.
The above kernels further satisfy the following relations to the outgoing kernel $G^+_\ell$ of $\Lorigin$ and the radial-radial components $G^{\mathrm{P}\mathrm{P}}_{\ell}$,
\begin{subequations}
\begin{align}
 \textcolor{black}{G^{\mathrm{P}\mathrm{P}}_{\ell}}(r,s)  &=  \dfrac{ G^+_{\ell}(r,s)}{c^2_0(s) \rho_0(s) s^2}\,; \label{GPP}\\
G^{\mathrm{P}\mathrm{B}}_{\ell}(r,s) 
&= -\dfrac{\sqrt{\ell(\ell+1)}}{\mathsf{F}_{\ell}(s)}
    \left[s\partial_s G^{\mathrm{PP}}_{\ell}+  G^{\mathrm{P}\mathrm{P}}_{\ell} \left( 2  - \dfrac{s\alpha_{p_0}(s)}{ \gamma(s)} \right)    \right] \label{GPB_GPP} ;\\
G^{\mathrm{BP}}_{\ell}(r,s)
   &=  \textcolor{black}{-}\dfrac{ \sqrt{\ell(\ell+1)}}{\mathsf{F}_{\ell}(r)}
\left[ r  \partial_rG^{\mathrm{P}\mathrm{P}}_{\ell}(r,s) 
 +   G^{\mathrm{PP}}_{\ell}(r,s)\left( 2 -r\dfrac{\alpha_{p_0}(r)}{\gamma(r)}\right) \right]\,; \label{GBP_GPP}\\
G^{\mathrm{BB}}_{\ell}(r,s)
&= \dfrac{ \ell(\ell+1)}{\mathsf{F}_{\ell}(r)\,\mathsf{F}_{\ell}(s)}
    \left[r\partial_r s\partial_s G^{\mathrm{PP}}_{\ell}\, + \,\left( 2  - \dfrac{s\alpha_{p_0}(s)}{ \gamma(s)} \right) r\partial_rG^{\mathrm{P}\mathrm{P}}_{\ell}    +   \left(  2 - \dfrac{r\alpha_{p_0}(r)}{\gamma(r)}\right)s\partial_s G^{\mathrm{PP}}_{\ell} \right.\nonumber \\
&    
 \left. \hspace*{0.2cm}   +  G^{\mathrm{P}\mathrm{P}}_{\ell} \left(  2  -\dfrac{r\alpha_{p_0}(r)}{\gamma(r)}\right)\left( 2  - \dfrac{s\alpha_{p_0}(s)}{ \gamma(s)} \right)    \right]   \,\textcolor{black}{-} \,  \dfrac{r^2}{\mathsf{F}_{\ell}(r)\gamma(r) p_0(r)} \dfrac{\delta(r-s)}{s^2}\,.  \label{GBB_GPP}
\end{align}
\end{subequations}


\end{proposition}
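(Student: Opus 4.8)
The plan is to start from the single-kernel representation \cref{inta_green}, $a^m_{\ell}(r)=\int_0^{\infty}G^+_{\ell}(r,s)\,\mathfrak{f}^m_{\ell}(s)\,\mathrm{d}s$, and substitute the expression \cref{origmodalODErhs} of $\mathfrak{f}^m_{\ell}$ in terms of $f^m_{\ell}$ and $g^m_{\ell}$. The part proportional to $f^m_{\ell}$ reads $\Rsun^2\int_0^{\infty}G^+_{\ell}(r,s)\,f^m_{\ell}(s)/(\gamma p_0)\,\mathrm{d}s$; matching it against \cref{aGPPB} and invoking adiabaticity $c_0^2\rho_0=\gamma p_0$ from \cref{r_c_increase} yields \cref{GPP} at once, i.e. $G^{\mathrm{PP}}_{\ell}=G^+_{\ell}/(c_0^2\rho_0\,s^2)$. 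In the part proportional to $g^m_{\ell}$ the term carrying $\partial_s$ is integrated by parts so as to transfer the derivative onto $G^+_{\ell}$; since $\mathbf{f}\in\mathcal{C}^{\infty}_c$ the coefficient $g^m_{\ell}$ has compact support, so the boundary contribution at $s=\infty$ vanishes, while the regularity $G^+_{\ell}\sim s^{\lambda^+_{\ell}}$ of \cref{regzerocond::def} (with $\lambda^+_{\ell}=\ell-1\geq 0$) kills the one at $s=0$. Re-expressing $G^+_{\ell}$ through $G^{\mathrm{PP}}_{\ell}$ using $\partial_s(\gamma p_0)=-\alpha_{\gamma p_0}\gamma p_0$ then collapses the collected terms to \cref{GPB_GPP}.

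Next I would read off the $b^m_{\ell}$ relations from \cref{intb_green}, which I rewrite as $b^m_{\ell}=\mathcal{D}_r a^m_{\ell}-\Rsun^2 r^2 g^m_{\ell}/(\mathsf{F}_{\ell}\gamma p_0)$ with the first-order operator $\mathcal{D}_r:=-\tfrac{\sqrt{\ell(\ell+1)}}{\mathsf{F}_{\ell}}\big(r\partial_r+2-r\alpha_{p_0}/\gamma\big)$. Substituting the integral form \cref{aGPPB} of $a^m_{\ell}$ and differentiating under the integral sign, the $f^m_{\ell}$-channel gives $G^{\mathrm{BP}}_{\ell}=\mathcal{D}_r G^{\mathrm{PP}}_{\ell}$; because $G^{\mathrm{PP}}_{\ell}\propto G^+_{\ell}$ is continuous across $r=s$, the jump term of the Leibniz rule vanishes and \cref{GBP_GPP} follows, exhibiting the reciprocity with \cref{GPB_GPP} ($r\partial_r$ in place of $s\partial_s$).

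The main obstacle is the $G^{\mathrm{BB}}_{\ell}$ channel, where $\mathcal{D}_r$ acts on $\int_0^{\infty}G^{\mathrm{PB}}_{\ell}(r,s)\,g^m_{\ell}\,s^2\,\mathrm{d}s$. Through $\partial_s G^{\mathrm{PP}}_{\ell}$, the kernel $G^{\mathrm{PB}}_{\ell}$ has a jump across $s=r$, so splitting the integral as $\int_0^{r}+\int_r^{\infty}$ and differentiating produces, besides $\int\partial_r G^{\mathrm{PB}}_{\ell}\,g^m_{\ell}s^2\,\mathrm{d}s$, a Dirac term proportional to the jump $\llbracket G^{\mathrm{PB}}_{\ell}\rrbracket_{s=r}$. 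I would evaluate this jump from the gluing formula \cref{G+_phitdphi:def}: differentiating its Heaviside factors, the $\delta$-terms generated by $\partial_s\mathrm{H}$ cancel in pairs and leave the Wronskian $\mathcal{W}^+_{\ell}$ of \cref{calW+::def}, which cancels the denominator, so that the jump of $\partial_s G^+_{\ell}$ across $s=r$ equals $-1/\hat{q}_{\ell}=\mathsf{F}_{\ell}/\mathsf{F}_0$. Equivalently, at the level of the auxiliary kernels \cref{primker} one has the distributional identity $r\partial_r\,s\partial_s\,\mathsf{G}^+_{\ell}=\mathsf{Q}^+_{\ell}-s^2\mathcal{W}^+_{\ell}(s)\,\delta(r-s)$, whence the distributional mixed derivative $r\partial_r\,s\partial_s\,G^{\mathrm{PP}}_{\ell}$ carries the Dirac $\tfrac{\mathsf{F}_{\ell}}{\gamma p_0\,\mathsf{F}_0}\delta(r-s)$. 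This jump-induced Dirac is exactly the distributional content already borne by the mixed-derivative term inside the bracket of \cref{GBB_GPP}, so it need not be displayed separately; the single $\delta(r-s)$ shown explicitly there is the local term $-\Rsun^2 r^2 g^m_{\ell}/(\mathsf{F}_{\ell}\gamma p_0)$ inherited from \cref{intb_green}, and accordingly the pairing in \cref{bGPPB} must be read distributionally. The remaining bracketed terms in \cref{GBB_GPP} are the regular part of $\mathcal{D}_r\mathcal{D}_s G^{\mathrm{PP}}_{\ell}$, obtained by expanding the product of the two commuting first-order operators, which is routine algebra using adiabaticity and the definitions in \cref{eq:auxiliary-functions-main}.
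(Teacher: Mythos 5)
Your proposal is correct and follows essentially the same route as the paper's proof: splitting $\mathfrak{f}^m_\ell$ into its three terms, reading off $G^{\mathrm{PP}}_\ell$ from the $f$-channel via adiabaticity, integrating by parts on the $\partial_s$ term (using compact support and regularity at $0$) to get $G^{\mathrm{PB}}_\ell$, and then substituting \cref{aGPPB} into \cref{intb_green} and differentiating under the integral to obtain $G^{\mathrm{BP}}_\ell$ and \cref{GBB_GPP}. Your explicit distributional bookkeeping --- that the mixed-derivative term in \cref{GBB_GPP} carries its own jump-induced Dirac (this is exactly the content of \cref{Green3dcoeff::lem2,derGPP::cor}) while the $\delta(r-s)$ displayed explicitly is the local term inherited from \cref{intb_green} --- matches precisely how the paper reads this formula, as confirmed when the Diracs are combined in the proof of \cref{compoG_reg:prop}.
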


\begin{proof}

\textbf{Part 1: Kernel for the radial coefficients}
With $\mathfrak{f}^m_{\ell}$ comprised of three terms 
in its definition \cref{origmodalODErhs}, we rewrite \cref{inta_green}, 
upon ignoring the factor $\Rsun^2$, as
\begin{equation}
a^m_{\ell} \,= \, \int_0^{\infty} G^+_{\ell}(r,s) \, \mathfrak{f}^m_{\ell}(s)\, \mathrm{d}s 
=  I_1  + I_2 + I_3\,, 
\end{equation}
\begin{equation}
\begin{aligned}
\text{where} \hspace*{0.3cm} I_1 &:= \int_0^{\infty} G^+_{\ell}(r,s) \, \dfrac{f^m_{\ell}(s)}{\gamma(s) p_0(s)}\, \mathrm{d}s \,, \\
I_2 &:= \int_0^{\infty} G^+_{\ell}(r,s) \, \left(s\dfrac{\alpha_{p_0}(s)}{ \gamma(s)} 
               -s\alpha_{\gamma p_0}(s) -1 \right)\dfrac{\sqrt{\ell(\ell+1)}\, g^m_{\ell}(s)}{\mathsf{F}_{\ell}(s)\gamma(s) p_0(s)}\, \mathrm{d}s \,, \\
  I_3 &:= \int_0^{\infty} G^+_{\ell}(r,s) \, \dfrac{\sqrt{\ell(\ell+1)}}{s} \partial_s \left( \dfrac{s^2 \, g^m_{\ell}(s)}{\mathsf{F}_{\ell}(s)\gamma(s) p_0(s)} \right)\, \mathrm{d}s  \, .
\end{aligned}
\end{equation}
Term $I_1$ gives readily the expression of $G^{\mathrm{PP}}_{\ell}$ in \cref{GPP}, 
by using adiabaticity \cref{r_c_increase}.
 We next consider $I_2$ and $I_3$. Replacing $G^+_\ell$ by \cref{GPP} we rewrite $I_2$ as 
\begin{equation}\label{temp_I2_gPP}
I_2 = \int_0^{\infty}
\left(s\dfrac{\alpha_{p_0}(s)}{ \gamma(s)} 
               -s\alpha_{\gamma p_0}(s) -1 \right)
       G^{\mathrm{P}\mathrm{P}}_{\ell}\,   g^m_{\ell}(s)
        \,  \dfrac{\sqrt{\ell(\ell+1)}}{\mathsf{F}_{\ell}(s) } \,    s^2\mathrm{d} s\,.
\end{equation}
For $I_3$, since we are working with $g^m_{\ell}$ of compact support, 
together with the fact that, 
$\mathcal{G}_{\ell}^+(0) = 0$,
we can carry out integration by parts in $I_3$, which gives,
\begin{equation}
\begin{aligned}
I_3 &= -\int_0^{\infty} \left( s\partial_s G^{\mathrm{PP}}_{\ell}\right)  \dfrac{ g^m_{\ell}(s) \sqrt{\ell(\ell+1)}}{\mathsf{F}_{\ell}(s)} \,  s^2 \mathrm{d} s 
   \,+ \,\int_0^{\infty}  G^{\mathrm{P}\mathrm{P}}_{\ell}  \alpha_{\gamma p_0}(s)  \dfrac{s\, g^m_{\ell}(s)}{\mathsf{F}_{\ell}(s)} \,
 \sqrt{\ell(\ell+1)}    \, s^2\, \mathrm{d} s \\
&\hspace*{2cm} - \int_0^{\infty} G^{\mathrm{P}\mathrm{P}}_{\ell}   \dfrac{g^m_{\ell}(s)}{\mathsf{F}_{\ell}(s)} \,  \sqrt{\ell(\ell+1)}     \, s^2\, \mathrm{d} s \, . 
\end{aligned}
\end{equation}
Putting together the above expression with \cref{temp_I2_gPP}, we obtain the expression \cref{GPB_GPP} for $G^{\mathrm{P}\mathrm{B}}_{\ell}$. 

\paragraph{Part 2a : kernel for the horizontal coefficients}  
We start with the expression \cref{intb_green} of $\mathsf{b}^m_{\ell}$. 
We use result \cref{aGPPB} to compute the derivative of $a^m_\ell$,
\begin{equation}
\partial_r a^m_{\ell} \, = \, \Rsun^2
\int_0^{\infty} (\partial_rG^{\mathrm{P}\mathrm{P}}_{\ell}(r,s) )\, f^m_{\ell}(s)
 \,s^2\,  \mathrm{d} s
\,   + \, \Rsun^2\int_0^{\infty}(\partial_r G^{\mathrm{P}\mathrm{B}}_{\ell}(r,s) ) \, g^m_{\ell}(s)\, s^2\, \mathrm{d}s \,.
\end{equation}
Substituting the above expression and \cref{aGPPB} into \cref{intb_green}, we obtain
\begin{equation}
\begin{aligned}
&\dfrac{b^m_{\ell}}{\Rsun^2} \,= \,\textcolor{black}{-}\dfrac{r \sqrt{\ell(\ell+1)}}{\mathsf{F}_{\ell}(r)}\left( \int_0^{\infty} (\partial_rG^{\mathrm{P}\mathrm{P}}_{\ell}(r,s) ) f^m_{\ell}(s)
 \,s^2\,  \mathrm{d} s
\,   + \, \int_0^{\infty}(\partial_r G^{\mathrm{P}\mathrm{B}}_{\ell}(r,s) ) g^m_{\ell}(s)\, s^2\, \mathrm{d}s \right) \\
& \hspace*{0.5cm}\,\textcolor{black}{-}\dfrac{\sqrt{\ell(\ell+1)}}{\mathsf{F}_{\ell}(r)}\left(2- r\dfrac{\alpha_{p_0}(r)}{ \gamma(r)} \right)
\left(
\int_0^{\infty} G^{\mathrm{P}\mathrm{P}}_{\ell}(r,s)\, f^m_{\ell}(s)
 s^2 \mathrm{d} s
 +  \int_0^{\infty} G^{\mathrm{P}\mathrm{B}}_{\ell}(r,s)  g^m_{\ell}(s)\, s^2\, \mathrm{d}s \right)\\
&\hspace*{1cm} \,\textcolor{black}{-} \, \dfrac{r^2}{\mathsf{F}_{\ell}(r)\, \gamma(r)\, p_0(r)} \langle \delta(r - s) ,  g^m_{\ell}(s) \rangle \,.
\end{aligned}
\end{equation}
Thus, we end up with expression \cref{GBP_GPP} for $G^{\mathrm{BP}}_{\ell}$, 
as well as the following one for $G^{\mathrm{BB}}_{\ell}$,
\begin{equation}\label{preGBB}
\begin{aligned}
G^{\mathrm{BB}}_{\ell}
&= \textcolor{black}{-}\dfrac{ \sqrt{\ell(\ell+1)}}{\mathsf{F}_{\ell}(r)}
\left[ r  \partial_rG^{\mathrm{PB}}_{\ell}(r,s) 
 + \,  G^{\mathrm{PB}}_{\ell}(r,s)\left( 2 - r\dfrac{\alpha_{p_0}(r)}{\gamma(r)}\right) \right]
 - \,  \dfrac{r^2}{\mathsf{F}_{\ell}(r) \gamma(r) p_0(r)} \dfrac{\delta(r-s)}{s^2}\,.
\end{aligned}
\end{equation}

\paragraph{Part 2b: Expression of $G^{\mathrm{BB}}_{\ell}$} We further rewrite \cref{preGBB}.
We need the derivative of $G^{\mathrm{PB}}_{\ell}$:
\begin{equation}
\begin{aligned}
-r\partial_r G^{\mathrm{PB}}_{\ell}
= \dfrac{\sqrt{\ell(\ell+1)}}{\mathsf{F}_{\ell}(s)}
    \left[r\partial_r s\partial_s G^{\mathrm{PP}}_{\ell}\, + \,\left( 2  - s\dfrac{\alpha_{p_0}(s)}{ \gamma(s)} \right) r\partial_rG^{\mathrm{P}\mathrm{P}}_{\ell}     \right] \, .
\end{aligned}
\end{equation}
Expression \cref{GBB_GPP} is obtained by noting that the term in the square bracket of \cref{preGBB} is,
\begin{align*}
 &-r\partial_r G^{\mathrm{PB}}_{\ell}
 - G^{\mathrm{PB}}_{\ell}\left(  2 -r \dfrac{\alpha_{p_0}(r)}{\gamma(r)}\right)
 =  \dfrac{\sqrt{\ell(\ell+1)}}{\mathsf{F}_{\ell}(s)}
    \left[r\partial_r s\partial_s G^{\mathrm{PP}}_{\ell}\, + \,\left( 2  - s\dfrac{\alpha_{p_0}(s)}{ \gamma(s)} \right) r\partial_rG^{\mathrm{P}\mathrm{P}}_{\ell}     \right.\\
    & \hspace*{2cm} +  \left. \left(  2 -r \dfrac{\alpha_{p_0}(r)}{\gamma(r)}\right)s\partial_s G^{\mathrm{PP}}_{\ell}\, + \, G^{\mathrm{P}\mathrm{P}}_{\ell} \left(  2 -r \dfrac{\alpha_{p_0}(r)}{\gamma(r)}\right)\left( 2  - s\dfrac{\alpha_{p_0}(s)}{ \gamma(s)} \right)    \right] \, . 
\end{align*}
\end{proof}

From expression  \cref{GPP}  of $G^{\mathrm{PP}}_\ell$ in \cref{coeff_orig::prop}
and \cref{G+_phitdphi:def} for $G^+_\ell$, we can write $G^{\mathrm{PP}}_\ell$ as, 
\begin{equation}\label{GPP_sfG}
G^{\mathrm{PP}}_{\ell}(r,s) =  -\mathsf{G}^+_{\ell}(r,s)   \,\dfrac{ \mathfrak{p}(s)}{s^2} \,, \hspace*{0.3cm}
\text{with} \hspace*{0.2cm} \mathsf{G}^+_{\ell} \text{ defined in } \cref{sfGell::def}\,.
\end{equation}
It remains to compute the derivatives of $G^{\mathrm{PP}}_\ell$ in terms of the quantities in \cref{primker}.
We will need the following lemmas.

%
%
%


\begin{lemma}\label{Green3dcoeff::lem}
The quantity $\tfrac{\mathfrak{p}(s)}{s^2}$ defined in \cref{eq:mathfrakp} is independent of $s$, i.e. $\partial_s \tfrac{\mathfrak{p}(s)}{s^2} = 0$. 

\end{lemma}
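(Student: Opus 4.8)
The plan is to show that the logarithmic derivative of $\mathfrak{p}(s)/s^2$ vanishes identically, so that the quantity is constant. Writing out the definition \cref{eq:mathfrakp} together with $\mathfrak{F}=c_0^2\rho_0\mathsf{F}_0$ from \cref{sfFell::def}, I would start from
\[
\frac{\mathfrak{p}(s)}{s^2}=\frac{\mathsf{F}_\ell(s)}{s^2\,\mathcal{W}^+_\ell(s)\,c_0^2(s)\,\rho_0(s)\,\mathsf{F}_0(s)},
\]
so that it suffices to verify
\[
\frac{\mathsf{F}_\ell'}{\mathsf{F}_\ell}-\frac{2}{s}-\frac{\mathcal{W}^{+\prime}_\ell}{\mathcal{W}^+_\ell}-\frac{(c_0^2\rho_0)'}{c_0^2\rho_0}-\frac{\mathsf{F}_0'}{\mathsf{F}_0}=0 .
\]
Every term here is read off directly from the definitions except the derivative of the Wronskian $\mathcal{W}^+_\ell$ of \cref{calW+::def}, which is the one ingredient that requires the differential equation.

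For that I would invoke Abel's identity. Since both $\phi_\ell$ and $\phi^+_\ell$ solve $\Lorigin\phi=0$, writing $\Lorigin$ in the standard second-order form $\tilde q_1\partial_r^2+\tilde q_2\partial_r+\tilde q_3$ of \cref{oldcoeffs} gives $\mathcal{W}^{+\prime}_\ell/\mathcal{W}^+_\ell=-\tilde q_2/\tilde q_1$. Substituting $\tilde q_1=-\mathsf{F}_0/\mathsf{F}_\ell$ and $\tilde q_2=(\alpha_{\gamma p_0}-\tfrac2r)\mathsf{F}_0/\mathsf{F}_\ell+\ell(\ell+1)\mathsf{F}_0'/(\mathsf{F}_\ell)^2$ then yields
\[
\frac{\mathcal{W}^{+\prime}_\ell}{\mathcal{W}^+_\ell}
=\Big(\alpha_{\gamma p_0}-\frac{2}{s}\Big)+\ell(\ell+1)\,\frac{\mathsf{F}_0'}{\mathsf{F}_\ell\,\mathsf{F}_0}.
\]

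The final step is to combine these. By adiabaticity \cref{r_c_increase}, $c_0^2\rho_0=\gamma p_0$, so $(c_0^2\rho_0)'/(c_0^2\rho_0)=(\gamma p_0)'/(\gamma p_0)=-\alpha_{\gamma p_0}$ by the definition \cref{scale_heights::def} of the inverse scale height. Inserting this and the Wronskian expression into the displayed criterion, the two occurrences of $\alpha_{\gamma p_0}$ cancel, as do $+2/s$ and $-2/s$, leaving $\mathsf{F}_\ell'/\mathsf{F}_\ell-\ell(\ell+1)\mathsf{F}_0'/(\mathsf{F}_\ell\mathsf{F}_0)-\mathsf{F}_0'/\mathsf{F}_0$. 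I would then use the algebraic relation $\mathsf{F}_\ell=\mathsf{F}_0-\ell(\ell+1)$ from \cref{sfFell::def}, so that $\mathsf{F}_\ell'=\mathsf{F}_0'$ (as $\ell(\ell+1)$ is $r$-independent); putting the three terms over the common denominator $\mathsf{F}_\ell\mathsf{F}_0$ gives the numerator $\mathsf{F}_0'\big(\mathsf{F}_0-\ell(\ell+1)-\mathsf{F}_\ell\big)=0$. This proves the claim for $\ell>0$, and the case $\ell=0$ is identical but simpler, using $\tilde q_1=-1$, $\tilde q_2=\alpha_{\gamma p_0}-2/r$ and dropping the $\ell(\ell+1)$ terms.

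There is no deep obstacle here, as the argument is essentially bookkeeping; the one point demanding care is Abel's identity, which must be applied to $\Lorigin$ in the standard form of \cref{oldcoeffs} (not the $(r\partial_r)$-form in \cref{notation_modalop}) and with the correct sign $\mathcal{W}^{+\prime}_\ell/\mathcal{W}^+_\ell=-\tilde q_2/\tilde q_1$. Conceptually, the statement is exactly the assertion that $\Lorigin$, once put in Sturm--Liouville (divergence) form, has $P\,\mathcal{W}^+_\ell$ constant, with $s^2 c_0^2\rho_0\mathsf{F}_0/\mathsf{F}_\ell$ serving as the leading coefficient $P$; this invariance is precisely what renders the gluing formula \cref{G+_phitdphi:def} well defined.
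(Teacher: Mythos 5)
Your proposal is correct and follows essentially the same route as the paper's proof: both compute the logarithmic derivative of $\mathfrak{p}(s)/s^2$, obtain the Wronskian term from Abel's identity applied to $\Lorigin$ in the standard second-order form of \cref{oldcoeffs}, and close the computation with adiabaticity $c_0^2\rho_0=\gamma p_0$ together with $\mathsf{F}_\ell-\mathsf{F}_0=-\ell(\ell+1)$. The only cosmetic difference is that the paper differentiates the ratio $\mathsf{F}_\ell/\mathsf{F}_0$ as a single factor while you keep $\mathsf{F}_\ell'$ and $\mathsf{F}_0'$ separate until the final cancellation; your closing Sturm--Liouville interpretation is a nice (correct) gloss not present in the paper.
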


\begin{proof}
We compute the derivatives of terms appearing in the definition 
of  $\mathfrak{p}$. The derivative of $\mathcal{W}^+_\ell$ defined in \cref{calW+::def} was obtained 
using Abel's identity\footnote{This is derived as follows: starting from its 
definition, $\mathcal{W}_\ell\{\phi_\ell,\phi_\ell^+\}(s)
= \phi_\ell(s) \phi_\ell^{+\prime}(s) - \phi_\ell'(s) \phi_\ell^+(s)$, we have, 
\begin{equation}
\begin{aligned}
 \partial_s \mathcal{W}\{\phi_\ell,\phi_\ell^+\}(s) &= \phi_\ell'(s) \phi_\ell^{+\prime}(s) 
                                                     + \phi_\ell(s) \partial_s^2\phi_\ell^{+}(s) 
 - \phi_\ell''(s) \phi_\ell^+(s)
- \phi_\ell'(s) \phi_\ell^{+\prime}(s)\\
&=-\dfrac{ \phi_\ell(s)}{\tilde{q}_1} ( \tilde{q}_2 \phi_\ell^{+'}(s) + \tilde{q}_3 \phi_\ell^{+}(s)    )
 +\dfrac{ \phi_\ell^+(s)}{\tilde{q}_1} (  \tilde{q}_2 \phi_\ell'(s) + \tilde{q}_3 \phi_\ell(s)    )
 = -\dfrac{\tilde{q}_2(s)}{\tilde{q}_1(s)}  \mathcal{W} \{\phi_\ell,\phi_\ell^+\}(s)\,.
 \end{aligned}
\end{equation}
} and the fact that $\phi_\ell$, $\phi^+_\ell$ are solutions 
to \cref{eq:phi_phi_problem}, here we work with the form of $\Lorigin$ 
given in \cref{oldcoeffs::rmk},
\begin{equation}
\begin{aligned}
 &- \dfrac{\partial_s \mathcal{W} \{\phi_\ell,\phi_\ell^+\}(s)}{  \mathcal{W}_\ell\{\phi_\ell,\phi_\ell^+\}(s)} 
 = \dfrac{\tilde{q}_2(s)}{\tilde{q}_1(s)}
 \,, \hspace*{1cm} \partial_s \dfrac{1}{ c_0^2 \rho_0 s^2}
 = \dfrac{1}{ c_0^2 \rho_0 s^2}\left( \alpha_{c_0^2 \rho_0} - \dfrac{2}{s}\right),\\
 & \hspace*{2cm} \text{and} \hspace*{1cm} \partial_s  \dfrac{\mathsf{F}_\ell}{\mathsf{F}_0}
  = \dfrac{ \mathsf{F}_0'}{ \mathsf{F}_0} - \dfrac{\mathsf{F}_\ell \mathsf{F}_0' }{\mathsf{F}_0^2}
  = \dfrac{\mathsf{F}_\ell}{\mathsf{F}_0}
  \dfrac{  \ell(\ell+1) \mathsf{F}'_0}{\mathsf{F}_{\ell} \, \mathsf{F}_0}\,.
\end{aligned}
\end{equation}
For the third identity, we have used  $\mathsf{F}_{\ell} - \mathsf{F}_0 = -\ell(\ell+1)$.
We thus have
\begin{equation}
\partial_s  \dfrac{\mathfrak{p}(s)}{s^2}
 =  \dfrac{\mathfrak{p}}{s^2}\, \left(\dfrac{\tilde{q}_2(s)}{\tilde{q}_1(s)} + \alpha_{c_0^2 \rho_0}-\dfrac{2}{s}
+   \ell(\ell+1)\dfrac{   \mathsf{F}'_0(s)}{\mathsf{F}_{\ell}(s)\, \mathsf{F}_0(s)}\right) \,.
 \end{equation}
The first statement is obtained by substituting in the definition of the coefficients from \cref{oldcoeffs}, 
and using adiabatic condition \cref{assump:g1},
$c^2_0\rho_0 = \gamma p_0$, 
 \begin{equation}
 \dfrac{\tilde{q}_2(s)}{\tilde{q}_1(s)} =  -\alpha_{c^2_0\rho_0 } + \dfrac{2}{s}
\, - \, \ell(\ell+1) \dfrac{\mathsf{F}_0'(s)}{\mathsf{F}_0(s)\, \mathsf{F}_{\ell}(s)}\,.
\end{equation}
\end{proof}

\begin{lemma}\label{Green3dcoeff::lem2}
 As defined in \cref{primker}, the derivatives of $\mathsf{G}^+_\ell$  are related to
$\mathsf{T}_\ell^+$ and $\mathsf{Q}^+_\ell$ as follows,
\begin{subequations}
\begin{align}
r\partial_r \mathsf{G}^+_{\ell}(r,s)\,  &=\,  \mathsf{T}^+_{\ell}(\textcolor{black}{r},s)
\,, \qquad s\partial_s \mathsf{G}^+_{\ell}(r,s)\,  =\,  \mathsf{T}^+_{\ell}(s,\textcolor{black}{r}) \,;\\[0.3em]
 s\partial_s\,  r\partial_r \mathsf{G}^+_{\ell}(r,s)\,&=\,  r\partial_r \,  s\partial_s \mathsf{G}^+_{\ell}(r,s) \, =\,  \mathsf{Q}^+_\ell(r,s)\,  -\,  r^2 \, \mathcal{W}^+_\ell(r) \, \textcolor{black}{\delta(r-s)}\,.
\end{align}
\end{subequations}

\end{lemma}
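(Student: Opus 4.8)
The plan is to differentiate the piecewise-defined kernel $\mathsf{G}^+_\ell$ of \cref{sfGell::def} directly, treating the Heaviside factors distributionally and carefully tracking the $\delta$-contributions produced by $\partial_r \mathrm{H}(r-s) = \delta(r-s)$ and $\partial_r \mathrm{H}(s-r) = -\delta(r-s)$. The only genuine subtlety is bookkeeping of these jump terms, so I organize the argument by order of differentiation.

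First I would establish the two first-order identities. Applying $r\partial_r$ to \cref{sfGell::def} and using the product rule, the terms in which the derivative hits the Heaviside functions combine into $\delta(r-s)\,\bigl[\phi_\ell(s)\phi^+_\ell(r) - \phi_\ell(r)\phi^+_\ell(s)\bigr]$. Since the smooth prefactor of $\delta(r-s)$ may be evaluated on the diagonal $r=s$, where $\phi_\ell(s)\phi^+_\ell(r) = \phi_\ell(r)\phi^+_\ell(s)$, this coefficient vanishes and no $\delta$ survives (equivalently, $\mathsf{G}^+_\ell$ is continuous across $\{r=s\}$). The remaining terms, in which the derivative hits $\phi_\ell$ or $\phi^+_\ell$, reassemble exactly into $\mathsf{T}^+_\ell(r,s)$ of \cref{sfTell::def}, giving $r\partial_r \mathsf{G}^+_\ell = \mathsf{T}^+_\ell(r,s)$. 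The companion identity $s\partial_s \mathsf{G}^+_\ell = \mathsf{T}^+_\ell(s,r)$ follows either from the same computation with the roles of $r$ and $s$ interchanged, or more quickly by noting that $\mathsf{G}^+_\ell$ is symmetric, $\mathsf{G}^+_\ell(r,s) = \mathsf{G}^+_\ell(s,r)$.

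Next I would compute the mixed second derivative by applying $s\partial_s$ to the already-simplified $\mathsf{T}^+_\ell(r,s)$. Here the Heaviside derivatives again produce a jump term, whose coefficient on the diagonal is $r\partial_r\phi_\ell(r)\,\phi^+_\ell(r) - \phi_\ell(r)\,r\partial_r\phi^+_\ell(r) = -r\,\mathcal{W}^+_\ell(r)$, with $\mathcal{W}^+_\ell$ the Wronskian of \cref{calW+::def}. Unlike in the first-order step this coefficient does \emph{not} vanish, and after evaluating the outer factor $s$ on the diagonal against the delta it yields precisely $-r^2\mathcal{W}^+_\ell(r)\,\delta(r-s)$. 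The surviving smooth terms reassemble into $\mathsf{Q}^+_\ell(r,s)$ of \cref{sfQell::def}, so $s\partial_s\,r\partial_r \mathsf{G}^+_\ell = \mathsf{Q}^+_\ell - r^2\mathcal{W}^+_\ell(r)\,\delta(r-s)$.

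Finally, to obtain equality of the two orders of differentiation, I would run the same argument starting from $s\partial_s\mathsf{G}^+_\ell = \mathsf{T}^+_\ell(s,r)$ and applying $r\partial_r$, which produces $\mathsf{Q}^+_\ell(s,r) - s^2\mathcal{W}^+_\ell(s)\,\delta(r-s)$; one then checks that $\mathsf{Q}^+_\ell$ is symmetric and that $s^2\mathcal{W}^+_\ell(s)\,\delta(r-s) = r^2\mathcal{W}^+_\ell(r)\,\delta(r-s)$, again by evaluating the smooth prefactor on the diagonal against $\delta$. The main obstacle is entirely this delta-bookkeeping: correctly diagnosing, at each order, whether the jump contributions cancel (first order) or persist as a Wronskian (second order), and consistently applying the rule $g(r,s)\,\delta(r-s) = g(r,r)\,\delta(r-s)$. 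No analytic input beyond the definitions in \cref{primker} and the continuity/jump structure of $\mathsf{G}^+_\ell$ is required.
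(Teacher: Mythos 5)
Your proposal is correct and follows essentially the same route as the paper's proof: direct distributional differentiation of the piecewise kernel via $(f(x)\mathrm{H}(x-x_0))' = f(x_0)\delta(x-x_0) + f'(x)\mathrm{H}(x-x_0)$, with the $\delta$-terms cancelling at first order (continuity across $r=s$) and combining into $-r^2\,\mathcal{W}^+_\ell(r)\,\delta(r-s)$ at second order. Your use of the symmetry of $\mathsf{G}^+_\ell$ and $\mathsf{Q}^+_\ell$ to get the companion identities is a minor streamlining of the paper's ``repeat the computation in a similar manner,'' not a different argument.
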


\begin{proof}
We next compute 
the derivatives of  $\mathsf{G}^+_{\ell}(r,s)$.
 Using the identity with the distributional derivative of the Heaviside with a smooth function $f$, 
 \begin{equation}(f(x) \mathrm{H}(x-x_0))' \, =\,  f(x_0) \, \delta(x-x_0) \, +\,  f'(x)\,  \mathrm{H}(x-x_0)\,,
 \end{equation}
 we obtain
\begin{equation}
\begin{aligned}
r\partial_r \mathsf{G}^+_{\ell}(r,s)
&= \mathrm{H}(s-r) \, r\partial_r\phi_\ell(r) \,\phi_\ell^+(s) \,\,+ \,\,  \mathrm{H}(r-s) \, \phi_\ell(s)\,  r\partial_r\phi_\ell^{+}(r)\\
& \hspace*{2cm}   -r\, \delta(s-r) \, \phi_\ell(s) \,\phi_\ell^+(s) \,\,+ \,\,  s\, \delta(r-s) \, \phi_\ell(s)\,  \phi_\ell^+(s) \\[0.3em]
&= \mathrm{H}(s-r) \, r\partial_r\phi_\ell(r) \,\phi_\ell^+(s) \,\,+ \,\,  \mathrm{H}(r-s) \, \phi_\ell(s)\,  r\partial_r\phi_\ell^{+}(r) \,.
\end{aligned}
\end{equation}
We next compute the derivatives with respect to $s$ in a similar manner and get, 
\begin{equation}
\begin{aligned}
 s\partial_s \mathsf{G}^+_{\ell}(r,s)
\, &=\,  +\mathrm{H}(s-r) \, \phi_\ell(r) \,s\partial_s\phi_\ell^+(s) \,\,+ \,\,  \mathrm{H}(r-s) \, s\partial_s\phi_\ell(s)\,  \phi_\ell^{+}(r)\,; \\[0.5em]
s\partial_s \, r\partial_r \mathsf{G}^+_{\ell}(r,s)
&= \mathrm{H}(s-r) \, r\partial_r\phi_\ell(r) \,s\partial_s\phi_\ell^+(s) \,\,+ \,\,  \mathrm{H}(r-s) \, s\partial_s\phi_\ell(s)\,  r\partial_r\phi_\ell^{+}(r) \\[0.3em]
&\hspace*{2.5cm} +\underbrace{s^2\delta(s-r) \,\partial_s\phi_\ell(s) \,\phi_\ell^+(s) \,\,- \,\,  s^2\delta(r-s) \, \phi_\ell(s)\,  \partial_s\phi_\ell^{+}(s)}_{- s^2\delta(r-s) \mathcal{W}\{\phi_\ell,\phi_\ell^+\}(s)} \,.
%
\end{aligned}
\end{equation}

\end{proof}

As a result of \cref{Green3dcoeff::lem2}, we obtain the derivatives of $G^{\mathrm{PP}}_\ell$.

\begin{corollary}[Derivatives of $G^{\mathrm{PP}}_{\ell}$]\label{derGPP::cor}
\begin{subequations}
\begin{align}
r\partial_r G^{\mathrm{PP}}_{\ell}(r,s) &=  - \dfrac{\mathfrak{p}(s)}{s^2}\,  r\partial_r \mathsf{G}^+_{\ell}(r,s) 
= - \dfrac{\mathfrak{p}(s)}{s^2}\, \mathsf{T}^+_\ell(r,s) \,;\\
s\partial_s G^{\mathrm{PP}}_{\ell}(r,s)
&= - \dfrac{\mathfrak{p}(s)}{s^2}\,  s\partial_s \mathsf{G}^+_{\ell}(r,s)
 = - \dfrac{\mathfrak{p}(s)}{s^2}\,\mathsf{T}^+_\ell(\textcolor{black}{s},r)   \,; \\
s\partial_s \, r\partial_r G^{\mathrm{PP}}_{\ell}(r,s) &= 
 - \dfrac{\mathfrak{p}(s)}{s^2}\,  s\partial_s \, r\partial_r \mathsf{G}^+_{\ell}(r,s) 
  =  \dfrac{\mathfrak{p}(s)}{s^2}\,  \left(-\mathsf{Q}^+_\ell(r,s) +  r^2  \mathcal{W}^+_\ell(r)  \textcolor{black}{\delta(r-s)} \right)\,.
\end{align}
\end{subequations}

\end{corollary}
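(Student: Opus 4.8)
The plan is to combine three ingredients already established in the excerpt: the factored representation \cref{GPP_sfG}, namely $G^{\mathrm{PP}}_\ell(r,s) = -\mathsf{G}^+_\ell(r,s)\,\tfrac{\mathfrak{p}(s)}{s^2}$; the fact from \cref{Green3dcoeff::lem} that $\tfrac{\mathfrak{p}(s)}{s^2}$ is independent of $s$; and the distributional derivative formulas for $\mathsf{G}^+_\ell$ from \cref{Green3dcoeff::lem2}. The whole corollary then follows by differentiating \cref{GPP_sfG} and substituting, so essentially no new computation is required beyond careful bookkeeping.

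First I would treat the $r$-derivative. Since $\tfrac{\mathfrak{p}(s)}{s^2}$ is a function of $s$ alone, the operator $r\partial_r$ passes through it untouched, giving $r\partial_r G^{\mathrm{PP}}_\ell = -\tfrac{\mathfrak{p}(s)}{s^2}\,r\partial_r\mathsf{G}^+_\ell$; invoking the first identity of \cref{Green3dcoeff::lem2} replaces $r\partial_r\mathsf{G}^+_\ell$ by $\mathsf{T}^+_\ell(r,s)$, which is exactly the claimed formula. For the $s$-derivative the naive product rule produces two terms, $-\tfrac{\mathfrak{p}(s)}{s^2}\,s\partial_s\mathsf{G}^+_\ell$ and $-\mathsf{G}^+_\ell\,s\partial_s\tfrac{\mathfrak{p}(s)}{s^2}$. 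The key point is that the second term vanishes because $\partial_s\tfrac{\mathfrak{p}(s)}{s^2}=0$ by \cref{Green3dcoeff::lem}; this is precisely why that lemma was proved beforehand. The surviving term is then converted using $s\partial_s\mathsf{G}^+_\ell = \mathsf{T}^+_\ell(s,r)$ from \cref{Green3dcoeff::lem2}.

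The mixed derivative is obtained by applying $s\partial_s$ to the already-computed expression $r\partial_r G^{\mathrm{PP}}_\ell = -\tfrac{\mathfrak{p}(s)}{s^2}\,r\partial_r\mathsf{G}^+_\ell$; once more the $s$-independence of $\tfrac{\mathfrak{p}(s)}{s^2}$ lets me pull it outside, leaving $-\tfrac{\mathfrak{p}(s)}{s^2}\,s\partial_s\,r\partial_r\mathsf{G}^+_\ell$, into which I substitute the second identity of \cref{Green3dcoeff::lem2}, namely $s\partial_s\,r\partial_r\mathsf{G}^+_\ell = \mathsf{Q}^+_\ell(r,s) - r^2\mathcal{W}^+_\ell(r)\,\delta(r-s)$. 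Distributing the sign reproduces the stated expression with the Dirac term $+\,r^2\mathcal{W}^+_\ell(r)\,\delta(r-s)$.

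\textbf{Main obstacle:} honestly there is little difficulty here; the only thing to watch is the distributional bookkeeping. Specifically one must check that differentiating the product in $s$ is legitimate even though $\mathsf{G}^+_\ell$ already carries Heaviside jumps, and that the extra product-rule term genuinely drops out rather than contributing a boundary or Dirac contribution. Both are guaranteed because $\tfrac{\mathfrak{p}(s)}{s^2}$ is a smooth (indeed $s$-constant) scalar factor, so multiplying a distribution by it commutes with differentiation in $s$. The Dirac term that appears in the mixed derivative originates entirely from \cref{Green3dcoeff::lem2} and simply rides along unchanged.
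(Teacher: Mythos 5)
Your proposal is correct and takes essentially the same route the paper intends: the corollary is presented there as an immediate consequence of \cref{GPP_sfG}, \cref{Green3dcoeff::lem}, and \cref{Green3dcoeff::lem2}, with the $s$-constancy of $\mathfrak{p}(s)/s^2$ eliminating the product-rule term exactly as you argue. Your sign bookkeeping for the Dirac contribution in the mixed derivative, $-\big(\mathsf{Q}^+_\ell - r^2\mathcal{W}^+_\ell\,\delta(r-s)\big) = -\mathsf{Q}^+_\ell + r^2\mathcal{W}^+_\ell\,\delta(r-s)$, also matches the stated result.
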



\begin{proposition}\label{compoG_reg:prop}
We have the following expressions for the coefficients of the Green's 
tensor in VSH basis in terms of the quantities introduced in \cref{primker},
%
\begin{subequations}
\begin{align}
G^{\mathrm{PP}}_{\ell}(r,s) &=   -\mathsf{G}^+_{\ell}(r,s)\, \dfrac{\mathfrak{p}(s)}{s^2}
\,\, = \,\,  - \mathsf{G}^+_{\ell}(r,s)\, \dfrac{\mathsf{F}_{\ell}(s)}{\mathcal{W}^+_{\ell}(s)\, \mathfrak{F}(s)\, s^2} \,;\label{GPPfinal}\\
G^{\mathrm{BP}}_{\ell}(r,s)
&= \sqrt{\ell(\ell+1)}
 \left[ \mathsf{T}^+_\ell(r,s)
\, + \,  \mathsf{G}^+_\ell(r,s) \left( 2 \,-\, r\dfrac{\alpha_{p_0}(r)}{\gamma(r)}\right) \right] \dfrac{1}{\mathcal{W}^+_\ell(r)\ \mathfrak{F}(r)\, r^2}\,;\label{GPBfinal}\\
G^{\mathrm{P}\mathrm{B}}_{\ell}(r,s) 
  &=  \sqrt{\ell(\ell+1)}
    \left[\mathsf{T}^+_{\ell}(s,r)\, + \, \mathsf{G}^+_{\ell}(r,s) \left( 2  - s\dfrac{\alpha_{p_0}(s)}{ \gamma(s)} \right)    \right]\dfrac{1}{\mathcal{W}^+_{\ell}(s)\, \mathfrak{F}(s)\, s^2}\,;
\end{align}
\end{subequations}

\vspace*{-.4cm}

\begin{subequations}\label{GBBf}
\begin{align}
&\text{and}\hspace*{2cm} G^{\mathrm{BB}}_{\ell} (r,s)= G^{\mathrm{BBreg}}_{\ell}(r,s)\,\,   - \,\, \dfrac{r^2}{\mathfrak{F}(r) \, s^2} \delta(r-s) \,, \hspace*{0.3cm} \text{where} \,,\\
&G^{\mathrm{BBreg}}_{\ell}(r,s)
 =  -\dfrac{\ell(\ell+1)}{\mathsf{F}_{\ell}(r)}
 \left[ \mathsf{Q}^+_{\ell}(r,s) + \left( 2  - s\dfrac{\alpha_{p_0}(s)}{ \gamma(s)} \right) \mathsf{T}^+_{\ell}(r,s)
+ \left(  2 -r \dfrac{\alpha_{p_0}(r)}{\gamma(r)}\right) \mathsf{T}^+_{\ell}(s,r) \right. \nonumber\\
&\left.\hspace*{3cm} + \left(  2 -r \dfrac{\alpha_{p_0}(r)}{\gamma(r)}\right) \left( 2  - s\dfrac{\alpha_{p_0}(s)}{ \gamma(s)} \right)  \mathsf{G}^+_{\ell}(r,s) \right] \dfrac{1}{ \mathcal{W}^+_{\ell}(s) \, \mathfrak{F}(s)\, s^2} \,. \label{GBBregf}
\end{align}
\end{subequations}
Additionally, we have the following symmetry of the kernels in $r$ and $s$,
\begin{equation}
\begin{aligned}
&G^{\mathrm{PP}}_{\ell}(r,s) 
=  G^{\mathrm{PP}}_{\ell}(s,r) 
 \,, \hspace*{0.3cm}  G^{\mathrm{PB}}_{\ell}(r,s) 
 = G^{\mathrm{BP}}_{\ell}(s,r)  ,\\
&\hspace*{1cm} G^{\mathrm{BBreg}}_{\ell}(r,s) = G^{\mathrm{BBreg}}_{\ell}(s,r)\,.
\end{aligned}
\end{equation}

\end{proposition}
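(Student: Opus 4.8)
My plan is to obtain every formula by substituting the already-established ingredients into the four relations of \cref{coeff_orig::prop}. The radial-radial kernel $G^{\mathrm{PP}}_\ell$ is immediate from \cref{GPP_sfG} together with the definition of $\mathfrak{p}$ in \cref{eq:mathfrakp}, which gives the two equivalent forms stated. For the remaining kernels I would feed the derivative formulas of \cref{derGPP::cor} — which express $r\partial_r G^{\mathrm{PP}}_\ell$, $s\partial_s G^{\mathrm{PP}}_\ell$ and $s\partial_s\, r\partial_r G^{\mathrm{PP}}_\ell$ through $\mathsf{T}^+_\ell$, $\mathsf{Q}^+_\ell$ and $\mathsf{G}^+_\ell$ up to the common factor $\mathfrak{p}(s)/s^2$ — into \cref{GBP_GPP}, \cref{GPB_GPP} and \cref{GBB_GPP}.

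The one algebraic device that turns the messy intermediate expressions into the clean stated forms is \cref{Green3dcoeff::lem}: since $\mathfrak{p}(s)/s^2 = \mathsf{F}_\ell(s)/(\mathcal{W}^+_\ell(s)\,\mathfrak{F}(s)\,s^2)$ is independent of its argument, I am free to evaluate it at whichever endpoint cancels the outstanding $\mathsf{F}_\ell$ prefactor. Concretely, for $G^{\mathrm{BP}}_\ell$ the relation \cref{GBP_GPP} carries a $1/\mathsf{F}_\ell(r)$, so I write $\mathfrak{p}/s^2 = \mathsf{F}_\ell(r)/(\mathcal{W}^+_\ell(r)\,\mathfrak{F}(r)\,r^2)$ and the $\mathsf{F}_\ell(r)$ cancels, producing the stated $r$-weight; symmetrically, for $G^{\mathrm{PB}}_\ell$ the $1/\mathsf{F}_\ell(s)$ in \cref{GPB_GPP} is cancelled by keeping $\mathfrak{p}/s^2$ at $s$. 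The analogous cancellation of the $\mathsf{F}_\ell(s)$ in \cref{GBB_GPP} yields the $1/(\mathcal{W}^+_\ell(s)\,\mathfrak{F}(s)\,s^2)$ weight of $G^{\mathrm{BBreg}}_\ell$.

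The only step that requires genuine care — and the main obstacle — is the Dirac distribution in $G^{\mathrm{BB}}_\ell$. Here \cref{derGPP::cor} contributes, besides the regular $-\mathsf{Q}^+_\ell(r,s)$ piece, the singular term $r^2\,\mathcal{W}^+_\ell(r)\,\delta(r-s)$, which must be combined with the explicit Dirac term $-\tfrac{r^2}{\mathsf{F}_\ell(r)\gamma(r)p_0(r)}\tfrac{\delta(r-s)}{s^2}$ already present in \cref{GBB_GPP}. Evaluating coefficients on the support $s=r$ of the delta, using $\mathfrak{F}=\gamma p_0\mathsf{F}_0$ together with the identity $\mathsf{F}_\ell-\mathsf{F}_0=-\ell(\ell+1)$ from \cref{sfFell::def}, I expect the two singular contributions to collapse to the single term $-\tfrac{r^2}{\mathfrak{F}(r)\,s^2}\delta(r-s)$; tracking these factors correctly is the delicate bookkeeping of the proof.

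Finally, the symmetry relations follow by inspection once the explicit forms are in hand. The kernels $\mathsf{G}^+_\ell$ of \cref{sfGell::def} and $\mathsf{Q}^+_\ell$ of \cref{sfQell::def} are manifestly invariant under $r\leftrightarrow s$, while $\mathsf{T}^+_\ell(r,s)$ and $\mathsf{T}^+_\ell(s,r)$ are interchanged along with their accompanying factors $2-r\alpha_{p_0}(r)/\gamma(r)$ and $2-s\alpha_{p_0}(s)/\gamma(s)$. Combined with the constancy of $\mathfrak{p}/s^2$, which lets me rewrite the prefactor of $G^{\mathrm{BBreg}}_\ell$ in the symmetric form proportional to $1/(\mathsf{F}_\ell(r)\,\mathsf{F}_\ell(s))$, the identities $G^{\mathrm{PP}}_\ell(r,s)=G^{\mathrm{PP}}_\ell(s,r)$, $G^{\mathrm{PB}}_\ell(r,s)=G^{\mathrm{BP}}_\ell(s,r)$ and $G^{\mathrm{BBreg}}_\ell(r,s)=G^{\mathrm{BBreg}}_\ell(s,r)$ are read off directly by swapping arguments in the stated expressions.
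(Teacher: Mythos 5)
Your proposal is correct and follows essentially the same route as the paper's proof: substitute \cref{GPP_sfG} and the derivative formulas of \cref{derGPP::cor} into the relations of \cref{coeff_orig::prop}, use the constancy of $\mathfrak{p}(s)/s^2$ (\cref{Green3dcoeff::lem}) to cancel the $\mathsf{F}_{\ell}$ prefactors, collapse the two Dirac contributions via $\mathfrak{F}=\gamma p_0 \mathsf{F}_0$ and $\mathsf{F}_{\ell}-\mathsf{F}_0=-\ell(\ell+1)$, and read off the symmetries from those of $\mathsf{G}^+_{\ell}$, $\mathsf{Q}^+_{\ell}$ together with the constancy of $\mathfrak{p}/s^2$. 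This matches the paper's three-part argument step for step, including the identification of the delta-term bookkeeping as the only delicate point.
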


\begin{proof}
\textbf{Part 1} We have obtained from \cref{GPP_sfG} the expression for $G^{\mathrm{PP}}_\ell$. For the expression of $G^{\mathrm{BP}}_{\ell}$ and $G^{\mathrm{P}\mathrm{B}}_{\ell}$, we start with
 \cref{GBP_GPP,GPB_GPP} and substitute in the expression for derivatives of $G^{\mathrm{PP}}_\ell$ given in \cref{derGPP::cor}.
 Obtaining the expression for $G^{\mathrm{BP}}_{\ell}$ is a bit less straightforward. We have,
\begin{equation}
\begin{aligned}
G^{\mathrm{BP}}_{\ell}(r,s)
&  =\dfrac{ \sqrt{\ell(\ell+1)}}{\mathsf{F}_{\ell}(r)}
 \left[ \mathsf{T}^+_\ell(r,s)
\, + \,  \mathsf{G}^+_\ell(r,s)\left( 2 \,-\, r\dfrac{\alpha_{p_0}(r)}{\gamma(r)}\right) \right] \dfrac{\mathfrak{p}(s)}{s^2}\\
&= \dfrac{ \sqrt{\ell(\ell+1)}}{\mathsf{F}_{\ell}(r)}
 \left[ \mathsf{T}^+_\ell(r,s)
\, + \,  \mathsf{G}^+_\ell(r,s)\left( 2 \,-\, r\dfrac{\alpha_{p_0}(r)}{\gamma(r)}\right) \right] \dfrac{\mathfrak{p}(r)}{r^2}\,.
\end{aligned}
\end{equation}
To obtain \cref{GPBfinal}, we use the constancy of
$s\mapsto \tfrac{\mathfrak{p}(s)}{s^2}$ from \cref{Green3dcoeff::lem}, i.e.,
$\tfrac{\mathfrak{p}(s)}{s^2}  =  \tfrac{\mathfrak{p}(r)}{r^2}$. 
Then simplification comes from the definition of $\mathfrak{p}$ given in \cref{eq:mathfrakp}.

\medskip

\noindent \textbf{Part 2} We next obtain the expression \cref{GBBf} for $G^{\mathrm{BB}}_{\ell}$. We start with the expression for $G^{\mathrm{BB}}_{\ell}$ 
given in \cref{GBB_GPP} and decompose this expression as 
$G^{\mathrm{BB}}_{\ell}
 =  I_1 + I_2 +  I_3$, with
\begin{equation}
\begin{aligned}
I_1&=  \dfrac{ \ell(\ell+1)}{\mathsf{F}_{\ell}(r)\,\mathsf{F}_{\ell}(s)}
    r\partial_r s\partial_s G^{\mathrm{PP}}_{\ell}  \hspace*{0.5cm} \, \textcolor{black}{-} \,  \dfrac{r^2}{\mathsf{F}_{\ell}(r)\, \gamma(r)\, p_0(r)} \dfrac{\delta(r-s)}{s^2}\,;\\
I_2 & = \dfrac{ \ell(\ell+1)}{\mathsf{F}_{\ell}(r)\,\mathsf{F}_{\ell}(s)}
    \left[    \,\left( 2  - s\dfrac{\alpha_{p_0}(s)}{ \gamma(s)} \right) r\partial_rG^{\mathrm{P}\mathrm{P}}_{\ell}     
\,\,+  \,\, \left(  2 -r \dfrac{\alpha_{p_0}(r)}{\gamma(r)}\right)s\partial_s G^{\mathrm{PP}}_{\ell}\right]\,;\\
I_3 & = \dfrac{ \ell(\ell+1)}{\mathsf{F}_{\ell}(r)\,\mathsf{F}_{\ell}(s)}G^{\mathrm{P}\mathrm{P}}_{\ell} \left(  2 -r \dfrac{\alpha_{p_0}(r)}{\gamma(r)}\right)\left( 2  - s\dfrac{\alpha_{p_0}(s)}{ \gamma(s)} \right) \,.
\end{aligned}
\end{equation}
Term $I_3$ and $I_2$ are readily transformed by using \cref{GPPfinal,derGPP::cor} to give the second to last term in square bracket of $G^{\mathrm{BBreg}}_\ell$. To rewrite $I_1$,  we start by substituting in the expression of $r\partial_r s\partial_s G^{\mathrm{PP}}_{\ell} $ from \cref{derGPP::cor},
\begin{equation}
\begin{aligned}
I_1
&=- \dfrac{ \ell(\ell+1) \mathfrak{p}(s)}{s^2\, \mathsf{F}_{\ell}(r)\,\mathsf{F}_{\ell}(s)}\mathsf{Q}^+_\ell(r,s) \, +\,
\dfrac{ \ell(\ell+1)  \mathfrak{p}(s)s^2\,  \mathcal{W}_\ell^+(s)}{s^2\, \mathsf{F}_{\ell}(r)\,\mathsf{F}_{\ell}(s)} \delta(r-s) 
 \textcolor{black}{-} \, \, \dfrac{r^2}{\mathsf{F}_{\ell}(r)\, \gamma(r)\, p_0(r)} \dfrac{\delta(r-s)}{s^2}\\
 &= - \dfrac{ \ell(\ell+1) \mathfrak{p}(s)}{s^2\, \mathsf{F}_{\ell}(r)\,\mathsf{F}_{\ell}(s)}\mathsf{Q}^+_\ell(r,s) \,\, - \,\,\dfrac{\delta(r-s)}{\gamma(r) p_0(r) \mathsf{F}_0(r)}\,.
\end{aligned}
\end{equation}
The  terms involving $\delta(r-s)$ are simplified by
using   $\frak{p} \mathcal{W}_\ell^+ = \tfrac{\mathsf{F}_\ell}{ \gamma p_0 \mathsf{F}_0}$ and $\mathsf{F}_{\ell} = \mathsf{F}_0 - \ell(\ell+1)$, thus
\begin{equation}
\begin{aligned}
& \dfrac{ \ell(\ell+1) \mathfrak{p}(s)\,s^2\,   \mathcal{W}_\ell^+(s) }{\mathsf{F}_{\ell}(r)\,\mathsf{F}_{\ell}(s)}  \delta(r-s) \,\,\textcolor{black}{-} \, \, \dfrac{r^2}{\mathsf{F}_{\ell}(r)\, \gamma(r)\, p_0(r)} \dfrac{\delta(r-s)}{s^2}\\
& \hspace*{1cm} =\,\, \delta(r-s) \dfrac{\ell(\ell+1)  - \mathsf{F }_0(r)}{\gamma(r) \, p_0(r)\, \mathsf{F}_0(r)\, \mathsf{F}_{\ell}(r)} \,\, = \,\, - \dfrac{\delta(r-s)}{\gamma(r) \, p_0(r) \, \mathsf{F}_0(r)}\,.
\end{aligned}
\end{equation}

%
%

\medskip

\noindent  \textbf{Part 3 - Symmetry relation} 
The symmetry of $\mathsf{Q}_{\ell}^+$ and $\mathsf{G}^+_{\ell}$ provide readily the relation for $G^{\mathrm{PB}}_\ell$ and $G^{\mathrm{BP}}_\ell$. Those of $G^{\mathrm{BBreg}}$ and $G^{\mathrm{PP}}$ use additionally the constancy $\tfrac{ \mathfrak{p}(s)}{s^2}$ which gives 
\begin{equation}
\dfrac{1}{\mathsf{F}_{\ell}(r)}
\dfrac{1}{ \mathcal{W}^+_{\ell}(s) \, \mathfrak{F}(s)\, s^2} 
 \,\,=\,\, \dfrac{1}{\mathsf{F}_{\ell}(r)\, \mathsf{F}_{\ell}(s)} \dfrac{\mathfrak{p}(s)}{s^2}\,.
\end{equation}
\end{proof}

 \paragraph{Revisited formal expansion}
 In light of \cref{coeff_orig::prop,compoG_reg:prop}, 
 in separating out the Dirac distribution in $G^{\mathrm{BB}}_\ell$, 
 we can write the solution $\xib$ to $\boldsymbol{\mathcal{L}}\,\xib
 =  \mathbf{f} $ in $ \mathbb{R}^3$ as
 \begin{equation}
 \xib = \Rsun^2 \left< \mathbb{G}^+_\mathrm{SG} , \mathbf{f}\right>\,, \hspace*{0.2cm}\text{ with}\hspace*{0.2cm}
\mathbb{G}^+(\mathbf{x},\mathbf{y}) \, = \, \mathbb{G}^+_{\mathrm{reg}}(\mathbf{x},\mathbf{y})
\, - \,  \dfrac{\lvert \mathbf{x}\rvert^2}{\textcolor{black}{ 
   \mathfrak{F}(\lvert \mathbf{x}\rvert)}}
   \delta(\mathbf{x}-\mathbf{y})\,   \mathrm{P}_\parallel\,, 
   \end{equation}
   where $\mathrm{P}_\parallel $ is the projection, $\mathrm{P}_\parallel 
 = \mathbb{Id} - \mathbf{e}_r(\hat{\mathbf{x}}) \otimes \mathbf{e}_r(\hat{\mathbf{y}})$, and 
 \begin{equation} \mathbb{G}^+_{\mathrm{reg}}  = \mathbb{G}^+_{\perp} + \mathbb{G}^+_{\mathrm{h-reg}}\,,
 \end{equation} having the following formal expansions in VSH tensor basis,
 \begin{equation}
 \begin{aligned}
\mathbb{G}^+_{\perp}(\mathbf{x},\mathbf{y})
&= \sum_{(\ell,m)} 
  \textcolor{black}{G^{\mathrm{PP}}_{\ell}(\lvert\mathbf{x}\rvert,\lvert\mathbf{y}\rvert)} \,
\mathrmb{P}^m_{\ell}(\widehat{\mathbf{x}})  \otimes\overline{\mathrmb{P}^m_{\ell}(\widehat{\mathbf{y}})} 
+ \sideset{}{'}\sum_{(\ell,m)} 
\textcolor{black}{G^{\mathrm{BP}}_{\ell}(\lvert\mathbf{x}\rvert,\lvert\mathbf{y}\rvert)} \,\, 
\mathrmb{B}^m_{\ell}(\widehat{\mathbf{x}})  \otimes\overline{\mathrmb{P}^m_{\ell}(\widehat{\mathbf{y}})}\,;\\ 
\mathbb{G}^+_{\mathrm{h-reg}}(\mathbf{x},\mathbf{y})&= 
 \sideset{}{'}\sum_{(\ell,m)} 
\textcolor{black}{G^{\mathrm{PB}}_{\ell}(\lvert\mathbf{x}\rvert,\lvert\mathbf{y}\rvert)} \, \,
\mathrmb{P}^m_{\ell}(\widehat{\mathbf{x}})  \otimes\overline{\mathrmb{B}^m_{\ell}(\widehat{\mathbf{y}})}
  + \sideset{}{'}\sum_{(\ell,m)} 
\textcolor{black}{G^{\mathrm{BBreg}}_{\ell}(\lvert\mathbf{x}\rvert,\lvert\mathbf{y}\rvert)} \, 
\mathrmb{B}^m_{\ell}(\widehat{\mathbf{x}})  \otimes\overline{\mathrmb{B}^m_{\ell}(\widehat{\mathbf{y}})} 
\,. 
 \end{aligned}
 \end{equation}

\subsection{Algorithm for constructing Green's tensor using assembling formula}
\label{subsection:assembling-algorithm}

We propose \cref{algorithm:assemble} which implements the 
result of \cref{compoG_reg:prop} to compute the directional 
kernels of the Green's tensor $\mathbb{G}^+$,
\begin{equation}\label{quantocompute_rep}
G^{\mathrm{PP}}_\ell, \quad G^{\mathrm{PB}}_\ell, \quad
G^{\mathrm{BP}}_\ell, \quad G^{\mathrm{BB}}_\ell\,.
\end{equation}
 Recall that \cref{compoG_reg:prop} is developed from the assembling formula \cref{G+_phitdphi:def}.
 For the scalar equation, the Green's kernel is a scalar quantity whose coefficient in harmonic basis 
 is computed by working directly with \cref{G+_phitdphi:def}, cf. \cite[Algorithm 3.2]{barucq2020efficient}. 
 On the other hand, the kernel for the vector equation, $\mathbb{G}$, has four components with different levels of singularity. It is thus here that one fully exploits the benefits of the `assembling' idea, now conveyed by \cref{compoG_reg:prop}.

For numerical resolution, we need the following modification from previous sections. 
We need to `regularize' the singularities of the coefficient of $\Lorigin$, which is 
done by multiplying both sides of equation $\Lorigin w = f$ by $\mathsf{F}_\ell^2$ 
for $\ell>0$ and by $r^2$ for $\ell=0$, the coefficients of the ODE employed 
in \cref{algorithm:assemble} are thus,
\begin{equation}
 q_i =r^2 \hat{q}_i\,, \hspace*{0.2cm}\text{for } \,\, \ell=0 \,, \hspace*{1cm} q_i = \mathsf{F}_\ell^2\hat{q}_i  \,, \hspace*{0.2cm} \text{for } \,\, \ell>0 \,,
 \qquad\qquad i\,=\,\{1,\,2,\,3\}.
\end{equation}
The regular-at-0 boundary condition \cref{regzerocond::def} at $r=0$ is replaced 
by an equivalent one independent of $\ell$, $ru'=0$ at $r=0$, 
cf. \cite[Section 5.1]{barucq:hal-03406855}. 
The outgoing-at-infinity condition \cref{outinfcond::def} is replaced by the ABC 
conditions and placed at a specific height in the low atmosphere. 
The ABC conditions are constructed for the Schr\"odinger operator $\Lconjug$ 
in \cref{section:rbc}, and those for the original operator $\Lorigin$ are obtained 
via the change of variable \cref{conjmodalODErhs} which relates the two formulations.
Specifically, for $u = \frak{J}_\ell\, \tilde{u}$,
\begin{equation} \label{eq:rbc-equivalent:origin-conj}
u' = \mathcal{Z}_\mathrm{abc} u \hspace*{0.3cm} \Leftrightarrow  \hspace*{0.3cm} \frak{J}_\ell\, ru' = r\,\mathcal{N}_\mathrm{abc}\, u \,, \,\, \text{with } \hspace*{0.1cm} \mathcal{N}_\mathrm{abc}:=
 \mathcal{Z}_\mathrm{abc}\, \frak{J}_\ell  + \frak{J}'_\ell \,.
\end{equation}
Note that \cref{algorithm:assemble} is written with boundary 
condition \cref{eq:rbc-equivalent:origin-conj} at $r=\rmax$, 
however it can work with other types of boundary condition.
We refer to \cite[Section 5]{barucq2020outgoing} for more details on implementation.

\begin{remark}
One can work either with a first-order formulation of the original ODE or with its Schr\"odinger form, cf. \cite[Appendix A]{barucq:hal-03406855} for implementation with the latter. We will investigate the numerical stability of both forms in \cref{section:numerics:original-and-conjugated}.
\end{remark}

\paragraph{Features of the algorithm}
 Employing \cref{compoG_reg:prop} to compute the Green's tensor has the following advantages:
\begin{enumerate} \setlength{\itemsep}{-2pt}
  \item The singularity of the Dirac is  avoided, thus the need to employ special techniques such as mesh refinement or singularity extraction to maintain accuracy is removed;
\item Since one works with regular solutions, the singularity at equal height of source and receiver, $r=s$, is exactly described by the Heaviside distribution; 
  \item From just two resolutions (the regular-at-0 solution $\phi$ and the outgoing-at-infinity $\phi^+$), all components of $\mathbb{G}$, i.e., all coefficients  \cref{quantocompute_rep}, are given for any $r,s\in (\epsilon,\rmax)$ with $\epsilon \ll 1$ and $\rmax > \ra$.
\end{enumerate}
 Moreover, using first-order formulation discretized with HDG method further brings the following additional benefits:
\begin{enumerate} \setlength{\itemsep}{-2pt}
  \item Without post-processing, each resolution gives both the primal 
        unknown and its derivative, both of which are needed to compute 
        the quantities in \cref{primker} and in \cref{compoG_reg:prop};
  \item With HDG method, the global discretized system is only in terms of one 
        unknown (despite solving a first-order system).
\end{enumerate}

\begin{algorithm}[ht!]
\caption{Computation of the Green's kernels for any position $r$, $s$ in
         interval $(\ra,\rb)$ at frequency $\omega$ and mode $\ell$ 
         using \cref{compoG_reg:prop}. 
         }
\label{algorithm:assemble}
\begin{algorithmic}
\STATE \noindent \textbf{1a.} 
       Compute $\phi_{\ell} := \mathfrak{I}(\rb)w$ and $\partial_r \phi_{\ell} \,:=\, \mathfrak{I}(\rb)v/r$,
       where $(w, \, v)$ solve,
       \vspace*{-0.5em}
       \begin{equation} \label{eq:first-order-assemble_1}
       \left\lbrace \begin{aligned}
       r \, q_1(r)  \partial_r v(r) \,+\, q_2(r) \, v(r) \,+\, q_3(r)\, w(r) & \,=\, 0 
                                    \,,\qquad\qquad  r\in(0,\rb), \\
       r \partial_r w \,-\, v & \,=\, 0 \,,\qquad\qquad  r\in(0,\rb), \\
       v\mid_{r=0}   \,= \, 0\, , \qquad \qquad w\mid_{r=\rb} \, =\, 1 \,. 
       \end{aligned} \right. \end{equation}

\STATE \noindent \textbf{1b.} 
       Compute $\phi_\ell^+ := \mathfrak{I}(\ra) \, w$ and 
       $\partial_r \phi_\ell^+ \,:=\,\mathfrak{I}(\ra) v/r$ 
       where $(w, \, v)$ solve,
       \vspace*{-0.5em}
       \begin{equation} \label{eq:first-order-assemble_2}
       \left\lbrace \begin{aligned}
       r \, q_1(r)  \partial_r v(r) \,+\, q_2(r) \, v(r) \,+\, q_3(r)\, w(r) & \,=\, 0 
                                    \,,\qquad\qquad  r\in(\ra,\rmax), \\
       r \partial_r w \,-\, v & \,=\, 0 \,,\qquad\qquad  r\in(\ra,\rmax), \\
       w\mid_{r=\rb} \, =\, 1 \,, \qquad
       \Big(\mathfrak{I}_\ell \, v \,-\, r\,\mathcal{N} \, w \Big)_{r=\rmax} = 0. 
       \end{aligned} \right. \end{equation}

  \STATE\textbf{2.} 
         Compute Wronskian function $\wronskian$ and $\frak{F}$ with $\Fz$ 
         from \cref{sfFell::def},
        \begin{equation}
        \wronskian(s) = \phi_\ell(s) \partial_s\phi_\ell^+(s) - \phi_\ell^+(s)\partial_s\phi_\ell(s)\,,
        \hspace*{0.3cm}
        \mathfrak{F}(r) = c_0^2(r) \rho_0(r) \Fz(r) \,.
        \end{equation}

 \STATE \textbf{3.} Compute 
   $\mathsf{G}^+_{\ell}$, $\mathsf{T}^+_{\ell}$ , $\mathsf{Q}^+_{\ell}$ using 
   \cref{primker} with $\phi_\ell$, $\phi_\ell^+$ and their derivatives.

 \STATE \textbf{4.} Compute kernels $\GPP$, $\GBP$ and $\GBB$ using \cref{compoG_reg:prop} and $\GPB(r,s) = \GBP(s,r)$ .

\end{algorithmic}
\end{algorithm}

\begin{remark}[Dirac-source approach]\label{App1_vector::rmk} 
We highlight here the difficulties encountered if one tries to  compute components \cref{quantocompute_rep} by solving directly the modal equations with a Dirac-type source,  in a similar vein to \cite[Algorithm 3.1]{barucq2020outgoing}. 

\begin{itemize}[leftmargin = *]\setlength{\itemsep}{-1pt}
\item At first glance, one can solve directly for components \cref{quantocompute_rep} 
      by employing \cref{abGPPB} of \cref{coeff_orig::prop}. Specifically, 
for a given source height $s_0 $, 
one solves  $\Lorigin u = \frak{f}$ with $\frak{f}=\frak{f}^P$ and $\frak{f}^B$ respectively, defined as images of mapping \cref{origmodalODErhs} with
 $(f_\ell= \delta(r-s_0), g_\ell=0) \mapsto \frak{f}^\mathrm{P}$
 and $(f_\ell= 0$, $g_\ell=\delta(r-s_0)) \mapsto \frak{f}^\mathrm{B}$.
Working with a first-order formulation, resolution with $\frak{f}^P$ would give $r\mapsto u_\mathrm{P}(r;s_0)$ and its derivative. From the primal solution, one obtains $G^\mathrm{PP}$; together with its derivative, we compute $G^{\mathrm{BP}}$ by using  \cref{intb_green}. In the same manner, resolution with $\frak{f}^B$ would give $G^{\mathrm{PB}}$ and $G^{\mathrm{BB}}$. However,  with $G^{\mathrm{PP}}$ containing a Dirac, cf. \cref{GBB_GPP} of \cref{coeff_orig::prop}, this approach is numerically unstable without singularity extraction.

\item One can attempt to make the above approach viable by first 
  rewriting \cref{GPB_GPP}--\cref{GBB_GPP} (which are currently in terms of $G^\mathrm{PP}_\ell$) in terms of $G^+_\ell$, by using the relation \cref{GPP} between these two kernels. The resulting expressions give $G^\mathrm{PP}_\ell, G^\mathrm{PB}_\ell,G^\mathrm{BP}_\ell, G^\mathrm{BBreg}_\ell$ in terms of $s\partial_s$ , $r\partial_r$ , $r\partial_r s\partial_s$ of $G^+_\ell$.
 Next, for each source height $s_0$, we solve $\Lorigin G^+_\ell=\delta(r-s_0)$ for 
 $r\mapsto G^+_\ell(\cdot,s_0)$ and $ \partial_r G^+_\ell(\cdot,s)$. 
 Its derivatives $s\partial_s$ and $s\partial_s r\partial_r$  could be computed by finite-differences.

\end{itemize}

This approach incurs the inherent disadvantages as pointed in \cite{barucq2020outgoing} for a scalar problem:
(1) each resolution giving   value of $G^+_\ell$ associated with one source height, 
(2) having to deal with singularity of the Dirac source, 
(3) losing accuracy near equal source and receiver height, $r=s$. 
This loss of accuracy now poses a bigger problem for the vector equation
as the derivatives $s\partial_s$ and $s\partial_s r\partial_r$ of a solution which 
approximates a Heaviside 
are computed in a post-processing step (e.g., with finite differentiation). 
This problem is remedied in \cref{compoG_reg:prop,algorithm:assemble} 
by prescribing analytically the singularities. 
\end{remark}

\section{Absorbing boundary conditions (ABC)}
\label{section:rbc}

In this section, we  adapt the approach of \cite{barucq2018atmospheric,fournier2017atmospheric,barucq2020efficient} 
and apply it to the modal operator $\mathcal{L}_\ell=-\partial^2_r + V_\ell$ 
in \cref{notation_modalop} in order to construct boundary conditions to approximate 
the outgoing-at-infinity condition \cref{outinfcond::def}. These comprise 
of a nonlocal boundary condition and its approximations that are local in 
$\ell(\ell+1)$.
The second group represents conditions that can be implemented readily in 
higher dimensions (and in frequency domain) by replacing $\ell(\ell+1)$ by 
$\Delta_{\mathbb{S}^2}$. Compared to the procedure taken in 
\cite{barucq2018atmospheric,fournier2017atmospheric,barucq2020efficient} for 
the scalar equation, dealing with $V_\ell$ requires more care due to the 
presence of propagative region in the atmosphere for 
$(\omega,\ell)$ below
the Lamb frequency, i.e. $\omega \leq \mathcal{S}_\ell$, cf. \cref{Lamb_freq::def}.  
As a result of this region, unlike for the scalar case, solutions
converge slowly in certain region to the 
oscillatory behavior at infinity prescribed by $e^{\ii \mathsf{k}_a r}$. 
For regions bearing similar features as in the scalar case, the construction of ABC 
remains more technical for the vector case.
For the scalar equation, explicit expression of the modal exterior 
Dirichlet-to-Neumann operator exists (in terms of Whittaker function), 
cf. \cite{barucq2020outgoing,barucq2020efficient}; secondly, the 
construction of local families is facilitated by having a potential 
polynomial in $\ell(\ell+1)$.

We recall the form of $V_\ell$ \cref{Vell::def} in the atmosphere: 
\begin{equation}\label{Vellatmo}
\text{for $r\geq \ra$, } \quad V_\ell(r) = - \mathsf{k}_\mathrm{a}^2  +
\dfrac{\phi_0''(r)}{c_0^2(r)}  - \dfrac{\upeta_\mathrm{a}}{r}  + \dfrac{\upnu^2_\ell(r) - \tfrac{1}{4}}{r^2}\,.
\end{equation}
This is due to the constancy \cref{specialfeaturesatmo} of certain coefficients in this region; specifically, for $r\geq r_\mathrm{a}$, 
\begin{equation}\label{eq:atmospheric-quantities-k-eta}
k_0^2 (r)= k^2_a := \Rsun^2 \dfrac{\sigma^2}{c^2_a}\,, \hspace*{0.5cm} \mathsf{k}^2(r) = \mathsf{k}^2_\mathrm{a}: =  k^2_{a} -\dfrac{\alatmo^2}{4}  \,, \hspace*{0.5cm}  \upeta (r) = \upeta_\mathrm{a} := \dfrac{\alatmo}{\adatmo}(2-\adatmo)\,.
\end{equation}
To highlight the new features of the vector equation, we will compare $V_\ell$ with 
the potential of the modal Schr\"odinger ODE of the scalar wave equation, cf. \cite{barucq2020outgoing,barucq2020efficient},
 \begin{equation}\label{scalarSODE}
\text{scalar problem:}\quad
-\partial_r^2 + V_\ell^\mathrm{scalar}\,, \hspace*{0.2cm} \text{with} \hspace*{0.2cm} V_\ell^\mathrm{scalar}(r)= - \mathsf{k}_{\mathrm{a}}^2 + \dfrac{\alatmo}{r} + \dfrac{\ell(\ell+1)}{r^2} \,, \quad r\geq \ra \,.
\end{equation}   In \cref{signV::subsec}, we will start by investigating the sign of $V_\ell$ in order to choose the correct square root branch used to define the nonlocal BC in \cref{NLBC::def}, 
and secondly construct its square root approximations 
that are local in $\ell(\ell+1)$ in \cref{sqrtABC::subsec}.
We will employ the following square root conventions,
\begin{equation}\label{sqrt::def}
\sqrt{z} = \lvert z\rvert e^{\mathrm{i} \tilde{\Theta}(z)/2}
\, \hspace*{0.1cm} \text{with} \hspace*{0.1cm} \tilde{\Theta}(z) \in [0,2\pi)\,, \hspace*{0.7cm}
 (z)^{1/2} = \lvert z\rvert e^{\mathrm{i}\Theta(z)/2}
\, \hspace*{0.1cm} \text{with} \hspace*{0.1cm}  \Theta(z)\in(-\pi,\pi]\,. 
 \end{equation}
 
\subsection{Potential for solar model \texttt{S} }\label{signV::subsec}

\paragraph{Behavior at infinity}
From \cite{barucq2021outgoing}, 
the oscillatory behavior as $r\rightarrow \infty$
is given, modulo higher terms, by $e^{\pm\ii\mathsf{k}_\mathrm{a} r}$ 
for all $(\omega,\ell)$ (with $+$ 
designated as $\mathsf{k}_a$-outgoing), since 
\begin{equation}\label{infbehavior}
\text{for fixed }\hspace*{0.2cm} (\omega,\ell): \hspace*{0.5cm}  -V_\ell\rightarrow  \mathsf{k}_a^2, \hspace*{0.3cm} \text{as}  \hspace*{0.2cm}r\rightarrow \infty \,.
\end{equation}
Additionally, the zero of the real part of $\mathsf{k}_a^2$ 
defines the atmospheric cut-off frequency $\omega_t$, 
 \begin{equation}\label{awc::def}
 \mathrm{Re}\, \mathsf{k}^2_\mathrm{a} >  0 \hspace*{0.5cm}\Leftrightarrow \hspace*{0.5cm}  \omega > \omega_t \,, \hspace*{0.2cm} \text{with } \omega_t :=\dfrac{\alatmo c_\mathrm{a}}{2\Rsun} 
 = \lim_{r\rightarrow \infty} \omega_c   \sim 2\pi \times 5.2 \,\si{\milli\Hz}\,.
 \end{equation}
 We also note for the sign of the imaginary part of $\mathsf{k}_a^2$,
 \begin{equation}  \label{signatt::def}
 \mathrm{Im}\, \mathsf{k}_\mathrm{a}^2 = 2\ii \omega \Gamma > 0 \,, \hspace*{0.5cm}\text{and} \hspace*{0.3cm} \mathsf{k}_\mathrm{a}:=\sqrt{\mathsf{k}_\mathrm{a}^2} \,.
 \end{equation}
This explains the choice of square root $\sqrt{\cdot}$ \cref{sqrt::def} 
and that outgoing waves are represented with the $+$ sign.
We also introduce the frequency $N_{\mathrm{rad}}$ 
which is the maximum of the buoyancy frequency $N$ 
\cref{Buoy_freq::def} in the interior of 
the Sun up the tacholine (the transitional region 
between the radiative and convective zone in the interior 
of the Sun),
 \begin{equation}\label{Buoyrad_freq::def}
  N_{\mathrm{rad}} := \max_{r\in [0,0.7]}  N(r) \, , 
  \hspace*{1cm} \dfrac{N_{\mathrm{rad}}}{2\pi}\sim \hspace*{0.2cm}0.47\, \si{\milli\Hz}\,.
 \end{equation}
 
\paragraph{Sign of $V_\ell$}
In \cref{fig:potential:main01}, we show the sign of the real part
of $V_\ell$ and $V_\ell^\mathrm{scalar}$ at attenuation $\Gamma/(2\pi)=20$\si{\micro\Hz} 
for three frequencies: \num{0.2} (below $ N_{\mathrm{rad}} $), \num{2} (below cut-off $\omega_t$) 
and \num{7}\si{\milli\Hz} (above cut-off $\omega_t$),
while in \cref{fig:potential:main02}, 
we fix a height $r=\num{1.001}$ (where ABC will be imposed) and show the sign 
of both real and imaginary part for all $(\ell,\omega)$. 
The two colors in these figures distinguish two behaviors of a wave acting like $\exp(\ii r \,\sqrt{-V_\ell})$:
in blue-colored regions where $V_\ell < 0$, waves are oscillating (propagating) as $r$ increases, 
and in red regions where  $V_\ell > 0$, waves are evanescent. 
Let us first discuss the behaviour of the solution in the atmosphere depending on the frequency:

\medskip

\noindent $\bullet\,\,$ {$\omega > \omega_t$} (defined in \cref{awc::def}): $\Vscalar$ and $ V_\ell$ are both negative and the solution is propagating in the atmosphere (case 7\si{\milli\Hz} in \cref{fig:potential:main01}).  

\medskip

\noindent $\bullet\,\,$ {$N_{\rm{rad}} < \omega < \omega_t$}: $V_\ell$ and $\Vscalar$ also have the same sign but both potentials are now positive, thus the solution is evanescent in the atmosphere (case 2\si{\milli\Hz} in \cref{fig:potential:main01}).

\medskip

\noindent $\bullet\,\,$ {$\omega < N_{\rm{rad}}$}:
this group presents features 
which distinguish the vector equation from the scalar one (case \num{0.2} \si{\milli\Hz}).
In the atmosphere, the sign of $V_\ell$ is not uniform with respect to 
      $\ell$, as seen in  \cref{fig:potential:main01} with a change of sign for  $\ell > 125$ at the surface. 
      This behavior is observed for $(\omega,\ell)$ with
      $\omega \leq \mathcal{S}_\ell(1.001)$ \cref{Lamb_freq::def}, 
as shown in \cref{fig:potential:main02} where 
$\mathrm{Re}(\,V_\ell(1.001;\omega))$ is negative (in blue) and is thus of opposite sign with
 $\mathrm{Re}(\,\Vscalar)$. However, due to \cref{infbehavior}, 
 this region will eventually turn positive for large enough $r$.
In another word, for these $(\omega,\ell)$, waves at height $r=1.001$ have 
not `converged' to the behavior at infinity described by $e^{- \ii \mathsf{k}_a r}$, while this `convergence' is already observed for $\omega > \mathcal{S}_\ell$ in low atmospheric height, see also \cref{inCZ::rmk}.

\graphicspath{{figures/potential_map/}}
\begin{figure}[ht!] \centering
  \subfloat[][Sign of  $\mathrm{Re}\,(V_\ell)$.]
             {\includegraphics[]{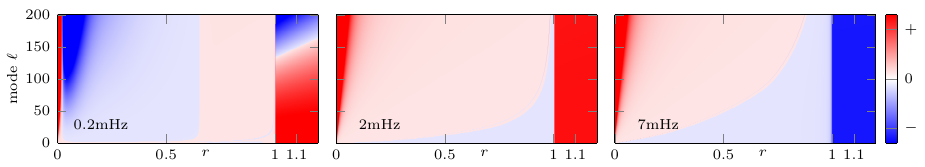} \label{fig:potential:main01a}}
              \vspace*{-1.0em}

  \subfloat[][Sign of  $\mathrm{Re}\,(\Vscalar)$.]
             {\includegraphics[]{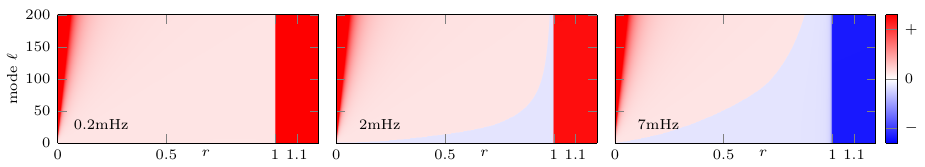}}

  \caption{Comparison between the signs of the real parts of $V_\ell$  \cref{Vell::def}
           and $\Vscalar$ \cref{scalarSODE} as functions of $(\ell,r)$ 
           at attenuation $\Gamma/(2\pi) = 20\si{\micro\Hz}$ for 
           frequencies \num{0.2}, \num{2} and \num{7}\si{\milli\Hz}.}
  \label{fig:potential:main01}
\end{figure}

\begin{figure}[ht!] \centering
  \subfloat[][Sign of $\mathrm{Re}\,V_\ell$ and $ \mathrm{Im}\, V_\ell$.]
             {\makebox[.45\linewidth][c]{\includegraphics[]{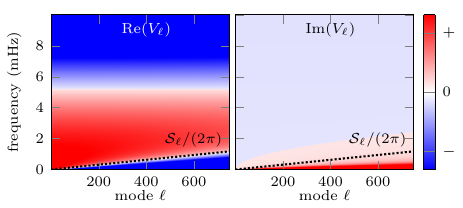}}} \hspace*{1.50em}
  \subfloat[][Sign of $\mathrm{Re}\,\Vscalar$ and $ \mathrm{Im}\, \Vscalar$.]
             {\makebox[.45\linewidth][c]{\includegraphics[]{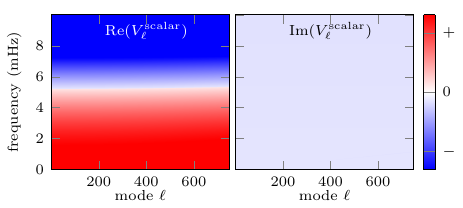}}}
 
  \caption{Comparison between sign of the real and imaginary 
           parts of $V_\ell$  \cref{Vell::def} and $\Vscalar$ \cref{scalarSODE} 
           at $r=\num{1.001}$ as functions of $(\ell, \omega)$
           at attenuation $\Gamma/(2\pi) = 20\si{\micro\Hz}$. 
           This is the height where we imposed the ABC in our numerical experiments. 
           We also show the 
           Lamb frequency $\mathcal{S}_\ell$  of \cref{Lamb_freq::def}.
           The local cut-off frequency $\omega_c/(2\pi)$ \cref{eq:cut-off-frequency}
           here is \num{5.2022}\si{\milli\Hz} for $V_\ell$
           and \num{5.2060}\si{\milli\Hz} for $\Vscalar$.
           }
           
  \label{fig:potential:main02}
\end{figure}

\begin{remark}[Interior propagative regions]\label{Inpropregion::rmk}
  For $N_{\rm rad} < \omega < \omega_t$, the propagative region below the surface observed for both $V_\ell$ and $\Vscalar$ corresponds to the region of the solar acoustic modes (p-modes). The radius where the potential changes sign is lower for low values of $\ell$ indicating that waves with smaller harmonic degree are traveling deeper into the Sun.
  For $\omega < N_{\rm rad}$, $V_\ell$ displays a propagative region below the convective zone ($r < 0.7)$ as seen in \cref{fig:potential:main01a}. This region  which is absent for $\Vscalar$  
exists for all harmonic modes $\ell$ and gives rise to internal-gravity 
waves (g-modes), cf.~\cref{subsection:g-modes}. For more visualization and discussion of this region, we refer to \cite[Section 7.1, Figures 7--10]{barucq:hal-03406855}. 
\end{remark}

\paragraph{Nonlocal boundary condition} Following the approach of Engquist and Majda \cite{engquist1977absorbing}, the construction of the nonlocal BC comes from factoring $\Lconjug$ as $\Lconjug = (\partial_r - \mathrm{i}\sqrt{-V_\ell}) (\partial_r + \mathrm{i} \sqrt{-V_\ell})+$ smoothing operator, and gives,
 \begin{equation}\label{NLBC::def} \partial_r u=  \ZrbcNL u\,, \hspace*{0.3cm} \text{with} \hspace*{0.3cm} \ZrbcNL \,:=\, \ii \sqrt{-V_\ell} \,.
 \end{equation}
Since $\mathrm{Im}\,\sqrt{z} >0$ for $z \in \mathbb{C}\setminus\{ 0\}$, imposing 
$\partial_r - \ZrbcNL$ at $r=\rmax$ 
picks out wave decaying as $r$ increases in this neighborhood. In addition, at vanishing (positive) attenuation, i.e. $\Gamma \rightarrow 0^+$, wave either remains evanescent or converges to one behaving like $e^{\ii \mathsf{k}_\mathrm{a} r}$ (also called $\mathsf{k}_a$-outgoing) respectively when
\begin{equation}\label{CZsign}
\mathrm{Re}\,(-V_\ell) < 0  \hspace*{0.2cm} \text{(decaying)}\,, \hspace*{0.3cm} \text{ or }\hspace*{0.3cm}   \{ \mathrm{Re}\,(-V_\ell) >0 \text{ and }
\mathrm{Im}\,(-V_\ell) \geq 0  \} \hspace*{0.2cm} (\mathsf{k}_\mathrm{a}\text{-outgoing)}\,.
\end{equation} 
\begin{remark}\label{inCZ::rmk}
As observed in \cref{fig:potential:main02}, for region $(\omega,\ell)$ with $\omega < \mathcal{S}_\ell$, condition \cref{CZsign} is not observed. The nonlocal BC defined in \cref{NLBC::def} still chooses only decaying solutions (as $r$ increases).
However, in this region, as $\Gamma\rightarrow 0$, $\mathrm{Re}\,(-V_\ell) >0, 
\mathrm{Im}\,(-V_\ell) \rightarrow 0^-$, chosen solutions will behave like $e^{-\ii r\sqrt{\mathrm{Re}\, (-V_\ell)}}$. This is another indication that the behavior here does not yet converge to the behavior at infinity  described by $e^{\ii \mathsf{k}_a r}$, cf. \cref{infbehavior}. 
\end{remark}

\subsection{Approximation of $V_\ell$ in the atmosphere}\label{ApproxVell::subsec}
In comparing between the expression of $V_\ell$ and $\Vscalar$, cf. 
\cref{scalarSODE,Vellatmo}, it is noted that 
the difficulty comes from the function $\nu_\ell(r)$ which is not a polynomial in $\ell(\ell+1)$, 
see \cref{eq::nuell}.  
For this reason, we will work with a potential $Q_{\ell}^\mathrm{G}$ introduced in \cite[Proposition 10]{barucq2021outgoing}
which provides a good  approximation of $-V_\ell$ in high atmosphere (i.e. for $r \gg 1$, fixed $(\ell,\omega)$),
\begin{subequations}
\begin{align}
&\hspace*{1cm} Q_{\ell}^\mathrm{G}(r) = 
 \mathsf{k}_\mathrm{a}^2  + \dfrac{\upeta_\mathrm{a}}{r}  +  \dfrac{2 \Rsun^2G\,  \mathfrak{m}}{c_\mathrm{a}^2 \, r^3}- \dfrac{\mu_{\ell}^2\, -\, \tfrac{1}{4}}{r^2}\label{QGell::def}\,,\\
& \text{with} \hspace*{0.7cm} \mu^2_{\ell} - \dfrac{1}{4} :=  2 + \ell(\ell+1) +\dfrac{\ell(\ell+1)}{\katmo^2} \dfrac{\alatmo}{\adatmo}  
  \left(\dfrac{\alatmo}{ \adatmo} -\alatmo\right) \,,\label{muell::def}\\
& \text{and constant } \hspace*{0.2cm} \frak{m} = 4\pi \int_0^{\ra} s^2\rho_0(s) ds  + 4\pi \rho_0(\ra) 
\dfrac{ (\alatmo \ra )^2 + 2 \alatmo \ra  +  2 }{ (\alatmo)^3  }\,.
 \end{align}
\end{subequations}
In absorbing gravity term into $\mu_\ell$ or $\upeta$, 
we introduce the following form of $Q_\ell^\mathrm{G}$,
\begin{equation}
Q_{\ell}^\mathrm{G}(r) =  \mathsf{k}_a^2  + \dfrac{\upeta^\mathrm{G}_\mathrm{a}(r)}{r} - \dfrac{\mu_{\ell}^2\, -\, \tfrac{1}{4}}{r^2}\\
  = \mathsf{k}_a^2  + \dfrac{\upeta_\mathrm{a}}{r} - \dfrac{(\mu^\mathrm{G}_{\ell}(r))^2\, -\, \tfrac{1}{4}}{r^2} \,,
\end{equation}
with `gravity-modified' coefficients,
\begin{equation}\label{eq:atmospheric-quantities-etaG}
 (\mu^\mathrm{G}_{\ell})^2 - \dfrac{1}{4}:=  \mu_{\ell}^2 - \dfrac{1}{4} -    \dfrac{2\,\Rsun^2 G  \mathfrak{m}}{c_\mathrm{a}^2 }\, \dfrac{1}{r} \,, \hspace*{0.5cm} \upeta^\mathrm{G}_\mathrm{a} (r)  :=   \upeta_\mathrm{a}  +    \dfrac{2\,\Rsun^2 G  \mathfrak{m}}{c_\mathrm{a}^2 } \dfrac{1}{r^2} \,.
\end{equation}
Below, we will investigate numerically whether this potential 
still gives a good representation of $-V_\ell$ in low atmosphere (i.e., for $r \gtrsim 1$). 
Replacing $-V_\ell$ by $ Q_{\ell}^\mathrm{G}$ in \cref{NLBC::def}, we obtain the approximate nonlocal condition, 
 \begin{equation}\label{zNLG::def}
\partial_r u=  \ZrbcNLaG u\,, \hspace*{0.2cm} \text{with} \hspace*{0.2cm} \ZrbcNLaG (r) := \ii \sqrt{ Q^G_{\ell}(r)} \,,
 \end{equation}
which will undergo square-root approximation, to construct local 
BC in \cref{sqrtABC::subsec}.

\paragraph{Numerical comparison in low atmosphere}
In \cref{fig:potential:1d}, we plot, for fixed frequency and mode, 
the point-wise relative error between $-V_\ell$ and $Q_\ell$ defined
by
\begin{equation}\label{eq:error:point-wise}
  \errpt(\,Q_\ell, \, r) \,=\, 
       \dfrac{\vert(- V_\ell(r)) \,-\, Q_\ell(r) \vert}
             {\vert V_\ell(r) \vert} \,.
\end{equation} 
Since we work with positive attenuation 
 $\Gamma>0$, potential $V_\ell$ (appearing in the denominator) is 
 non-zero.
We also compare with an approximation of $-V_\ell$ which ignores gravity,
\begin{equation}\label{Q0::def}
Q^0_{\ell}(r) = \mathsf{k}_\mathrm{a}^2  + \dfrac{\upeta_\mathrm{a}}{r}  - \dfrac{\mu_{\ell}^2 - \tfrac{1}{4}}{r^2} \,.
\end{equation}
We note that results in \cite{barucq2021outgoing}  prove and illustrate these errors  
in very high atmosphere (with $r \rightarrow \infty$) and for large $\ell$ and frequencies
in view of scattering theory.
Here, with construction of ABC in mind, we
focus on the low atmospheric region, namely for $r \in (1.001,\, 1.07)$, and various ranges of frequency and modes.

From \cref{fig:potential:1d}, we observe that the error \cref{eq:error:point-wise} increases as $\omega$ decreases and as $\ell$ increases. 
The approximations are the worst for \num{0.2}\si{\milli\Hz} (rendering the addition of gravity irrelevant).
However, for 2 and 7\si{\milli\Hz}, $Q_{\ell}^\mathrm{G}$ gives a good representation 
of $V_\ell$ on $[1.001,1.07)$. At these frequencies, while the magnitude of error with $Q_\ell^0$ remains consistent 
for both frequencies and both modes, the inclusion of
gravity improves substantially the accuracy, an improvement which is more marked for lower $\ell$ and higher frequencies, 
with up to six orders of magnitude gained between $Q_{\ell}^\mathrm{G}$ and $Q_{\ell}^0$.

\graphicspath{{figures/potential_1d/}}
\begin{figure}[ht!] \centering
 
             {\includegraphics[]{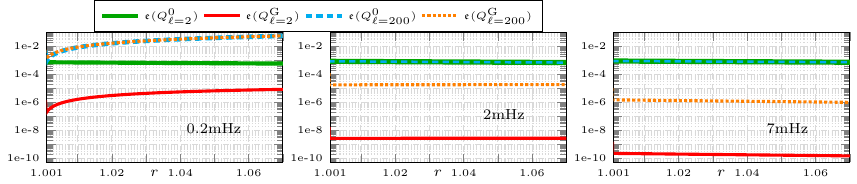}}
             
  \caption{
  Relative error $\errpt$ \cref{eq:error:point-wise} between the potential 
  and approximations $Q_{\ell}^\mathrm{G}$ of \cref{QGell::def} in the low
  atmosphere region ($r \in (1.001,\, 1.07)$) for three frequencies and 
  different modes.}
  \label{fig:potential:1d}
\end{figure}

\subsection{Square-root for approximate nonlocal BC} \label{sqrtABC::subsec}

Here we  approximate $\ZrbcNLaG$ introduced in \cref{zNLG::def} 
with HFG and SAIG 
families\footnote{In \cite{barucq2018atmospheric} HF stands for 
\emph{high-frequency} and SAI for \emph{small angle of incidence}. 
Here we add letter `G' to the nomenclature convention originally 
proposed in \cite{barucq2018atmospheric}, to indicate the contribution 
of gravity.} which are local in $\ell(\ell+1)$.
We work under the assumption,
\begin{equation}\label{sqrtapp::Ass}
\mathrm{Im}\,\, Q_{\ell}^\mathrm{G} \geq 0\,, \hspace*{0.5cm} \Gamma > 0\,,
\end{equation}
to switch to the principal square root $(\cdot)^{1/2}$ and work with the Taylor expansion 
\begin{equation}\label{sqrtTaylor}
\left(1 + \upepsilon\right)^{1/2}  = 1 + \dfrac{1}{2} \upepsilon - \dfrac{1}{8}\upepsilon^2 + \dfrac{1}{16} \upepsilon^3 -\ldots, \hspace*{0.5cm}\text{for } \hspace*{0.2cm} \upepsilon \in \mathbb{C}\,, \,\,\lvert \upepsilon\rvert < 1\,.
\end{equation}
For the HFG and SAIG, with $\mathsf{k}_\mathrm{a} := \sqrt{\mathsf{k}_\mathrm{a}^2} $, we normalize\footnote{Since $\mathrm{Im}\, \mathsf{k}_a^2 = \mathrm{Im} (\mathsf{k}_a^2+\tfrac{\eta^\mathrm{G}_a}{r}) = 2\omega \Gamma $, assumption $\Gamma > 0$ in \cref{sqrtapp::Ass} allows to write $(\tfrac{a}{b})^{1/2} $ as $ \tfrac{a^{1/2} }{ b^{1/2}}$, for $a,b\in\mathbb{C}^\star$.  } the coefficient inside the square root to make appear 1, respectively as,
\begin{equation}
\begin{aligned}
 \ZrbcNLaG (r)
           &   =     \ii\, \mathsf{k}_{\mathrm{a}}  \left( 1+ \varepsilon(r) \right)^{1/2} 
                 =  \ii
                   \left(  \mathsf{k}_{\mathrm{a}}^2 +  \dfrac{\upeta_{\mathrm{a}}^\mathrm{G}(r) }{ r}
                 \right)^{1/2} \hspace*{-0.5em} \left( 1 + \tilde{\varepsilon}(r)\right)^{1/2}\\
&            = \ii  \left(  \mathsf{k}_{\mathrm{a}}^2 +  \dfrac{\upeta_{\mathrm{a}} }{ r}\, 
                 \right)^{1/2} \left( 1 + \check{\varepsilon}(r)\right)^{1/2}    \,.
\end{aligned}
\end{equation}
Here 
 the small quantities $\varepsilon$, $\tilde{\varepsilon}$, ad $\check{\varepsilon}$ contain varying gravity term and are defined as, 
\begin{equation}
\varepsilon (r)=  \dfrac{\upeta^\mathrm{G}_{\mathrm{a}}(r)}{r\mathsf{k}_{\mathrm{a}}^2 }\, + \, \dfrac{\tfrac{1}{4} \, -\,\mu_{\ell}^2}{r^2\mathsf{k}_{\mathrm{a}}^2 }  \,, \hspace*{0.3cm}
\tilde{\varepsilon}(r) 
                =   \dfrac{\tfrac{1}{4}  \, -\,\mu_{\ell}^2 }{\mathsf{k}_{\mathrm{a}}^2\, r^2 + r\,\upeta_{\mathrm{a}}^\mathrm{G}(r) }\,,\hspace*{0.3cm} \check{\varepsilon}(r) 
                =   \dfrac{\tfrac{1}{4}  \, -\,[\mu_{\ell}^\mathrm{G}(r)]^2 }{\mathsf{k}_{\mathrm{a}}^2\, r^2 + r\,\upeta_{\mathrm{a}}(r) }\,,
\end{equation}
where $\mathsf{k}_{\mathrm{a}}$ and 
$\upeta_{\mathrm{a}}$ are given in  
\cref{eq:atmospheric-quantities-k-eta}, and 
$\upeta_{\mathrm{a}}^\mathrm{G}$ in \cref{eq:atmospheric-quantities-etaG}.
In assuming,
\begin{equation}\label{smallquant::assu}
\lvert \varepsilon(r)\rvert < 1  \hspace*{0.2cm}\text{ (for HFG) }, 
\hspace*{0.5cm} \lvert \tilde{\varepsilon}(r)\rvert < 1 \hspace*{0.2cm}\text{(for SAIG)}, 
\hspace*{0.5cm} \lvert \check{\varepsilon}(r)\rvert < 1  \hspace*{0.2cm}\text{(for } \ZrbcSAIGb),\end{equation}
respectively, and working modulo up to $\mathsf{O}(\varepsilon^2)$, $\mathsf{O}(\tilde{\varepsilon}^2)$ and $\mathsf{O}(\check{\varepsilon}^2)$, we obtain the following coefficients:
\begin{equation} \label{eq:rbc:ZSAI-ZHF}
\begin{aligned}
\ZrbcHFzero &= \mathrm{i}\,\mathsf{k}_\mathrm{a} \,, \hspace*{2cm}
\ZrbcHFGb(r) = \mathrm{i}\,\mathsf{k}_\mathrm{a}\left(  1 +  \dfrac{\upeta_{\mathrm{a}}^\mathrm{G}(r) }{2\,\mathsf{k}_{\mathrm{a}}^2\, r}\, 
               + \dfrac{\tfrac{1}{4}  -\mu_{\ell}^2}{2\,\mathsf{k}_{\mathrm{a}}^2\,r^2} \right)\,, \\
\ZrbcSAIGzero(r) \, & = \, 
     \mathrm{i}\, \mathsf{k}_{\mathrm{a}}\, \left(1    +\dfrac{\upeta_{\mathrm{a}}^\mathrm{G}(r) }{r\,\mathsf{k}_{\mathrm{a}}^2 } \right)^{1/2} \,, \\
  \ZrbcSAIGa(r)\, & = \,  
     \mathrm{i}\, \mathsf{k}_{\mathrm{a}}\, \left(1   +\dfrac{\upeta_{\mathrm{a}}^\mathrm{G}(r) }{r\,\mathsf{k}_{\mathrm{a}}^2 } \right)^{1/2} 
     \left(1 \, +\, \dfrac{\tfrac{1}{4}  \, -\,\mu_{\ell}^2 }{2\left(\mathsf{k}_{\mathrm{a}}^2\, r^2 \,+ \,r\,\upeta_{\mathrm{a}}^\mathrm{G}(r)\right) } \right) \, ,\\
     \ZrbcSAIGb (r)\, & = \,  
     \mathrm{i}\, \mathsf{k}_{\mathrm{a}}\, \left(1   +\dfrac{\upeta_{\mathrm{a}}}{r\,\mathsf{k}_{\mathrm{a}}^2 } \right)^{1/2} 
     \left(1 \, +\, \dfrac{\tfrac{1}{4}  \, -\,(\mu^\mathrm{G}_{\ell}(r))^2 }{2\left(\mathsf{k}_{\mathrm{a}}^2\, r^2 \,+ \,r\,\upeta_{\mathrm{a}}\right) } \right)\,.
 \end{aligned}
\end{equation}
If we approximate the square root in $\mathcal{Z}_{\mathrm{SAIG0}}$, we obtain
\begin{equation}
\ZrbcHFGa(r) = \mathrm{i}\,\mathsf{k}_\mathrm{a}\left(  1 +  \dfrac{\upeta_{\mathrm{a}}^\mathrm{G}(r) }{2\,\mathsf{k}_{\mathrm{a}}^2\, r} \right)\,, \hspace*{0.5cm} \text{ assuming }  \hspace*{0.3cm} \left|\dfrac{\upeta_{\mathrm{a}}^\mathrm{G}(r) }{r\,\mathsf{k}_{\mathrm{a}}^2} \right| \ll 1 \,.
\end{equation}


\begin{remark}[Zero-gravity ABC]
Setting $G$ to zero in the above coefficients (equivalently by applying directly the procedure in \cite{barucq:hal-02168467,barucq:hal-02423882} to $Q^0_\ell$ \cref{Q0::def}),  we obtain `usual' HF and SAI family, 
 \begin{equation}
\ZrbcHFa    \,, \,\, 
\ZrbcHFb    \,, \,\, 
\ZrbcSAIzero\,, \,\, 
\ZrbcSAIa   \,.
 \end{equation}  
 
\end{remark}
\section{Accuracy of the numerical solutions}
\label{section:numerics:original-and-conjugated}

The numerical experiments are carried out with the open-source 
software \texttt{hawen}\footnote{\url{https://ffaucher.gitlab.io/hawen-website/}}, 
\cite{Hawen2020} which implements the Hybridizable Discontinuous Galerkin (HDG) 
method (e.g., \cite{Cockburn2009,Cockburn2023}), that we already used in the scalar case, 
cf.~\cite[Section 3.1]{barucq2020efficient}.
For details on implementation of the first-order formulation with 
both the original operator $\Lorigin$ and the conjugate one $\Lorigin$ in \cref{notation_modalop}, we refer to 
\cite[Section~5 and Appendix A]{barucq:hal-03406855}.
A challenge posed by model \texttt{S-AtmoI} is the rapid decrease of density 
and wave-speed in the convective surface layer, \cite{barucq2020efficient}. 
To deal with this, we exploit the hp-adaptivity of 
the HDG method implemented in \texttt{hawen},  
which means that we refine the space discretization near solar surface, and 
allow for the parameters to vary within a mesh cell (e.g., \cite{Faucher2020adjoint}).
One advantage of the HDG method is to solve the first-order system without
increasing the computational cost (i.e., still working with one unknown which is
the numerical trace). In that way, we readily access to the fields $w$ and $v$ 
that are needed to assemble all of the directional kernels, \cite{barucq:hal-03406855}.
The investigations carried out in this section are twofold:
\vspace*{-0.50em}
\begin{enumerate}\setlength{\itemsep}{-2pt}
  \item We highlight the complementarity aspect of the original and conjugated equations
        in terms of numerical simulations: only the former is able to treat very 
        low attenuation, while only the latter is able to consider very large interval,
        cf.~\cref{subsection:original-conjugated_attenuation,subsection:original-conjugated_rmax}.
  \item We evaluate the performance of the absorbing boundary conditions for
        outgoing solutions, and highlight the importance of including the gravity 
        term, \cref{subsection:numerical-performance-rbc}.        
\end{enumerate}

\subsection{Solutions of the conjugated and original equations with attenuation}
\label{subsection:original-conjugated_attenuation}

We compare the solutions obtained with the first-order formulation 
of the original and conjugated equations, respectively 
denoted $\Lorigin$ and $\Lconjug$ in \cref{notation_modalop}. 
We solve the Dirichlet boundary value problem with $\rmax=\num{1.001}$, that is,
\begin{subequations} \label{eq:numeric:conjug-original_dirichlet}\begin{align}
\Lorigin(w_{\mathrm{org}},v_{\mathrm{org}}) = 0 \, \text{ on (0,$\rmax$)}\,\,\, & \text{;} \quad
           w_{\mathrm{org}}(\rmax)\,=\, 1 \,\,\, \text{;} \quad v_{\mathrm{org}}(0)\,=\, 0 \,. \\
\Lconjug(w_{\mathrm{cjg}},v_{\mathrm{cjg}}) = 0 \, \text{ on (0,$\rmax$)}\,\,\, & \text{;} \quad
           w_{\mathrm{cjg}}(\rmax)\,=\, 1 \,\,\, \text{;} \quad v_{\mathrm{cjg}}(0)\,=\, 0 \,.
\end{align} \end{subequations}
The relation between the two solutions $w$ is given by the change of variable in \cref{conjmodalODErhs},
\begin{equation}
  w_{\mathrm{cjg}}(r) \,=\,  \mathcal{K}(r,\rmax) \,w_{\mathrm{org}}(r) \,, \hspace*{0.5cm} \text{with} \hspace*{0.3cm}
   \mathcal{K}(r,\rmax) =  {\mathfrak{I}_{\ell}(\rmax)}/{\mathfrak{I}_{\ell}(r)} \,.
\end{equation}
In \cref{fig:conjug-original_attenuation-level}, we picture the solutions at 
frequency 7 \si{\milli\Hz} for modes 0 and 20, with different levels of 
attenuation $\Gamma$. 

\graphicspath{{figures/forward_dirichletBC/}}
\begin{figure}[ht!] \centering
  \subfloat[][Conjugated solutions at (7\si{\milli\Hz}, $\ell=20$) with different levels of attenuation.]
             {\includegraphics[]{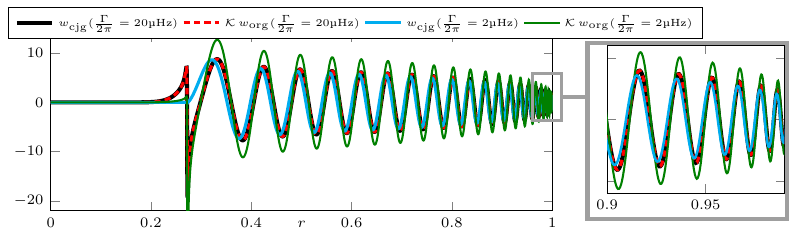}
              \label{fig:conjug-original_attenuation-level_a}              
              }
              \vspace*{-0.50em}
  \subfloat[][Original solutions at (7\si{\milli\Hz}, $\ell=20$) with different levels of attenuation.]
             {\includegraphics[]{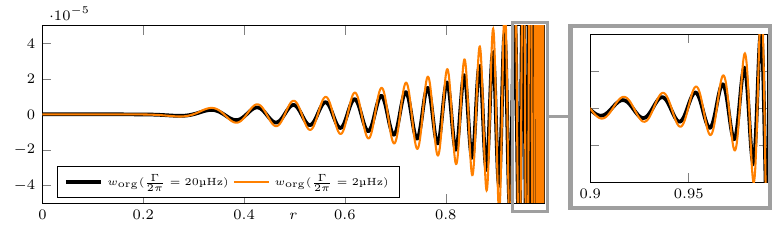}
              \label{fig:conjug-original_attenuation-level_b}}
              \vspace*{-0.50em}
  \subfloat[][Conjugated and original solutions at 
              (7\si{\milli\Hz}, $\ell=0$, $\tfrac{\Gamma}{2\pi}=\num{0.2}\si{\micro\Hz}$).]
             {\includegraphics[]{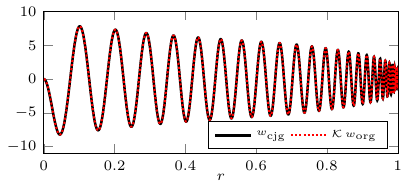}
              \includegraphics[]{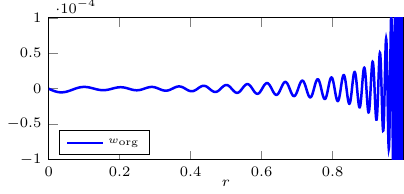}
              }
  \caption{Comparison between solutions of the original ($\Lorigin$)
           and conjugated ($\Lconjug$) equations \cref{eq:numeric:conjug-original_dirichlet}
           at frequency 7 \si{\milli\Hz} with different levels of
           attenuation.
           }
  \label{fig:conjug-original_attenuation-level}
\end{figure}

We have the following observations.
\vspace*{-0.50em}
\begin{itemize} \setlength{\itemsep}{-2pt}
\item
We see that the wavelength decreases as one approaches $r=1$, 
this behaviour is expected due to the rapid decrease of the solar parameters in surface layers.
The solutions of the original equation (\cref{fig:conjug-original_attenuation-level_b})
show large variations in magnitude, more than 
four orders between the surface (where we impose 
Dirichlet condition to 1), and the interior where 
we have forced the scale for visualization.
On the other hand, the conjugated solutions remain in 
the same magnitude for the entire interval.
\item
For relatively high attenuation (\num{20}\si{\micro\Hz}), 
the solutions $w_{\mathrm{cjg}}$ and $\mathcal{K} \,w_{\mathrm{org}}$ 
coincide well. 
For lower attenuation, \num{2}\si{\micro\Hz}, the original solution 
remains stable as only the amplitude is changed between the different
levels of attenuation.
However, the conjugated equation appears unstable 
with phase shift in the signal for $\ell\neq0$ 
in \cref{fig:conjug-original_attenuation-level_a}.
This is due to the solution of the $\Lorigin$ which is smooth 
at $r^\star_{\ell,\omega}$~\cref{Lambsing::def} 
(having positive indicial roots here) as opposed 
to $\Lconjug$ which has one local negative indicial root, 
cf. \cite[Remark p.25]{barucq:hal-03406855}.
This singularity is due to the zero of $\Fl$ that comes 
from Lamb frequencies, and does not exist for mode $\ell=0$, 
cf. bottom line of \cref{fig:conjug-original_attenuation-level}.
\end{itemize}
Consequently, despite the high variations in magnitude in the solutions,
the original problem is the only candidate able to handle low levels of 
attenuation.

\subsection{Non-local condition with different positions of $\rmax$}
\label{subsection:original-conjugated_rmax}

We now solve the equations in \cref{eq:numeric:conjug-original_dirichlet}, 
with a Dirac source at height $s=1$: 
\begin{subequations} \label{eq:numeric:conjug-original_dirac}\begin{align}
\Lorigin(w_{og},v_{og}) = \delta(r-1) \, \text{ on (0,$\rmax$)} \,\,\, & \text{;}\hspace*{0.3cm}
           v_{og} - \hat{\Zrbc}\, w_{og} =\! |_{\rmax} \hspace*{0.1cm} 0\, \text{;} & \hspace*{0.3cm}
           v_{og}(0)\,=\, 0 \,. \\
\Lconjug(w_{cj},v_{cj}) = \delta(r-1) \, \text{ on (0,$\rmax$)}\,\,\, & \text{;} \hspace*{0.3cm}
           v_{cj} - \Zrbc\, w_{cj} =\!|_{\rmax}  \hspace*{0.1cm} 0\, \text{;} & \hspace*{0.3cm}v_{cj}(0)\,=\, 0 \,.
           \label{eq:numeric:conjug_dirac}
\end{align} \end{subequations}
The relation between the RBC of the original equation $\hat{\Zrbc}$ and
of the conjugated one $\Zrbc$ is given in \cref{eq:rbc-equivalent:origin-conj}.
To evaluate the performance in the choice of boundary conditions, numerical 
solutions are compared with a reference solution 
$r\mapsto w^\mathrm{ref}(r;\omega,\ell)$ that is computed on a much larger interval.
For each frequency and each mode $(\omega,\ell)$, we introduce the error function
$\err(\omega,\ell)$ which shows the relative difference between the numerical solution 
$r\mapsto w^{\Zrbc}(r;\omega,\ell)$ obtained with ABC $\Zrbc$ 
and the reference solution $r\mapsto w^\mathrm{ref}(r;\omega,\ell)$ 
on the interval of interest $(0,\num{1.001})$, i.e.,
\begin{equation}\label{eq:error:norm}
  \err_{\Zrbc}(\omega,\ell) \,=\, \dfrac{\Vert w^\mathrm{ref}(\cdot;\omega,\ell) \,-\, w^{\Zrbc}(\cdot;\omega,\ell) \Vert_{(0,\num{1.001})}}
                                {\Vert w^\mathrm{ref}(\cdot;\omega,\ell) \Vert_{(0,\num{1.001})}} \,.
\end{equation} 

In \cref{fig:conjug_rmax}, we plot the solutions at frequency 
7\si{\milli\Hz} and \num{0.2}\si{\milli\Hz} using the nonlocal 
condition $\ZrbcNL$ imposed at different positions of $\rmax$,
and we use attenuation $\Gamma/(2\pi)=20$\si{\micro\Hz}. 
While the conjugated equation is able to handle large 
$\rmax$ (e.g., $\rmax=10$ below),
the numerical discretization with $\Lorigin$ is unstable
for $\rmax > 1.1$ (position depending on frequency 
and modes), in which case the matrix becomes singular, or the
solver returns `NaN' values.
In another word, by working with original problem, we are unable to 
handle arbitrarily large value of $\rmax$. 
This could be explained from the high variation of the 
solutions in the atmosphere observed in 
\cref{fig:conjug-original_attenuation-level} leading to 
numerically overwhelming arithmetic when increasing the 
position of $\rmax$. 
To summarize, we can highlight that the two formulations are 
\emph{numerically complementary} and that they each have their range
of applications: \vspace*{-0.6em}
\begin{itemize} \setlength{\itemsep}{-2pt}
  \item The conjugated equation \emph{cannot} handle low attenuation 
        but \emph{can} consider large $\rmax$.
  \item The original equation \emph{can} handle low attenuation but
        \emph{cannot} consider large $\rmax$.
\end{itemize}

\graphicspath{{figures/forward_rbc/}}
\begin{figure}[ht!] \centering
  \subfloat[][(7\si{\milli\Hz}, $\ell=20$).]
             {\includegraphics[]{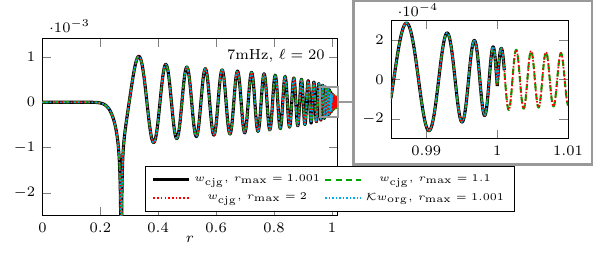}}
  \subfloat[][(0.2\si{\milli\Hz}, $\ell=200$), zoom near surface.]
             {\includegraphics[]{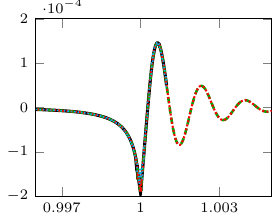}}

  \caption{
           Solutions of the conjugated equations with 20\si{\micro\Hz} attenuation 
           using different positions of $\rmax$ where the nonlocal condition $\ZrbcNL$ 
           is imposed.
           The original equation cannot be computed for large $r$ as it leads 
           to singular matrix or NaN entries.
           }
  \label{fig:conjug_rmax}
\end{figure}

We evaluate the accuracy of the solutions depending on the  position 
of $\rmax$ in \cref{fig:rbc-with-rmax}, for different frequencies and 
modes using only the nonlocal condition $\ZrbcNL$, with the
conjugated equation. 
The reference solution is computed using $\rmax=\num{10}$ and we compare 
frequencies below and above cut-off $\omega_c$ \cref{awc::def}. 
In this experiment, we use attenuation 20\si{\micro\Hz}, and evaluate
the relative error $\err$ of \cref{eq:error:norm}, that is, whatever
$\rmax$ is used for the simulations, the error only considers the fields
up to \num{1.001} to have fair comparisons.
We see that below the cut-off frequency, the nonlocal condition 
is accurate for low $\rmax$, except for frequencies below the 
Lamb frequency (i.e., low-frequency/large mode),
where the potential is negative (cf. \cref{fig:potential:main01,fig:potential:main02})
and $\rmax$ needs to be slightly pushed further out.
Above the cut-off frequency, we have a steady decrease of the error with increasing $\rmax$, 
which eventually needs to be relatively far to reach the accuracy obtained below cut-off. 
Nonetheless, using $\rmax=\num{1.001}$  already provides a good accuracy everywhere with the
condition $\ZrbcNL$, as the relative error is always below \num{e-4}. 

\setlength{\plotheight}{2.30cm}
\newcommand{\infofreq}{}
\begin{figure}[ht!]\centering
  \renewcommand{\myylabel}{$\err$}
  \renewcommand{\datafile}{figures/forward_rbc/data_error_rmax/data_conjugated_equation_att20uHz_norm.txt}
  \pgfkeys{/pgf/fpu=true} \pgfmathsetmacro{\ymin}{5e-16} \pgfmathsetmacro{\ymax}{1e-3} \pgfkeys{/pgf/fpu=false}
  \pgfkeys{/pgf/fpu=true} \pgfmathsetmacro{\xmin}{0.998}  \pgfmathsetmacro{\xmax}{1.051} \pgfkeys{/pgf/fpu=false}
  \renewcommand{\xtickloc}{1.01,1.1,1.2,1.4}
  \renewcommand{\dataA}{A}\renewcommand{\legendA}{$\ell=0$}
  \renewcommand{\dataB}{B}\renewcommand{\legendB}{$\ell=20$}
  \renewcommand{\dataC}{C}\renewcommand{\legendC}{$\ell=200$}
  \renewcommand{\dataD}{D}\renewcommand{\legendD}{$\ell=500$}
  \renewcommand{\myfreq}{0.2mHz}
  \setlength{\plotwidth} {3.50cm}
  \renewcommand{\infofreq}{\num{0.2}\si{\milli\Hz}}  
  {\makebox[2.40cm][l]{ 
                       \includegraphics[scale=1]{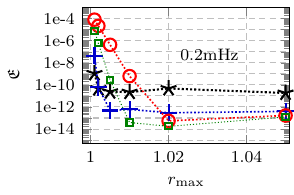}
                      }
   \renewcommand{\myfreq}{2.0mHz}
   \renewcommand{\infofreq}{\num{2}\si{\milli\Hz}}  
   \makebox[5.2cm][l]{ 
                      \includegraphics[scale=1]{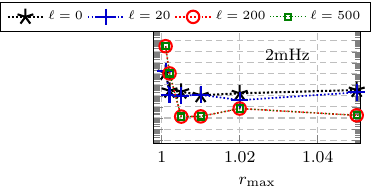}
                     }
   \pgfkeys{/pgf/fpu=true} \pgfmathsetmacro{\xmin}{0.98}  
   \pgfmathsetmacro{\xmax}{1.51} \pgfkeys{/pgf/fpu=false}
   \setlength{\plotwidth} {5.00cm}
   \renewcommand{\myfreq}{7.0mHz}
   \renewcommand{\infofreq}{\num{7}\si{\milli\Hz}}  
   \makebox[5.50cm][l]{
                       \includegraphics[scale=1]{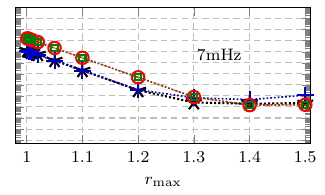} 
                       }
  \label{fig:rbc-with-rmax_conjugated}}


  \caption{Relative error $\err_{\ZrbcNL}$ of \cref{eq:error:norm}
           as a function of the position where the nonlocal boundary
           condition $\ZrbcNL$ is imposed, for frequencies 
           \num{0.2}, \num{2}, and \num{7}\si{\milli\Hz}. 
           The reference solution is computed with the conjugated 
           equation using $\rmax=\num{10}$, the numerical ones 
           with $\rmax\in(\num{1.001},\num{1.5})$.
           All simulations use an attenuation of \num{20}\si{\micro\Hz}.}
  \label{fig:rbc-with-rmax}
\end{figure}

\subsection{Performance of the approximate nonlocal boundary conditions}
\label{subsection:numerical-performance-rbc}

We compare the performance of the boundary conditions 
constructed in \cref{section:rbc}. 
This experiment is carried out with the conjugated equation and
attenuation 20\si{\micro\Hz}.
We employ the following solution as reference solution,
\begin{equation}
w^\mathrm{ref}(r;\omega,\ell) \text{ solves }
 \cref{eq:numeric:conjug_dirac} \text{ with } \rmax = 10 \text{ and } \mathcal{Z} = \ZrbcNL\,.
\end{equation}
In \cref{figure:rbc-error:NL01}, we evaluate the performance of 
conditions in the nonlocal family, that is, conditions 
$\ZrbcNL$, $\ZrbcNLa$, and $\ZrbcNLaG$ of \cref{zNLG::def}. 
We show the relative 
error $\err$ \cref{eq:error:norm} for modes between 0 and 200, 
for three selected frequencies.

\setlength{\plotwidth} {4.60cm}
\setlength{\plotheight}{2.50cm}
\begin{figure}[ht!]\centering
  \renewcommand{\myylabel}{$\err$}
  \pgfkeys{/pgf/fpu=true} \pgfmathsetmacro{\ymin}{5e-14}\pgfmathsetmacro{\ymax}{1e-3}  \pgfkeys{/pgf/fpu=false}
  \pgfkeys{/pgf/fpu=true} \pgfmathsetmacro{\xmin}{0}     \pgfmathsetmacro{\xmax}{200}  \pgfkeys{/pgf/fpu=false}
  \renewcommand{\xtickloc}{0,50,100,150}
  \renewcommand{\myfreqA}{0.2mHz}
  \renewcommand{\myfreqB}{2mHz}
  \renewcommand{\myfreqC}{7mHz}
  \renewcommand{\datafile}{figures/forward_rbc_all/data_conjugated/relative-error-meanscaled_ref-Znonlocal10.000_num1.001_w.txt}
  \renewcommand{\dataA}{Znonlocal}        \renewcommand{\legendA}{$\ZrbcNL$,   $\rmax=\num{1.001}$}
  \renewcommand{\dataB}{Znonlocal_approx} \renewcommand{\legendB}{$\ZrbcNLa$,  $\rmax=\num{1.001}$}
  \renewcommand{\dataC}{Znonlocal_approxG}\renewcommand{\legendC}{$\ZrbcNLaG$, $\rmax=\num{1.001}$}
  \renewcommand{\datafileB}{figures/forward_rbc_all/data_conjugated/relative-error-meanscaled_ref-Znonlocal10.000_num1.050_w.txt}
  \renewcommand{\dataD}{Znonlocal} \renewcommand{\legendD}{$\ZrbcNL$, $\rmax=\num{1.05}$}
  \renewcommand{\datafileC}{} \renewcommand{\dataE}{} 
  \includegraphics[scale=1]{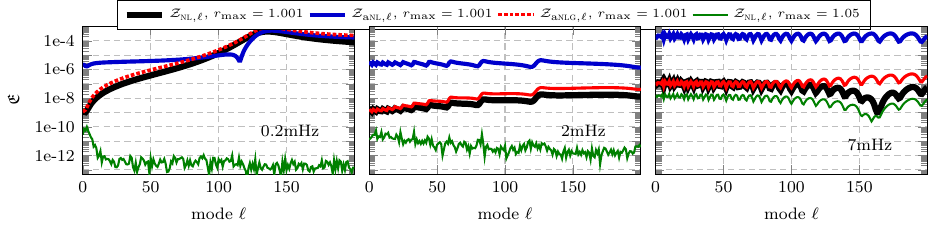}

  \caption{Relative error $\err_{\Zrbc}$ of \cref{eq:error:norm}
           as a function of $\ell$
           for the nonlocal conditions \cref{zNLG::def} set in 
           $\rmax=\num{1.001}$ (except $\rmax=\num{1.05}$ for the green curve).
           The reference solution uses the nonlocal condition 
           $\ZrbcNL$ in $\rmax=\num{10}$.
           We compare
           frequency \num{0.2} (left), \num{2} (middle) 
           and \num{7} \si{\milli\Hz} (right).
           }
  \label{figure:rbc-error:NL01}
\end{figure}

We see that including the gravity term $G$ in the approximation 
of $V_\ell$ (i.e., comparing $\ZrbcNLaG$ with $\ZrbcNLa$) drastically
improves the accuracy, with a more considerable gain at 
7\si{\milli\Hz} in \cref{figure:rbc-error:NL01}, with an improve in 
accuracy of three to four orders of magnitude.
The relative error remains relatively stable with the modes
for 2 and 7\si{\milli\Hz} while at \num{0.2}\si{\milli\Hz}, 
the error increases with increasing modes, and is the largest
in the low frequency and high mode regions.
In \cref{figure:rbc-error:all01,figure:rbc-error:all02}, 
we now evaluate the performance of the conditions in the HF 
and SAI families of \cref{eq:rbc:ZSAI-ZHF}.
In particular \cref{figure:rbc-error:all02} shows a cartography
of the relative error for modes between 0 to 200, and frequencies from \num{0.01} to 10 \si{\milli\Hz}.

\begin{figure}[ht!]\centering
  \renewcommand{\myylabel}{$\err$}
  \renewcommand{\datafile}{figures/forward_rbc_all/data_conjugated/relative-error-meanscaled_ref-Znonlocal10.000_num1.001_w.txt}
  \pgfkeys{/pgf/fpu=true} \pgfmathsetmacro{\ymin}{1e-14} \pgfmathsetmacro{\ymax}{2.0} \pgfkeys{/pgf/fpu=false}
  \pgfkeys{/pgf/fpu=true} \pgfmathsetmacro{\xmin}{0}     \pgfmathsetmacro{\xmax}{200} \pgfkeys{/pgf/fpu=false}
  \renewcommand{\xtickloc}{0,50,100,150}
  \renewcommand{\myfreqA}{0.2mHz}
  \renewcommand{\myfreqB}{2mHz}
  \renewcommand{\myfreqC}{7mHz}
  \renewcommand{\dataA}{ZSHF0} \renewcommand{\legendA}{$\ZrbcHFzero$}
  \renewcommand{\dataB}{ZSHF1} \renewcommand{\legendB}{$\ZrbcHFa$}
  \renewcommand{\dataC}{ZSHF2} \renewcommand{\legendC}{$\ZrbcHFb$}
  \renewcommand{\dataD}{ZSHFG1}\renewcommand{\legendD}{$\ZrbcHFGa$}
  \renewcommand{\dataE}{ZSHFG2}\renewcommand{\legendE}{$\ZrbcHFGb$}
  \renewcommand{\dataF}{}
  \renewcommand{\dataG}{}
  \renewcommand{\datafileB}{figures/forward_rbc_all/data_conjugated/relative-error-meanscaled_ref-Znonlocal10.000_num1.010_w.txt}
  \renewcommand{\dataH}{ZSHFG2} \renewcommand{\legendH}{$\ZrbcHFGb\,\rmax=\num{1.01}$}
  \subfloat[][HF family.]{
                          \includegraphics[scale=1]{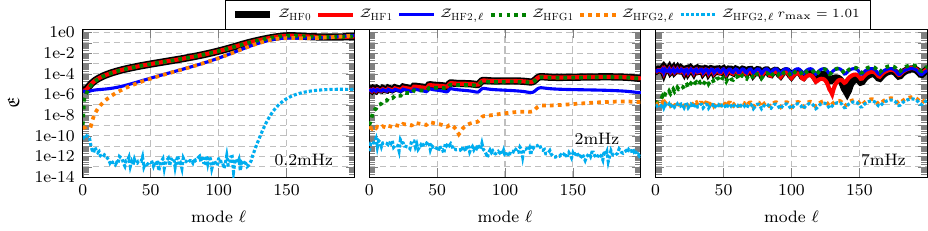}
                         }

  \renewcommand{\dataA}{ZSAI0}   \renewcommand{\legendA}{$\ZrbcSAIzero$}
  \renewcommand{\dataB}{ZSAI1}   \renewcommand{\legendB}{$\ZrbcSAIa$}
  \renewcommand{\dataD}{ZSAIG0}  \renewcommand{\legendD}{$\ZrbcSAIGzero$}
  \renewcommand{\dataE}{ZSAIG1}  \renewcommand{\legendE}{$\ZrbcSAIGa$}
  \renewcommand{\dataC}{}        \renewcommand{\legendC}{}
  \renewcommand{\dataF}{ZSAIG1b} \renewcommand{\legendF}{$\ZrbcSAIGb$}
  \renewcommand{\dataG}{}        \renewcommand{\legendG}{}
  \renewcommand{\datafileB}{figures/forward_rbc_all/data_conjugated/relative-error-meanscaled_ref-Znonlocal10.000_num1.010_w.txt}
  \renewcommand{\dataH}{ZSAIG1} \renewcommand{\legendH}{$\ZrbcSAIGa\,\rmax=\num{1.01}$}
  \subfloat[][SAI family.]{
                          \includegraphics[scale=1]{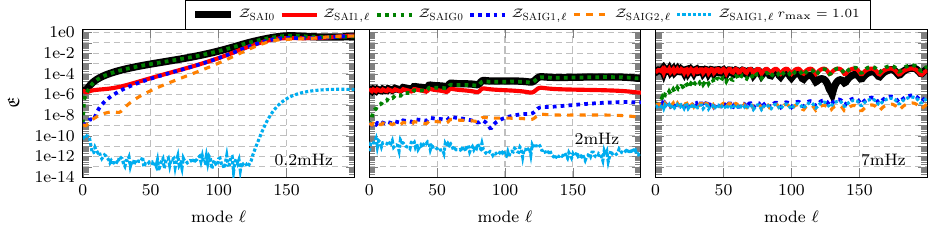}
                          }

  \caption{Relative error $\err_{\Zrbc}$ of \cref{eq:error:norm}
           for the HF and SAI conditions \cref{eq:rbc:ZSAI-ZHF}
           set at $\rmax=\num{1.001}$ except for the blue curve ($\rmax=\num{1.01}$).
           The reference solution uses the nonlocal condition 
           $\ZrbcNL$ in $\rmax=\num{10}$.
           We compare
           frequency \num{0.2} (left), \num{2} (middle) 
           and \num{7} \si{\milli\Hz} (right).
           }
  \label{figure:rbc-error:all01}
\end{figure}

\setlength{\modelwidth} {4.20cm}
\setlength{\modelheight}{4.00cm}
\graphicspath{{figures/forward_rbc_all/maps_conjugated_0.1-to-10mHz_mode0-to-200_att20uHz_refZnlc-10.000_rmax1.001_scale-9to-1/}}
\begin{figure}[ht!]\centering

  \renewcommand{\cbtitle}{$\err$} 
  \pgfkeys{/pgf/fpu=true} 
  \pgfmathsetmacro{\cmin}{1e-9} \pgfmathsetmacro{\cmax}{1e-1} 
  \pgfkeys{/pgf/fpu=false}
  \renewcommand{\modelfile}{norml2err_ZSAI0}
  \subfloat[][$\ZrbcSAIzero$]{
                              \includegraphics[scale=1]{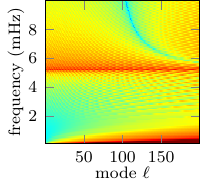}
                              } \hspace*{-1em}
  \renewcommand{\modelfile}{norml2err_ZSAIG0}
  \subfloat[][$\ZrbcSAIGzero$]{
                               \includegraphics[scale=1]{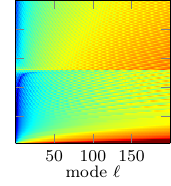}}  \hfill
  \renewcommand{\modelfile}{norml2err_ZSAI2}
  \subfloat[][$\ZrbcSAIa$]{
                           \includegraphics[scale=1]{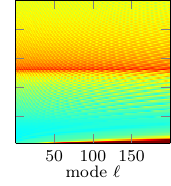}} \hspace*{-1em}
  \renewcommand{\modelfile}{norml2err_ZSAIG1b}
  \subfloat[][$\ZrbcSAIGb$]{
                            \includegraphics[scale=1]{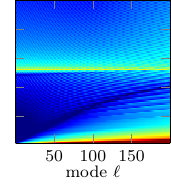}} \hspace*{-1em}
  \renewcommand{\modelfile}{norml2err_ZSHFG2}
  \subfloat[][$\ZrbcHFGb$]{
                           \includegraphics[scale=1]{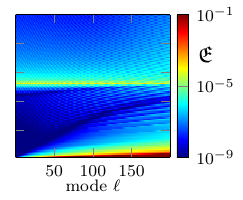}}

  \caption{Relative error $\err_{\Zrbc}$ of \cref{eq:error:norm}
           for the HF and SAI conditions \cref{eq:rbc:ZSAI-ZHF}
           set at $\rmax=\num{1.001}$.
           The reference solution uses the nonlocal condition 
           $\ZrbcNL$ in $\rmax=\num{10}$.
           We show the cartography for
           frequencies between \num{0.1} and \num{10} \si{\milli\Hz}, 
           and mode between \num{0} and \num{200}.
           }
  \label{figure:rbc-error:all02}
\end{figure}

\medskip

\noindent We have the following observations.
\vspace*{-0.50em}
\begin{itemize}\setlength{\itemsep}{-2pt}

  \item The improvement of including the gravity in the condition
        is maintained in the HF and SAI family, especially for 
        conditions that are mode-dependent, e.g., comparing 
        $\ZrbcHFb$ with $\ZrbcHFGb$ or comparing 
        $\ZrbcSAIa$ with $\ZrbcSAIGa$.
        For the conditions that do not depend on $\ell$, we still
        see an improvement, in particular for low modes, e.g., 
        comparing $\ZrbcSAIzero$ and $\ZrbcSAIGzero$ in 
        \cref{figure:rbc-error:all02}.

   \item Comparing the HF and SAI families, the two are 
        very close in terms of accuracy, with the best
        choice being condition $\ZrbcSAIGb$.
        
  \item The approximate conditions in the HF and SAI provide an 
        acceptable relative error (below \num{e-4}) except for 
        frequencies below the Lamb frequency $\mathcal{S}_\ell$ 
        (low frequencies/high modes zone),
        see \cref{figure:rbc-error:all01}. 
        This corresponds to regions where $V_\ell$
        is poorly approximated by $Q_{\ell}^\mathrm{G}$, cf. \cref{ApproxVell::subsec},
        and where condition \cref{smallquant::assu} for square-root approximation fails.
        Here, the error largely increases and these 
        conditions should only be used if $\rmax$ is 
        sufficiently large (e.g., in \num{1.01} in 
        \cref{figure:rbc-error:all01}).
\end{itemize}

\section{Numerical experiments of solar observables}
\label{section:observables}



In this section, to demonstrate the utility of our results towards helioseismology,
 we show that synthetic solar 
observables in form of power spectrum produced with our methods capture correctly the physical modes. We compare the locations of maximum power of the ridges of our power spectrum to the eigenvalues obtained by the eigenvalue solver \texttt{gyre} \cite{Townsend2013} and to the measurements from \cite{Larson2015}\footnote{The measured frequencies are available as supplementary electronic material from \cite{Larson2015} and can be downloaded at \url{http://sun.stanford.edu/~schou/anavw72z/av}. The first column corresponds to the harmonic degrees $\ell$ and the third one to the frequencies in $\mu$Hz.} for low-degree modes ($\ell < 300$) and from \cite{Korzennik2013} for high-degree modes.
We start by relating the directional  Green's kernels $G^\mathrm{PP}_\ell$
to the power spectrum that can be observed by helioseismology in
\cref{subsection:connection-observables}. We then implement \cref{algorithm:assemble} to compute all three 
directional kernels for all position of source and receivers in the Sun in \cref{subsection:solar-greens-kernels}. 
Employing relation \cref{synPS}, synthetic power spectra are extracted from the full kernel and are shown to display surface-gravity ridges and  pressure ridges in \cref{subsection:solar-greens-kernels},  and  internal gravity ridges in \cref{subsection:g-modes}. To emphasize the importance of gravity, we compare these spectra to those obtained in zero-gravity approximation. This is the equation obtained from \cref{notation_modalop} by setting
\begin{equation}\label{eq:zero-gravity_approximation}
  \phi_0'=0, \,\, \Ehe=0 \quad \text{ in \cref{sfFell::def}--\cref{eq:auxiliary-sch-main} } \hspace*{1cm}  \substack{\text{zero-gravity}\\\text{approximation}}\,.
\end{equation}

\begin{remark} The approximation in \cref{eq:zero-gravity_approximation} 
is equivalent to considering only the first two terms in \cref{main_eqn_Galbrun} 
which was done to derive the scalar wave equation of \cite{gizon2017computational}, 
see also \cref{scalarSODE}.
However, the unknown of the scalar equation is related to the divergence 
of the displacement $\xib$, while in the framework of this paper the unknown 
is the radial displacement $\xi_r$. For this reason,
we compare spectra associated with the unknown of the original ODE \cref{notation_modalop} 
and its version in approximation \cref{eq:zero-gravity_approximation} instead of the solution of \cref{scalarSODE}.
\end{remark}

\subsection{Connection to the observed solar power spectrum}
\label{subsection:connection-observables}

We suppose that the observations at the points $\hat{\bx}$ 
on the surface of the Sun
correspond to the line-of-sight velocity 
$\psi(\hat{\bx},\omega)$ at a specific height $r_{\rm obs}$, 
\begin{equation}
  \psi(\hat{\bx},\omega) = -\ii \omega \hat{\mathbf{l}}(\hat{\bx}) \cdot \xib(r_{\rm obs}, \hat{\bx},\omega)\,.
\end{equation}
Here, we will employ the simple but commonly used assumption that the line-of-sight $\hat{\mathbf{l}}$ is purely radial, i.e.
 \begin{equation}\hat{\mathbf{l}}(\hat{\bx}) = \mathbf{e}_r \,. \label{eq:los_radial}
 \end{equation}
The power spectrum $\mathcal{P}_l^m$ is defined from the spherical harmonic coefficients of $\psi$ with
\begin{equation}
  \mathcal{P}_l^m = \mathbb{E}\left[ |\psi_l^m|^2 \right], \hspace*{0.5cm} 
  \text{where} \hspace*{0.3cm}  \psi_l^m = \int \psi(\hat{\bx}) \overline{Y_l^m}(\hat{\bx}) \mathrm{d}\bx = -\ii \omega a_l^m\, ,
\end{equation}
and coefficients $a_l^m$ are the modal solution from \cref{coeffVHS_f_xi}.
The power spectrum can then be computed as
\begin{align}
 \mathcal{P}_l^m = \omega^2 \Rsun^2 \int \int \overline{G_l^{PP}(r,s)} G_l^{PP}(r,s') \mathbb{E}[\overline{f_l^m(s)} f_l^m(s')]  s^2 {s'}^2 ds ds' \nonumber \\
 + \omega^2 \Rsun^2 \int \int \overline{G_l^{PB}(r,s)} G_l^{PB}(r,s') \mathbb{E}[\overline{g_l^m(s)} g_l^m(s')]  s^2 {s'}^2 ds ds'.
\end{align}
It remains to define the expected values of the 
source of excitation of the waves, 
$\mathbb{E}[\overline{f_l^m(s)} f_l^m(s')]$ and $\mathbb{E}[\overline{g_l^m(s)} g_l^m(s')]$. 
We suppose that the sources are uncorrelated, purely radial and coming from a single 
depth $r_s$ (see, e.g., \cite{boning2016sensitivity}), so that
\begin{equation}
 \mathbb{E}[\overline{f_l^m(s)} f_l^m(s')] = \frac{1}{s^2} \delta(s-s') \delta(s-r_s), \hspace*{0.5cm} \mathbb{E}[\overline{g_l^m(s)} g_l^m(s')] = 0.
\end{equation}
In this case, the power spectrum is directly linked 
to the Green's kernel $\GPP$: 
\begin{equation}\label{synPS}
 \mathcal{P}_l^m(r_{\rm obs}) = \omega^2 \Rsun^2 \vert \GPP(r_{\rm obs}, r_s) \vert^2.
\end{equation}

\begin{remark}
The kernel $\GPB$ needs to be used if sources in the horizontal 
direction are kept, and the kernels $\GBB$ and $\GBP$ appear if 
one uses a general expression for the line-of-sight instead of the purely radial case \cref{eq:los_radial}.
\end{remark}

\subsection{Surface gravity and pressure modes in the solar power spectrum}
\label{subsection:solar-greens-kernels}
\graphicspath{{figures/kernels/}}

For each mode and frequency $(\ell,\,\omega)$,  
following \cref{algorithm:assemble}, we perform two simulations
from which all the directional Green's kernels can be obtained, 
for any position of source $s$ and receiver $r$. 
From these two simulations, we can then assemble any of the directional 
Green's kernel for any position of source and receiver $s$ and $r$. We use the solar background model \texttt{S-AtmoI} with a constant attenuation of 2\si{\micro\Hz} and solve the original equation as it leads to better numerical stability at relatively low attenuation. 
We compute the kernels for source and receiver at 1. In the framework of the previous section, the source position corresponds to the depth $r_s$ which is usually assumed a few hundred kilometers below the surface while the receiver corresponds to the observation height located slighly above the surface. In \cref{figure:spectrum_modelS_r1s1_vector}, we show the different kernels for modes $\ell$ between 0 and 500, and
for frequencies $\omega/(2\pi)$ 
between \num{1}\si{\micro\Hz} and 10\si{\milli\Hz} with steps 
\num{1}\si{\micro\Hz}. This amounts to a total of \num{501e4} 
values on each kernel. The different kernels show ridges of stronger power corresponding to the surface gravity mode (ridge with the lowest frequencies) and pressure modes. The frequencies of these modes have been measured in \cite{Larson2015} and agree well with the location of maximum power in our directional kernel, see \cref{figure:spectrum_modelS_r1s1_vector_with_obs}.

\setlength{\modelwidth} {4.70cm}
\setlength{\modelheight}{4.25cm}
\begin{figure}[ht!]\centering
  \renewcommand{\modelfileA}{maps_original-eq/Vector_modelS_s1r1_att02uHz_mode0to500_freq1to10000mHz_GPPabs_Znonlocal-rmax1.0010_logscale7to13}
  \renewcommand{\modelfileB}{maps_original-eq/Vector_modelS_s1r1_att02uHz_mode0to500_freq1to10000mHz_GBPabs_Znonlocal-rmax1.0010_logscale5to13}
  \renewcommand{\modelfileC}{maps_original-eq/Vector_modelS_s1r1_att02uHz_mode0to500_freq1to10000mHz_GBBabs_Znonlocal-rmax1.0010_logscale1to12}
  \pgfkeys{/pgf/fpu=true}
  \pgfmathsetmacro{\cminA}{1e7} \pgfmathsetmacro{\cmaxA}{1e13}
  \pgfmathsetmacro{\cminB}{1e5} \pgfmathsetmacro{\cmaxB}{1e13}
  \pgfmathsetmacro{\cminC}{1e1} \pgfmathsetmacro{\cmaxC}{1e12}
  \pgfkeys{/pgf/fpu=false}
  \renewcommand{\cbtitleA}{$\vert\GPP(1,1)\vert$}
  \renewcommand{\cbtitleB}{$\vert\GBP(1,1)\vert$}
  \renewcommand{\cbtitleC}{$\vert\GBB(1,1)\vert$}

\subfloat[][Directional kernels on a logarithmic scale.]
      {\makebox[.95\linewidth][c]{
      \includegraphics[scale=1]{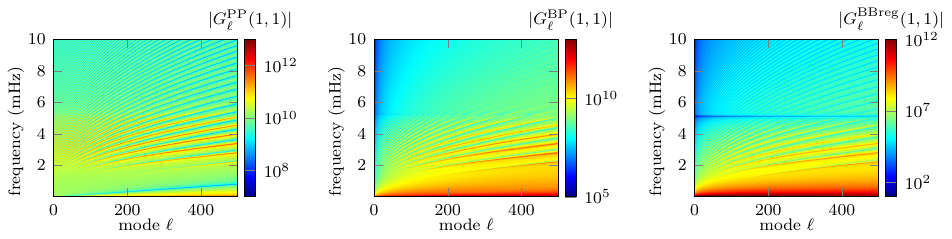}}}

\vspace*{-5.60em}
\begin{tikzpicture}
  \node[] at (0,0) (ghost) {};
  \draw[gray!80!white,densely dotted,opacity=1.0,line width=2pt] 
       (-14.2,0) -- (-13.3,0) -- (-13.3,0.6) -- (-14.2,0.6) -- cycle; 
  \draw[gray!80!white,densely dotted,line width=1.50pt,->] 
       (-13.5,0) -- (-13.5,-1.70); 
\end{tikzpicture}

\vspace*{-1.70em}

\subfloat[][$\GPP$ for selected interval ($\omega$,$\ell$)
            using linear color scale (left) and with the observed mode frequencies
            on top (white dots, right).]
           {\includegraphics[scale=1]{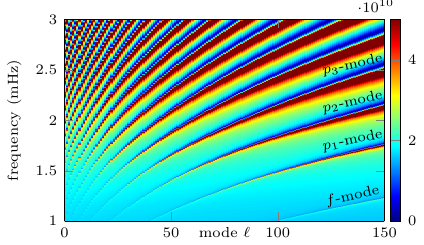}
            \hspace*{2em}
            \includegraphics[scale=1]{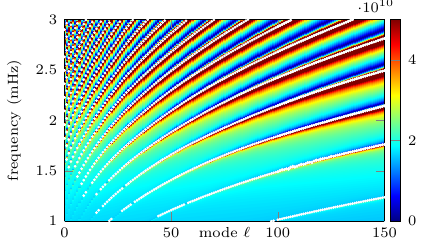}
            \label{figure:spectrum_modelS_r1s1_vector_with_obs}}

\caption{Directional kernels computed from \cref{algorithm:assemble} solving 
         the original equation with boundary condition 
         $\ZrbcNL$ set in $\rmax=\num{1.001}$; 
         comparison with the solar observed mode frequencies \cite{Larson2015} (bottom). }
\label{figure:spectrum_modelS_r1s1_vector}
\end{figure}

\begin{figure}[ht!]\centering
 \includegraphics[scale=1]{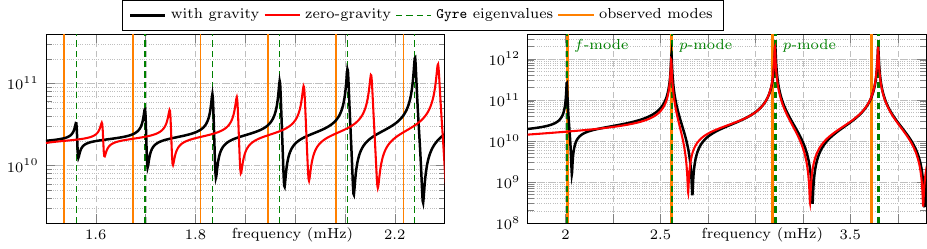}
 \vspace*{-1.50em}

 \subfloat[][mode $\ell=2$.]{\makebox[.50\linewidth][c]{}} \hfill
 \subfloat[][mode $\ell=\num{400}$.]{\makebox[.40\linewidth][c]{}}
\caption{Absolute value of the kernel $\GPP$
         using background solar model \texttt{S-AtmoI}
         as a function of frequency for modes $\ell = 2$ and \num{400}.
         The modes/eigenfrequencies computed with the software
         \texttt{gyre} \cite{Townsend2013} are indicated with the vertical
         dashed lines, while the observed mode frequencies \cite{Larson2015} are
         represented with vertical solid lines.
         The zero-gravity computations correspond to \cref{eq:zero-gravity_approximation}.}
\label{figure:spectrum_modelS_r1s1_vector-scalar-modes}
\end{figure}

For a more quantitative comparison we show cuts through the 
power spectrum for $\ell=2$ and $\ell=400$ in \cref{figure:spectrum_modelS_r1s1_vector-scalar-modes} 
using the full problem and the zero-gravity approximation \cref{eq:zero-gravity_approximation}. 
We overplot the observed frequencies measured in \cite{Larson2015} (for $\ell=2$) and \cite{Korzennik2013} (for $\ell=400$), as well as
the eigenfrequencies computed numerically with the software \texttt{gyre} \cite{Townsend2013}. 
In \texttt{gyre}, we also use the standard solar model-\texttt{S} with an isothermal boundary 
condition and solve the equations of stellar oscillations under the Cowling approximation as 
in our framework. 
We found a good agreement between the frequency of maximum power of our kernel 
and the eigenfrequencies from \texttt{gyre}. It corresponds also to the measured 
frequencies with a small deviation compared to the numerical values. 
For high values of $\ell$, this is due to the surface effect \cite{Rosenthal1999} 
(simplified treatment of the convection in the surface layers) while for small values 
of $\ell$, it is due to the Cowling approximation which becomes insufficient. 
\subsection{Internal gravity modes in the solar spectrum}
\label{subsection:g-modes}

The propagative region that appears in the interior when including
the gravity (cf.~\cref{Inpropregion::rmk,fig:potential:main01}) 
corresponds to the region of internal gravity modes ($g$-modes). 
These modes exist for frequencies below the buoyancy frequency $N$, 
that is, for frequencies below $0.5$\si{\milli\Hz}.
However, to be visible at the surface, these modes need to travel 
from the radiative interior (below $r=0.7$) to the 
surface and are thus strongly damped in the convective zone where 
the potential is positive. 
Contrary to pressure modes, a clear detection of internal gravity modes 
in the observations has yet to be achieved due to their very small expected 
amplitude at the surface \cite{Belkacem2022}.
 
To highlight them in the simulations, we compute the power spectrum for a 
source located at $s=0.7$ and keeping the observation height at the surface. 
\cref{figure:spectrum_modelS_r1s07_vector_modes} shows the resulting power 
spectrum for $\ell=2$. As expected the vectorial equation that includes 
gravity is necessary to see the gravity modes and the zero-gravity approximation 
displays only the (shifted) pressure modes. To validate our code, we overplot the 
eigenfrequencies from \texttt{gyre} which are in excellent agreement with 
the locations of maximum power in our power spectrum. 
Finally, note that while the pressure modes are present when neglecting gravity, 
their frequencies are shifted, underlying the importance of the gravity terms 
for low-degree modes.


\setlength{\modelwidth} {4.75cm}
\setlength{\modelheight}{3.80cm}
%

\begin{figure}[ht!]\centering

\includegraphics[scale=1]{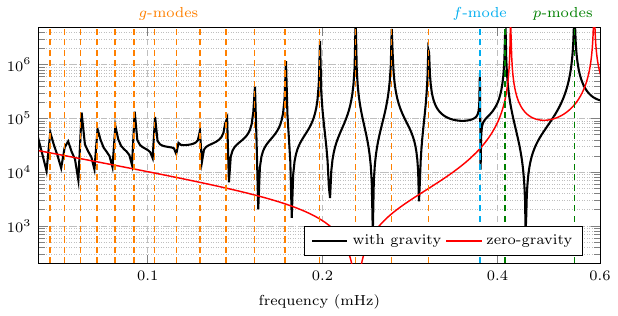}
        
\caption{Absolute value of the kernel $\GPP(0.7,1)$
         computed with small attenuation
         $\Gamma/(2\pi)=0.1$\si{\micro\Hz} to
         obtain disjoint ridges.
         The eigenfrequencies computed with the software \texttt{gyre} \cite{Townsend2013} are indicated
         with  vertical dashed lines.}
\label{figure:spectrum_modelS_r1s07_vector_modes}
\end{figure}


\section{Conclusion}
\label{section:conclusion}

For the Galbrun's equation under spherical symmetry, we
develop theoretical and numerical tools 
to compute efficiently and accurately 
the coefficients of the Green's tensor in VSH basis.
With \cref{coeff_orig::prop,compoG_reg:prop}, we characterize
explicitly the structure of the directional 
kernels, and propose \cref{algorithm:assemble} to 
compute their values for any height of sources and receivers 
from two simulations. 
Additionally, with the singularity of the directional kernels
prescribed analytically, \cref{algorithm:assemble} avoids 
dealing with Dirac-type sources, and provides a numerically 
stable and accurate way for their computation.

Regarding numerical resolution, we observe that 
 the original and Schr\"odinger modal ODE 
behave in a complementary manner.
The original form is stable for low attenuation but unable 
to handle large computational domains, while
the Schr\"odinger form is stable for arbitrarily
large intervals but unstable at very low attenuation. 
This means that for a resolution  up to low 
atmosphere ($\rmax \sim \num{1.001}$), the original form 
should be used to provide flexibility with low or
vanishing attenuation in parts of the interior region.
On the other hand, 
the Schr\"odinger form is important to obtain ABCs and 
is recommended if the background models extend high in the atmosphere.

To approximate outgoing-at-infinity solutions, we have constructed nonlocal and local ABCs which
contain a gravity term, 
and we have numerically compared their efficiency.
In the presence of attenuation, our nonlocal condition $\ZrbcNL$ 
can be employed at scaled height $r=1.001$.
For numerical resolution in 2.5D and 3D, one can employ 
$\ZrbcSAIGzero$  or $\ZrbcHFGa$ as a low-order condition, and 
$\ZrbcSAIGb$ or $\ZrbcHFGb$ which contain tangential derivatives.
Finally,  \cref{algorithm:assemble} is used
to compute solar power spectra with background 
model \texttt{S}-\Atmo, which display
ridges in agreement with computed eigenvalues and observed modes, 
and make appear gravity modes missing with the scalar 
approximations.

\section*{Acknowledgments}

  This work was partially supported by the INRIA associated-team
  \textsc{Ants} (Advanced Numerical meThods for helioSeismology)
  and the ANR-DFG project \textsc{Butterfly}, grant number ANR-23-CE46-0009-01. 
  DF and LG acknowledge funding by the Deutsche
  Forschungsgemeinschaft (DFG, German Research
  Foundation) -- Project-ID 432680300 -- SFB 1456 (project C04).
  This work was partially supported by the EXAMA (Methods and Algorithms at Exascale) 
  project under grant ANR-22-EXNU-0002.
  This project was provided with computer and storage resources by GENCI at CINES 
  thanks to the grant gda2306 on the supercomputer Adastra's GENOA partition. 
  FF acknowledges funding by the European Union 
  with ERC Project \textsc{Incorwave} -- grant 101116288. 
  Views and opinions expressed are however those of the authors 
  only and do not necessarily reflect those 
  of the European Union or the European Research Council 
  Executive Agency (ERCEA). Neither the European Union nor the
  granting authority can be held responsible for them.
  

\bibliographystyle{siamplain}
\bibliography{sections/bibliography}

\end{document}